 \DeclareMathOperator{\blockdiag}{blockdiag}
 \DeclareMathOperator{\diag}{diag}
  \DeclareMathOperator{\Expect}{\mathbb{E}}
 \newcommand{\norm}[1]{\left\lVert#1\right\rVert}
 \DeclareRobustCommand{\rvdots}{%
 	\vbox{
 		\baselineskip4\p@\lineskiplimit\z@
 		\kern-\p@
 		\hbox{.}\hbox{.}\hbox{.}
 	}}
\theoremstyle{plain}
\newtheorem{theorem}{Theorem}[section]
\newtheorem{proposition}{\textbf{Proposition}}[section]
\newtheorem{lemma}{\textbf{Lemma}}[section]
\newtheorem{remark}{\textbf{Remark}}[section]
\theoremstyle{definition}
\newtheorem{definition}{\textbf{Definition}}[section]
\newtheorem{assumption}{\textbf{Assumption}}[section]
\begin{document}

\title{Event-Triggered  Distributed  Estimation 
	With Decaying Communication Rate}

\author{Xingkang He, Yu Xing, Junfeng Wu, Karl H. Johansson
	\thanks{X. He, Y. Xing and K. H. Johansson are with the Division of Decision and Control Systems, School of Electrical Engineering and Computer Science. KTH Royal Institute of Technology, SE-100 44 Stockholm, Sweden (xingkang@kth.se, yuxing2@kth.se, kallej@kth.se).}
	\thanks{J. Wu is with the School of Data Science, Chinese University of Hong Kong (Shenzhen), Shenzhen, Guangdong Province, P. R. China; College of Control Science and Engineering, Zhejiang University, Hangzhou, Zhejiang Province, P. R. China (jfwu@zju.edu.cn).}}

\maketitle

\begin{abstract}
\boldmath
We study  distributed estimation of a high-dimensional static parameter vector through a group of sensors whose communication network is modeled by a fixed directed  graph. Different from   existing time-triggered communication schemes,  an event-triggered asynchronous   scheme is investigated in order to reduce communication   while preserving estimation convergence. 
A  distributed estimation algorithm with a single step size is first proposed based on 
an event-triggered communication  scheme  with a time-dependent decaying threshold. 
With the event-triggered scheme, each sensor  sends   its estimate to neighbor sensors only when the  difference between the current estimate and the last sent-out estimate  is larger than the triggering threshold. Different  sensors can have different  step sizes and triggering thresholds, enabling the parameter estimation process to be conducted in a fully distributed way.  
We prove that  the proposed algorithm has mean-square   and almost-sure convergence respectively, {under an integrated condition of sensor network topology and sensor measurement matrices. The condition is satisfied if the topology is a balanced digraph containing a spanning tree and the system is collectively observable. }  
The collective observability is the possibly mildest condition, since it is a   spatially and temporally  
collective condition of all sensors and allows  sensor measurement matrices to be time-varying, stochastic, and non-stationary.  
Moreover, we provide   estimates for the   convergence rates, which are related to the  step size as well as the triggering threshold.  Furthermore, as an essential metric of  sensor communication intensity in the event-triggered distributed algorithms, 
the communication rate is proved to decay to zero with a certain speed almost surely as time goes to infinity.  In addition, we show that it is feasible to tune the threshold and the step size such that   requirements of algorithm convergence and  communication rate decay are satisfied simultaneously. {  We also show that given the step size, adjusting the decay speed of the triggering threshold can lead to a tradeoff between the convergence rate of the estimation error and the decay speed of the communication rate.
	Specifically, increasing the decay speed of the threshold 
	would make the communication rate decay faster, but reduce the  convergence rate of the estimation error.	
} 
Numerical simulations are provided to illustrate the developed results.
\end{abstract}

\begin{IEEEkeywords}
Distributed  estimation, sensor network, event-triggered communications,   communication rate
\end{IEEEkeywords}

\section{Introduction}\label{sec:intro}
Networked  systems  monitoring complex environments generate a tremendous amount of data, which can be used in the estimation of system states or unknown parameters. 
Parameter estimation is  one of the most important tools  in control theory, signal processing, and machine learning  with  extensive applications in sensor networks,  weather prediction,  cyber-physical systems, environmental monitoring, transportation, etc.  
The development of computational and energy efficient algorithms, which are able to handle imperfect data from sensor networks, is drawing more and more attention.

{\subsection{Motivation and Challenges}
	To estimate a high-dimensional parameter vector (such as temperature over a large environment) is usually infeasible by a local algorithm for each sensor only with local measurements.
	Thus,   it is necessary to design a collaborative estimation algorithm for sensors to obtain accurate estimates of the parameter.
	Centralized and  distributed architectures are the two dominating approaches to sensor network estimation.  
	In the  centralized architecture,   a data center   runs  estimation   algorithms based on all sensor data in order to estimate a state or system parameter.
	In this way,   optimality  in certain sense can be ensured, such as the Kalman filter achieving the minimum variance unbiased estimate for linear dynamical systems under Gaussian noise.   
	However, with  large-scale deployment of sensors, the centralized architecture is no longer efficient.  Therefore, we need to develop effective and efficient distributed estimation   algorithms  based on local sensor data and local sensor communication.   }

{In   distributed parameter estimation,   messages shared between sensors play   important roles. 
	Since  sensor measurements only contain partial parameter information, it is not sufficient to design a global parameter estimation algorithm only by communicating sensor measurements. 
	Quite a few  distributed estimators are proposed by requiring that sensors send their estimates to neighbor sensors at \emph{every} time instant of measurement sampling. }
Diffusion-based distributed estimators are proposed in \cite{Cattivelli2008Diffusion} and 
\cite{cattivelli2010diffusionLMS} by  solving least-squares and least-mean-squares problems over networks.
In \cite{rad2010distributed}, a distributed estimator based on sensor beliefs  is proposed over a strongly connected communication network. In \cite{kar2011convergence},   a gossip distributed estimator is proposed to handle random  failures of network links. The results are extended in \cite{kar2012distributed} to nonlinear systems under imperfect communication channels, and  extended to more general system models and communication networks in \cite{kar2013distributed}, where an adaptive learning method is proposed in order to achieve   asymptotic efficiency.   
A robust distributed parameter estimator is given in \cite{zhang2012distributed} for  systems with Markovian switching networks and uncertain measurement models. 
{However, since  the above references require that each sensor communicates with its neighbors persistently,  there will be severe energy consumption and  serious channel congestion if  the measurement sampling rate is   high or the parameter is   high-dimensional. 
	Under constrained communication resources, although the above algorithms can  still work by  requiring that sensors communicate periodically, their estimation performance could be much degraded.

	Therefore, it is vital to develop a performance-guaranteed distributed estimation algorithm based on an effective and efficient sensor communication mechanism. Nevertheless, there are two  challenges. First, for  potentially time-varying, stochastic, and non-stationary  sensor measurement models  with weak observability (e.g., collective observability), how to guarantee the  performance of the distributed algorithm under limited communications is challenging. Second, under the influence of noisy measurements,   how to quantify  the communication frequency of sensors is difficult.  
	
}



\subsection{Related Work}
To mitigate the issue of sensor communications, there are some   approaches, including data quantization, compressed sensing, adaptive  sampling, and event-triggered communications. 
In the following discussion, we focus our attention on   adaptive  sampling and event-triggered communications, since they can make better use of posterior information compared to the former two approaches.

In parameter estimation under  constraint on the total sensing effort, adaptive sampling   helps  improve    performance by strategically allocating sensing effort in   future data collection based on the information extracted from data collected previously. 
In \cite{haupt2011distilled}, a sequential adaptive sampling-and-refinement procedure called distilled sensing is proposed  to detect and estimate  parameters. 
The reference \cite{castro2015adaptive} investigates how to estimate the support set of a  sparse parameter  by making full use of structural information. 
For a class of physical-constrained sensing models, the limitations and advantages of adaptive sampling are analyzed in \cite{davenport2016constrained}, and it is shown that a constrained adaptive sampling method can substantially improve estimation performance. Moreover,    
the adaptive sampling problem for a class of continuous-time Markov state processes is studied in \cite{rabi2012adaptive}. The above and related literature require either  structural information or prior distribution of parameters, which  is difficult to  satisfy in monitoring complex environments.  
In addition, many methods in the literature  are introduced in the centralized architecture. It is  difficult to extend these methods to the distributed architecture with large-scale sensor networks and high-dimensional system parameters. 

In contrast, an event-triggered scheme is suitable to the distributed architecture under constrained communication resource, since it can efficiently determine when   a sensor should share data  to other sensors    without the prior  knowledge of parameter structure. In such a way,     event-triggered sensor communications are usually asynchronous and aperiodic, thus different from   traditional time-triggered communications.
A number of centralized estimators with  event-triggered measurement schedulers are proposed in the literature, such as \cite{sinopoli2004kalman,wu2012event} for   state estimation of  dynamical systems,  and  \cite{you2013asymptotically,han2015optimal,diao2018event} for   static parameter estimation. 
Regarding the event-triggered parameter estimation,  the authors in \cite{you2013asymptotically} propose a measurement scheduler such that the asymptotic estimation performance is optimized.  A stochastic measurement scheduler is studied in \cite{han2015optimal} to compensate for the loss of the Gaussianity of the system, which ensures the maximum-likelihood
parameter estimator.  An event-triggered scheduler for finite impulse response systems with binary measurements is proposed in \cite{diao2018event} where communication rate is analyzed. 
The above centralized event-triggered measurement schedulers are usually not suitable to the distributed architecture, because  scheduling local measurements between neighboring sensors is  insufficient to design a global parameter estimator when each sensor can only observe partial elements of the parameter vector.
In the distributed architecture for event-triggered  estimation,
there are several methods  for dynamical systems. 
{For example, an LMI approach is studied in \cite{muehlebach2017distributed} for the cooperative estimation and control of a dynamical system under an event-triggered communication protocol.
	An event-triggered distributed Kalman filter is proposed in \cite{battistelli2018distributed} and proved to be stable in terms of means-square boundedness of the estimation error in each sensor.
	For linear dynamical systems under state equality constraints, an  event-triggered projected distributed Kalman filter is studied in \cite{he2019distributed} with guaranteed estimation error stability under collective system observability. For more related works of event-triggered distributed estimation for dynamical systems, we refer the readers to \cite{ge2019distributed} and the references therein.
}
Another related topic is event-triggered distributed optimization, which concerns the distributed design of optimization algorithms such that the global optimal solution is reached by each computational node. In this direction, most algorithms are proposed in noise-free or disturbance-free settings and analyzed to { show  the influence of  event-triggered mechanisms on  algorithm convergence \cite{cao2020decentralized,li2017event}}. An event-triggered property for continuous-time systems, which is called Zeno behavior (i.e., an infinite number of events occur in a finite amount of
time), is also of interest in some papers \cite{chen2016event}.  
However, there are few event-triggered distributed optimization or estimation algorithms handling imperfect data from  stochastic environments.
In addition,  the above studies analyze the influence of  event-triggered threshold to estimation performance,  but the tradeoff  between  communication frequency and  estimation performance is not well established   in the distributed architecture. To the best  of our knowledge, there is no result of distributed parameter estimation on communication rate, which is an  essential  metric of sensor network communication level.





\subsection{Contributions}
In this paper, we study the  event-triggered  distributed parameter estimation problem  for the sake of reducing sensor communication   while preserving convergence of estimators. 
The main contributions  are summarized in the following:

1. 
We propose a recursive event-triggered distributed parameter algorithm with a single step size  (Algorithm \ref{alg:A}) for a group of sensors with noisy measurements. 
The algorithm has several advantages: First, it is fully distributed in the sense that 
each sensor only relies on the local information and has its own step size and triggering threshold.
Second, without requiring exact noise distribution or statistics, 
the event-triggered scheme enables  sensors to communicate in an efficient way that each sensor  sends   its estimate to neighboring sensors only when the  difference between the current estimate and the last sent-out estimate  is larger than the triggering threshold.  Third, the algorithm is  scalable to large-scale sensor networks, since  its update is independent of the network size.  




2.  
We prove that  the proposed algorithm with a properly designed step size and triggering threshold  achieves mean-square  convergence (Theorem \ref{thm_mean_square_convergence}).
In addition, we provide the estimate for the   convergence rate and establish its connection to  network structure and  system observability.  
{The results are obtained under an integrated condition of  sensor network topology and sensor measurement matrices. The condition is satisfied if the topology is a balanced digraph containing a spanning tree and the system is collectively observable.}
The collective observability is the possibly mildest condition, since it  allows each sensor to observe partial elements of the parameter, 
sensor measurement matrices to be time-varying, stochastic, and non-stationary,   
as well as the noise processes to be martingale difference sequences.  Under some extra conditions on step size and triggering threshold, we prove the algorithm's output   is asymptotically convergent to the true parameter almost surely (a.s.) with an estimated convergence rate (Theorem~\ref{thm_conver_rate}).

%
%

3.  We  prove that the communication rate is decaying to zero almost surely  as time   goes to infinity  (Theorem \ref{thm_rate}) with a quantified speed.  
Moreover, we show that it is feasible to tune the threshold and the step size such that the requirements of the algorithm convergence and the communication rate decay are satisfied simultaneously. {We also show that given the step size, adjusting the decay speed of the triggering threshold can lead to a tradeoff between the convergence rate of the estimation error and the decay speed of the communication rate.
	Specifically, increasing the decay speed of the threshold 
	would make the communication rate decay faster, but reduce the  convergence rate of the estimation error.} To the best  of our knowledge, this is the first result on communication rate in the direction of event-triggered distributed algorithms for estimating  static parameters or solutions \cite{cao2020decentralized,chen2016event}.
Our result indicates that  while ensuring successful parameter estimation, the algorithm enables to   alleviate  channel burden and resource consumption in sensor communication compared to   existing time-triggered approaches \cite{rad2010distributed,kar2011convergence,kar2013distributed,cattivelli2010diffusionLMS,zhang2012distributed,wang2019distributed,lei2015distributed} which require a persistent positive communication rate.

The results of this paper are significantly different from the literature.   Regarding the step size, we remove the requirement that all sensors share the same setting \cite{kar2011convergence,kar2013distributed,kar2012distributed,wang2019distributed,zhang2012distributed,chen2020resilient,lei2015distributed,he2019distributed_CCC}, as well as the requirements of the  
concrete forms \cite{kar2011convergence,kar2013distributed,kar2012distributed,chen2020resilient,lei2015distributed,he2019distributed_CCC} and the monotonicity in \cite{wang2019distributed}.  
The stationarity condition of measurement matrices in \cite{zhang2012distributed} is removed.    
Since sensor measurements and measurement matrices are not shared between neighbors, our algorithm can be more suitable to the scenarios with privacy requirement than the diffusion estimators in  \cite{Cattivelli2008Diffusion,Cattivelli2010Distributed}.
Moreover, we generalize  sensor communication topologies from undirected graphs \cite{kar2011convergence,kar2013distributed,kar2012distributed,he2019distributed_CCC}  to a class of  directed graphs, such as balanced digraphs containing a
spanning tree.      


\subsection{Paper Organization}
{In Section 2, we formulate the considered problem of event-triggered distributed parameter estimation and introduce some graph preliminaries, the communication model, and the  mathematical model of sensors. In order to solve this problem, in Section 3 an event-triggered distributed estimation algorithm is   proposed and analyzed in terms of estimation error convergence  (mean-square and almost-sure convergence) and communication rate. For reading convenience, the proofs for the results in Section 3 are provided in Section 4. Numerical simulations in two examples are provided in  Section 5 to illustrate the developed results.  Section 6 concludes the paper. }


\textbf{Notations.}  Denote 
$\mathbb{R}^{n\times m}$  the set of real-valued matrices with $n$ rows and $m$ columns, with $\mathbb{R}^n=\mathbb{R}^{n\times 1}$ and $\mathbb{R}^1=\mathbb{R}$. Let $\mathbb{R}^+$ and $\mathbb{N}^+$ be the sets of positive real-valued scalars and integers, respectively, with $\mathbb{N}=\mathbb{N}^+\cup\{0\}$. Denote $\Omega$ and $\emptyset$  the universal set and empty set, respectively.
$I_{n}$ stands for the $n$-dimensional square identity matrix. 
$\textbf{1}_n$ stands for the $n$-dimensional vector with all elements being one. The superscript ``$\sf T$" represents the transpose. 
$\Expect\{x\}$ denotes the mathematical expectation of the random variable $x$, and  $\blockdiag\{\cdot\}$   represents the diagonalizations of block elements.  
$A\otimes B$ is the Kronecker product of $A$ and $B$.  $\norm{x}$ is the 2-norm of a vector $x$, and $\norm{A}$ is the induced norm, i.e., $\norm{A}=\sup_{\norm{x}=1}\norm{Ax}$, where $A\in \mathbb{R}^{n\times m}$, $x\in\mathbb{R}^m$. The mentioned scalars, vectors and matrices of this paper are all real-valued. Let $\sigma(\cdot)$ be the $\sigma$-algebra operator which generates the smallest $\sigma$-algebra. For a real-valued matrix or vector sequence $\{a(t)\}$ and a real number sequence $\{b(t)\}$, the operator $a(t)=O(b(t))$ means that  there is a constant $c\geq 0$, such that for each element sequence of $\{a_i(t)\}$, $\lim\limits_{t\rightarrow\infty}|a_i(t)/b(t)|\leq c$, and the operator $a(t)=o(b(t))$ means  $\lim\limits_{t\rightarrow\infty}|a_i(t)/b(t)|=0$.

\section{Preliminaries and Problem Formulation}\label{sec:problem}

{In this section, we formulate the problem of event-triggered distributed parameter estimation and introduce some graph preliminaries, the communication model, and the  mathematical model of sensors. }

\subsection{Graph Preliminaries}
In this paper, the communication between $N$ sensors of a network is modeled as a digraph $\mathcal{G=(V,E,A)}$, where
$\mathcal{V}=\{1,2,\dots,N\}$ is the node set,  $\mathcal{E}\subseteq \mathcal{V}\times \mathcal{V}$ is the edge set. A directed edge  $(i,j)\in \mathcal{E}$, if and only if there is a communication link from $j$ to $i$, where  $j$ is called the parent node and $i$ is called the child node. The matrix $\mathcal{A}=[a_{i,j}]_{i,j=1}^N$ is the 
weighted  adjacency matrix with $a_{i,j}\geq 0$, where  $a_{i,j}>0$, if $(i,j)\in \mathcal{E}$.     The parent neighbor set and child neighbor set of node $i$ are denoted by $\{j\in\mathcal{V}|(i,j)\in \mathcal{E}\}\triangleq\mathcal{N}_{i}$ and $\{j\in\mathcal{V}|(j,i)\in \mathcal{E}\}\triangleq\mathcal{N}_{i}^c$, respectively. 
Suppose that the graph has no self loop, which means $a_{i,i}=0$ for any $i\in\mathcal{V}$.
$\mathcal{G}$ is called a balanced digraph, if $\sum_{j=1}^{N}a_{i,j}=\sum_{i=1}^{N}a_{j,i}$ for all $i\in\mathcal{V}$. $\mathcal{G}$ is called an undirected graph, if $\mathcal{A}$ is symmetric. The Laplacian matrix of $\mathcal{G}$ is denoted by $\mathcal{L}=\mathcal{D}-\mathcal{A}$, where  $\mathcal{D}=\diag\{\sum_{j=1}^{N}a_{1,j},\dots,\sum_{j=1}^{N}a_{N,j}\}$.  
The mirror graph of the digraph  $\mathcal{G}$ is an undirected
graph, denoted by   $\bar{\mathcal{G}}=\mathcal{(V,E_{\bar{\mathcal{G}}},A_{\bar{\mathcal{G}}})}$ with $\mathcal{A}_{\bar{\mathcal{G}}}=[\bar a_{i,j}]_{i,j=1}^N$,
$\bar a_{i,j}=\bar a_{j,i}=(a_{i,j}+a_{j,i})/2$ (\cite{olfati2004consensus}).
$\mathcal{G}$ is called strongly connected if   for any pair nodes $(i_{1},i_{l})$, $l> 1,$ there exists a   path from $i_{l}$ to $i_{1}$ consisting of edges in the set $\{(i_{m},i_{m+1})\in \mathcal{E}|m=1,2,\dots,l-1\}$. We call  $\mathcal{G}$ is connected if it is strongly connected and undirected.
A directed tree is a digraph, where each node
except the root has exactly one parent node. A spanning tree of $\mathcal{G}$
is a directed tree whose node set is $\mathcal{V}$ and whose edge set is a
subset of $\mathcal{E}$.

\subsection{Problem Setup}
Consider an unknown high-dimensional parameter vector $\theta\in \mathbb{R}^M$  observed by $N>0$ sensors with the following model
\begin{equation}\label{system_all}
y_{i}(t)= H_{i}(t)\theta+v_{i}(t),i=1,2,\dots,N,
\end{equation}
where 
$y_{i}(t)\in \mathbb{R}^{m_{i}}$ is the measurement vector, $v_{i}(t)\in \mathbb{R}^{m_{i}}$   is the measurement noise, and $H_{i}(t)\in \mathbb{R}^{m_{i}\times M}$  represents the known   measurement matrix of sensor $i$, all at time $t$.

The communication rate is an essential metric of  communication intensity in the event-triggered distributed algorithms. Its mathematical definition over the digraph $\mathcal{G}$ is in the following.
\begin{definition}\label{def_com_rate}
	For the   digraph $\mathcal{G}$, in a given time interval  $[0,t]\cap \mathbb{N}$, 	 the \textbf{communication rate} $\lambda_c(t)$   is given by
	\begin{align}\label{eq_rate}
	\lambda_c(t)=\frac{\sum_{i\in\mathcal{V}}K_i(t)|\mathcal{N}_{i}^c| }{t \sum_{i\in\mathcal{V}}|\mathcal{N}_{i}^c|},
	\end{align}
	where $K_i(t)$ is the accumulated triggering (data-sending) times of node $i$ in $[0,t]\cap \mathbb{N}$, and $|\mathcal{N}_{i}^c|$ is the child neighbor number of node $i$.
\end{definition}

According to this definition, $	\lambda_c(t)\in[0,1]$. When $\lambda_c(t)\equiv 1$,   communication occurs all the time, which is the case for the time-triggered distributed estimation algorithms \cite{rad2010distributed,kar2011convergence,kar2013distributed,cattivelli2010diffusionLMS,zhang2012distributed,wang2019distributed}.  
When $\lambda_c(t)\equiv 0$, there is no sensor communication over the network.

The problem considered in this paper is how to design a distributed estimation algorithm  with 
an event-triggered communication scheme, and find conditions  such that  the output of the algorithm   is asymptotically convergent to the   parameter vector at a certain rate while the communication rate of the sensor network   is decaying to zero as time   goes to infinity.

\section{Main results}\label{sec:results}
{The formulated problem  is solved in  this section. First, an event-triggered distributed   algorithm is proposed to estimate the parameter.  Then,   the mean-square and almost-sure convergence  of the algorithm are studied, respectively. Additionally,   the decay speed of the communication rate   is analyzed.  The proofs of these results are provided in Section~\ref{sec:proofs}.}

\subsection{Event-triggered distributed estimation}
{To solve the problem posed in Section~\ref{sec:problem},} we propose the event-triggered distributed estimation algorithm in Algorithm \ref{alg:A} for each sensor $i\in\mathcal{V}$, where the initial estimate $x_{i}(0)=x_{i,0}\in\mathbb{R}^M$ is fixed. 	An  example of Algorithm \ref{alg:A} for a sensor network with 7 nodes   is provided in FIG. \ref{fig:diag}.

\begin{algorithm}[t]
	\caption{Event-Triggered Distributed Estimation}
	\label{alg:A}
	\begin{algorithmic}[1]
		\STATE {\textbf{Input:} $ x_{i}(0),\{y_i(t)\}_{t\geq 0},\{H_i(t)\}_{t\geq 0},\{\alpha_i(t)\}_{t\geq 0},\{f_i(t)\}_{t\geq 0}$}\\		\vskip 2pt
		\STATE {\textbf{Output:} $\{x_{i}(t)\}_{t\geq 0}$}\\		\vskip 2pt
		\FOR{$t=0,1,\dots$} 
		\STATE {//\textbf{ Data transmission}}\\		\vskip 2pt
		
		\IF{$t=0$} 
		\STATE {Send $x_i(0)$ to each child neighbor sensor $j\in\mathcal{N}_i^c$ and let $x_i(\tau_{0})=x_i(0)$ and $k_i(0)=1$.}
		\ELSIF{$ \|x_i(t) - x_i(\tau_{k_i(t-1)})\| > f_i(t)$}
		\STATE {Send $x_i(t)$ to each child neighbor sensor $j\in\mathcal{N}_i^c$ and let $k_i(t)=k_i(t-1)+1$ and $\tau_{k_i(t)}:=t$.}
		\ELSE
		\STATE{$k_i(t)=k_i(t-1)$. }
		\ENDIF
		\vskip 10pt
		\STATE {//\textbf{ Data receiving}}\\		\vskip 2pt
		\FOR{$j\in\mathcal{N}_i$} 
		\IF{Receive estimate $x_j(t)$ from parent neighbor sensor $j\in\mathcal{N}_i$}
		\STATE{Let   $k_j(t)=k_j(t-1)+1$ and $\tau_{k_j(t)}:=t.$}
		\ELSE
		\STATE{$k_j(t)=k_j(t-1).$}
		\ENDIF
		\ENDFOR
		\vskip 10pt
		\STATE {// \textbf{Estimate update}}	 
		\begin{align}\label{eq_estimator0}
		\begin{split}
		x_{i}(t+1)=&x_{i}(t)+\alpha_i(t)H_{i}^{\sf T}(t)( y_{i}(t)-H_{i}(t)x_{i}(t))\\
		&+\alpha_i(t)\sum_{j\in\mathcal{N}_{i}}a_{i,j}(x_{j}(\tau_{k_j(t)})-x_{i}(t)).
		\end{split}
		\end{align}
		\ENDFOR
	\end{algorithmic}
\end{algorithm}

\begin{remark}
	We have a few remarks on Algorithm \ref{alg:A}.
	\begin{enumerate}
		\item The notation $x_i(t)$ denotes the estimate of $\theta$ by sensor $i$ at time $t.$ The notation $\tau_{k_i(t)}$ denotes the time of $k_i(t)\text{-th}$ triggering instant  till time $t$. The amount of $k_i(t)$  shows how many events have been triggered  for   sensor $i$ till time $t$. 		
		The scalar $\alpha_i(t)$ is the step size of sensor $i$ at time $t$, and the scalar $f_i(t)\geq 0,$ $i\in\mathcal{V}$, is the  event-triggered threshold of sensor $i$ at time $t$. 
		Both $\{\alpha_i(t)\}_{t\geq 0}$ and $\{f_i(t)\}_{t\geq 0}$ are to be designed.
		\item The event-triggered scheme $ \|x_i(t) - x_i(\tau_{k_i(t-1)})\| > f_i(t)$ is to  determine whether the estimate   $x_i(t)$ is worth sharing with child nodes by comparing it with the last sent-out estimate $ x_i(\tau_{k_i(t-1)})$.  
		{		Since the estimate $x_i(t)$ provides the global parameter information, in weak observability conditions the scheme outweighs the existing event-triggered  schemes transmitting local measurements \cite{you2013asymptotically,han2015optimal,diao2018event}. Moreover, compared to the existing schemes \cite{sinopoli2004kalman,wu2012event,you2013asymptotically,han2015optimal},  the proposed scheme is built on a more general stochastic framework without requiring knowledge of accurate noise distribution or statistics.}
		
		\item {To ensure the convergence of Algorithm~\ref{alg:A}, the triggering threshold $f_i(t)$ should decay to zero fast enough, as required in Assumption~\ref{asmp_consistency2}. However, to avoid that sensors communicate frequently all the time,  $f_i(t)$ should not decay too fast, as required in Assumption~\ref{ass_triggering}. Remark~\ref{rem_step} will show  that it is feasible to satisfy the assumptions on  $f_i(t)$ simultaneously. Moreover, as shown in Remark~\ref{rem_tradeoff},   the decay speed of $f_i(t)$ can lead to a tradeoff between the decaying speed of the communication rate and the convergence rate of the estimation error.}

		\item  Regarding the time instants \{$\tau_{k_i(t)}$\}, it holds that  $	\tau_{k_i(t)}=\inf_{t>\tau_{k_i(t-1)}} t$, if $ \|x_i(t) - x_i(\tau_{k_i(t-1)})\| > f_i(t)$, otherwise $\tau_{k_i(t)}=\tau_{k_i(t-1)}$.
		Thus, $\{\tau_{k_i(t)}\}$ is  determined only by past and current information.

	\end{enumerate}
\end{remark}

%

\begin{remark}
	We have a few remarks on the advantages of Algorithm \ref{alg:A} by comparing to  existing algorithms.
	\begin{enumerate}
		\item   An advantage of Algorithm \ref{alg:A} comparing to   the diffusion  estimation algorithms in \cite{Cattivelli2008Diffusion,Cattivelli2010Distributed}  is that Algorithm \ref{alg:A} does not require the local measurements and measurement matrices to be shared between sensors. Thus, Algorithm \ref{alg:A} is more suitable to the scenarios with privacy requirement and limited communication bandwidth. 
		
		
		\item 
		Algorithm \ref{alg:A}  does not require any global   knowledge of the system, and thus is fully distributed.  For example,  it removes the requirement in \cite{kar2011convergence,kar2013distributed,kar2012distributed,wang2019distributed,zhang2012distributed,he2019distributed_CCC} that all sensors share the same step size. Moreover, 		
		Algorithm \ref{alg:A} is able to handle open sensor networks  where some sensors may break down or new sensors are plug-in, 	which however is   intractable for the algorithms \cite{kar2011convergence} requiring the total sensor number and  the measurement matrices of all sensors. 
		
		%
		%
		%
		\item 		
		Since Algorithm \ref{alg:A}  can tremendously reduce the redundant transmissions of  estimates,  it can require less communications than  the existing time-triggered distributed algorithms   \cite{wang2019distributed,zhang2012distributed,kar2011gossip,cattivelli2010diffusionLMS,kar2013distributed} for convergence.

	\end{enumerate}
\end{remark}

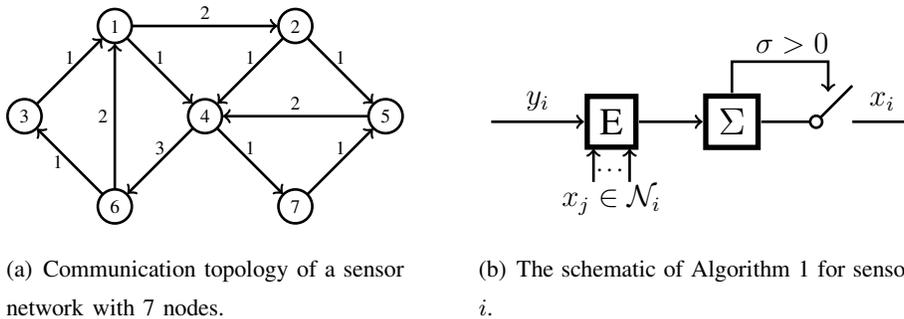
\begin{figure*}[t]
	\centering
	\subfigure[\label{fig:graph_ill} Communication topology of a sensor network with 7 nodes.]{\begin{tikzpicture}[scale=0.6, transform shape,line width=1pt]
		\node [draw,shape=circle,line width=1pt,minimum size=0.7cm] (1) at (0, 0) {1};
		\node [draw,shape=circle,line width=1pt,minimum size=0.7cm] (2) at (4, 0) {2};
		\node [draw,shape=circle,line width=1pt,minimum size=0.7cm] (3) at (-2, -2) {3};
		\node [draw,shape=circle,line width=1pt,minimum size=0.7cm] (4) at (2, -2) {4};
		\node [draw,shape=circle,line width=1pt,minimum size=0.7cm] (5) at (6, -2) {5};
		\node [draw,shape=circle,line width=1pt,minimum size=0.7cm] (6) at (0, -4) {6};
		\node [draw,shape=circle,line width=1pt,minimum size=0.7cm] (7) at (4, -4) {7};
		\path [->] (1) edge[line width=1pt] node[above] {2}  (2);
		\path [->] (1) edge[line width=1pt] node[above] {1}  (4);
		\path [->] (2) edge[line width=1pt] node[above] {1}  (4);
		\path [->] (2) edge[line width=1pt] node[above] {1}  (5);
		\path [->] (3) edge[line width=1pt] node[above] {1}  (1);
		\path [->] (4) edge[line width=1pt] node[above] {3}  (6);
		\path [->] (4) edge[line width=1pt] node[above] {1}  (7);
		\path [->] (5) edge[line width=1pt] node[above] {2}  (4);
		\path [->] (6) edge[line width=1pt] node[left] {2}  (1);
		\path [->] (6) edge[line width=1pt] node[left] {1}  (3);
		\path [->] (7) edge[line width=1pt] node[above] {1}  (5);
		\end{tikzpicture}}
	\qquad
	\subfigure[\label{fig:event_ill}The schematic of Algorithm \ref{alg:A} for sensor $i$. ]{	\begin{tikzpicture}[scale=0.8, transform shape,line width=1pt]
		\draw (0,0)  node[rectangle,draw,scale=1.5,line width=2pt] (est_i)   {E};
		\draw (2,0)  node[rectangle,draw,scale=1.5,line width=2pt] (est_j)   {$\Sigma$};
		\path [->] (est_i) edge   (est_j);	
		\draw (3.4,0) node[circle,draw,scale=0.5,line width=1pt] (emp1)   {};
		\path [-] (est_j) edge (emp1);
		\path [-] (emp1) edge (4,0.6);
		\path [->] (4,0) edge  node[left,above,scale=1.2] {$x_i$} (5,0);
		\path [->] (-2,0) edge node[left,above,scale=1.2] {$y_i$} (est_i);
		\path [->] (-0.3,-1) edge (-0.3,-0.45);
		\path [->] (0.3,-1)  edge (0.3,-0.45);
		\path   (-0.2,-0.8) -- node[auto=false]{\ldots} (0.2,-0.8);
		\draw (0,-1.3)  node[auto=false,scale=1.2] (est)   {$x_j\in\mathcal{N}_i$};
		\path [draw] (est_j) -- (2,1) -- (3.6,1) -- (3.6,0.5) [->];	
		\draw (3,1.3)  node[auto=false,scale=1.2] (sigma)   {$\sigma>0$};
		\end{tikzpicture}}
	\caption{ An example of Algorithm \ref{alg:A} for a sensor network with 7 nodes.
		The communication  of sensors  forms a weighted balanced digraph with a
		spanning tree, as shown in (a) where the numbers on edges stand for weights. The schematic of Algorithm \ref{alg:A} for sensor $i\in\{1,2,\dots,7\}$  is shown in (b), consisting of  a parameter estimator $E$	 
		and a communication scheduler $\Sigma$.
		The estimator has inputs of local measurement $y_i$ and parent neighbor estimates $x_j\in\mathcal{N}_i$. The scheduler $\Sigma$  sends estimate $x_i$   to child neighbor nodes when the event is triggered, i.e., $\sigma>0$, where $\sigma:=\|x_i(t) - x_i(\tau_{k_i(t-1)})\|-f_i(t)$.	 
	}
	\label{fig:diag}
\end{figure*}

To ease the notation, we write $\tau_{k_i(t)}$ as $\tau_k^i$ in the below text, i.e.,  $\tau_{k}^i:=\tau_{k_i(t)}$. The subscript $k$ of $\tau_{k}^i$ is kept to emphasize the number of triggering times, but the reader should keep in mind that $k$ in $\tau_k^i$ depends on time $t$ and sensor $i$. With this definition, $x_i(\tau_{k}^i):=x_i(\tau_{k_i(t)})$.
Then we rewrite (\ref{eq_estimator0}) in the following way
\begin{align}\label{eq_estimator02}
x_{i}(t+1)=&x_{i}(t)+\alpha_i(t)H_{i}^{\sf T}(t)( y_{i}(t)-H_{i}(t)x_{i}(t)) \\\nonumber
& +\alpha_i(t)\sum_{j\in\mathcal{N}_{i}}a_{i,j}(x_{j}(t)-x_{i}(t)) +\alpha_i(t)\sum_{j\in\mathcal{N}_{i}}a_{i,j}(x_{j}(\tau_{k}^j)-x_{j}(t))\nonumber.
\end{align}

%

\subsection{Convergence and convergence rate}
{In order to ensure that Algorithm \ref{alg:A} in the previous subsection provides accurate parameter estimates, in this subsection we find conditions such that  the output of  Algorithm \ref{alg:A}, i.e., $x_i(t)$, is asymptotically convergent to $\theta$ with an estimated convergence rate.}
To proceed, we introduce the following notations:
\begin{align}\label{eq_denotations}
\begin{split}
\bar{\alpha}(t) &= \blockdiag\left\{ \alpha_1(t) I_M, \dots, \alpha_N(t) I_M \right\},\\
\bar D_H(t)& = \blockdiag\left\{H_1^{\sf T}(t),\dots,H_N^{\sf T}(t)\right\},\\
\tau_k& = \left[\tau_k^1,\dots,\tau_k^N\right]^{\sf T},\qquad  f_{\max}(t)=\max_{i\in\mathcal{V}}f_i(t),\qquad 
f_{\min}(t)=\min_{i\in\mathcal{V}}f_i(t),\\
V(t) &= [v_1^{\sf T}(t),\dots,v_N^{\sf T}(t)]^{\sf T},\qquad\quad\quad
X(t)=\left[x_1^{\sf T}(t),\dots,x_N^{\sf T}(t)\right]^{\sf T},\\
X(\tau_k)&=\left[x_{1}^{\sf T}(\tau_{k}^1),\dots,x_{N}^{\sf T}(\tau_{k}^N)\right]^{\sf T},\qquad 
Y(t)=\left[y_1^{\sf T}(t),\dots,y_N^{\sf T}(t)\right]^{\sf T}.
\end{split}
\end{align}
Given the notations in (\ref{eq_denotations}), 
the compact form of (\ref{eq_estimator02}) is given in the following
\begin{equation}
\begin{aligned}\label{eq_estimation_compact}
X(t+1)=&X(t)-\bar{\alpha}(t)(\mathcal{L}\otimes I_M)X(t) + \bar{\alpha}(t)\bar D_H(t)(Y(t)-\bar D_H^{\sf T}(t)X(t))\\
&+\bar{\alpha}(t)(\mathcal{A}\otimes I_M)(X(\tau_{k})-X(t)).
\end{aligned}
\end{equation}

In the basic probability space $(\Omega,\mathcal{F},P)$, define the filtration  $\mathcal{F}(t) := \sigma(\bar{D}_H(s), V(s), 0 \le s \le t)$, $t\ge 0$, and $\mathcal{F}(-1):= \{\emptyset, \Omega\}$. The following assumption is needed in this paper.


\begin{assumption}\label{asmp_consistency2}~\\
	(i.a) There exists a sequence $\{\alpha(t)\}$ such that $\alpha_i(t)/\alpha(t) \to 1$ as $t \to \infty$ for all $1\le i \le N$. In addition, $\alpha(t) > 0$, $\alpha(t) \to 0$,  $\sum_{t=1}^{\infty} \alpha(t) = \infty$, and
	\begin{equation*}
	\frac{1}{\alpha(t+1)} - \frac{1}{\alpha(t)} \to \alpha_0 \ge 0.
	\end{equation*}
	(i.b) $\sum_{t=1}^{\infty} \alpha(t)^{2(1-\delta)} < \infty$ for some $\delta \in [0, 1/2)$.\\
	(ii.a) $\{\bar{D}_H(t)\}$ and $\{V(t)\}$ are independent sequences.\\
	(ii.b) $\{V(t)\}$ is a martingale difference sequence and there is a scalar $\rho>2$, such that  \[\sup_{t \in \mathbb{N}}\mathbb{E}\{\|V(t)\|^\rho|\mathcal{F}(t-1)\} := c_V< \infty, \text{ a.s}.\]
	(ii.c) $\sup_{t \in \mathbb{N}} \|\bar{D}_H(t)\|^2 \le D < \infty$ a.s. for some positive constant $D$, and there exist $h\in\mathbb{N}^+$, $\lambda \in \mathbb{R}^+$, such that for any $m\in\mathbb{N}$ and some $\delta \in [0,1/2)$,
	\begin{align}\label{eq_obser_condi}
	\lambda_{\min}\left[ \sum_{t=mh}^{(m+1)h-1}\mathbb{E}\left\{ \bar{\mathcal{L}} \otimes I_M + \bar D_H(t) \bar D_H^{\sf T}(t) |\mathcal{F}(mh-1)\right\}\right]\geq \lambda + h \alpha_0 \delta, ~\text{a.s.,}
	\end{align}
	where $\bar{\mathcal{L}}=(\mathcal{L} + \mathcal{L}^{\sf T})/2$.\\
	(iii.a) $f_{\max}(t)/\alpha^{\delta}(t) \to 0$, $t\to \infty$, for some $\delta \in [0, 1/2)$.\\
	(iii.b) $\sum_{t=1}^{\infty} \left(\big(\alpha(t)\big)^{1-\delta} f_{\max}(t) \right) < \infty$, for some $\delta \in [0, 1/2)$.
\end{assumption}

\begin{remark}
	Some remarks on Assumption \ref{asmp_consistency2} are given:
	\begin{itemize}
		\item In (i.a), $\alpha(t)$ provides a reference rate of step sizes. It is necessary that step sizes of sensors are in the same order. If there exist $i$ and $j$ such that $\alpha_i(t)/\alpha_j(t) \to 0$, then the information provided by sensor $i$ vanishes in terms of $\alpha_j(t)$. The requirements in (i.a) and (i.b) for choosing the step size $\alpha(t)$ are quite general, where $\alpha(t)=a/(t+1)$ for some $a>0$  in \cite{kar2012distributed} is a special case. 
		\item In (ii.c), \eqref{eq_obser_condi} indicates  network   connectivity and  system observability. 	Note that the eigenvalue on the left-hand side of \eqref{eq_obser_condi} is non-negative. It is positive under  proper conditions of network connectivity  and system observability, such as in   Proposition~\ref{prop_obser_condi}. In order to ensure a certain convergence rate of Algorithm \ref{alg:A}, 
		the eigenvalue has to be large enough. If the step size is slowly decreasing, such that $\alpha_0 = 0$, then the second term on the right-hand side of \eqref{eq_obser_condi} is zero. In other cases, since $\alpha_0$ depends only on the step size, one can design the latter so that $\alpha_0$ is small enough and \eqref{eq_obser_condi} holds.

		\item Condition (iii.a) requires decaying thresholds $\{f_i(t)\}$ to ensure that the estimates can still be shared over an infinite horizon. Otherwise, in a convergent algorithm (its output  may not converge to $\theta$), since the difference between two adjacent estimates is decreasing to zero, the events would no longer be triggered after some finite time.	
		\item  Condition (iii.b) is required for almost-sure convergence of the algorithm, meaning that the triggering thresholds have to decrease fast enough, to ensure certain convergence rate of the algorithm. 
	\end{itemize}
\end{remark}

{The following proposition shows that  the integrated condition \eqref{eq_obser_condi} in Assumption~\ref{asmp_consistency2} is satisfied under certain 
	network connectivity and system  observability.}
\begin{proposition}\label{prop_obser_condi}
	Suppose that $\mathcal{G}$  is a balanced digraph containing a
	spanning tree,  and there exist $h \in \mathbb{N}^+$ and $\tilde{\lambda} \in \mathbb{R}^+$ such that for any $m \in \mathbb{N}$,
	\begin{align}\label{con_obser}
	{\lambda_{\min}\left[\sum_{j=1}^N \sum_{t=mh}^{(m+1)h-1}\mathbb{E}\left\{ H_j^{\sf T}(t) H_j(t) \Bigg|\mathcal{F}(mh-1)\right\}\right] \ge \tilde{\lambda}, ~\text{a.s.,}}
	\end{align}
	then there exists $\beta \in \mathbb{R}^+$ such that for any $m \in \mathbb{N}$,
	\begin{align*}
	\lambda_{\min}\left[\sum_{t=mh}^{(m+1)h-1}\mathbb{E}\left\{ \bar{\mathcal{L}}\otimes I_M + \bar D_H(t) \bar D_H^{\sf T}(t) |\mathcal{F}(mh-1)\right\} \right] \ge \beta, ~\text{a.s.}
	\end{align*}
\end{proposition}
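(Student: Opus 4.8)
The plan is to reduce the asserted lower bound on $\lambda_{\min}$ to a quadratic-form estimate, and then to split any test vector into a component lying in the consensus subspace and a component orthogonal to it. The Laplacian term will control the orthogonal part, while the collective observability condition \eqref{con_obser} will control the consensus part; the cross terms are the delicate point and will be absorbed using the uniform bound on the observation matrices.

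First I would record the algebraic structure of $\bar{\mathcal{L}}$. Since $\mathcal{G}$ is balanced, $\mathcal{D}$ is diagonal and $\sum_j \bar a_{i,j}=\mathcal{D}_{ii}$, so $\bar{\mathcal{L}}=(\mathcal{L}+\mathcal{L}^{\sf T})/2$ is exactly the Laplacian of the mirror graph $\bar{\mathcal{G}}$. A balanced digraph containing a spanning tree is strongly connected (a weakly connected balanced digraph has a single strongly connected component, by a flow-balance argument on the condensation of its strongly connected components), hence $\bar{\mathcal{G}}$ is connected. Consequently $\bar{\mathcal{L}}\succeq 0$ has a simple zero eigenvalue with eigenvector $\mathbf{1}_N$, so $\bar{\mathcal{L}}\otimes I_M\succeq 0$ with kernel $\mathcal{C}=\{\mathbf{1}_N\otimes w: w\in\mathbb{R}^M\}$, and there is a deterministic algebraic-connectivity constant $\lambda_2:=\lambda_2(\bar{\mathcal{L}})>0$ such that $z^{\sf T}(\bar{\mathcal{L}}\otimes I_M)z\ge \lambda_2\norm{z_\perp}^2$, where $z_\perp$ denotes the projection of $z$ onto $\mathcal{C}^\perp$.

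Next, writing $G_i:=\sum_{t=mh}^{(m+1)h-1}\Expect\{H_i^{\sf T}(t)H_i(t)\,|\,\mathcal{F}(mh-1)\}$, the matrix inside $\lambda_{\min}[\cdot]$ equals $\Phi(m):=h(\bar{\mathcal{L}}\otimes I_M)+\blockdiag\{G_1,\dots,G_N\}$, and \eqref{con_obser} reads $\sum_{i=1}^N G_i\succeq\tilde\lambda I_M$. For a unit vector $z=[z_1^{\sf T},\dots,z_N^{\sf T}]^{\sf T}$ I would set $\bar z=\tfrac1N\sum_i z_i$ and $\eta_i=z_i-\bar z$, so that $z_\perp=[\eta_1^{\sf T},\dots,\eta_N^{\sf T}]^{\sf T}$ with $\sum_i\eta_i=0$ and $\norm{\bar z}^2=\tfrac1N\norm{z_\parallel}^2$. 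Expanding $\sum_i z_i^{\sf T}G_i z_i$ and applying Young's inequality $2|\bar z^{\sf T}G_i\eta_i|\le\epsilon\,\bar z^{\sf T}G_i\bar z+\epsilon^{-1}\eta_i^{\sf T}G_i\eta_i$ gives $\sum_i z_i^{\sf T}G_i z_i\ge(1-\epsilon)\tilde\lambda\norm{\bar z}^2-(\epsilon^{-1}-1)\,hD\norm{z_\perp}^2$, where I use $\norm{G_i}\le hD$ inherited from the standing bound $\sup_t\norm{\bar D_H(t)}^2\le D$. Combined with the Laplacian estimate this yields $z^{\sf T}\Phi(m)z\ge(1-\epsilon)\tilde\lambda\norm{\bar z}^2+h\big(\lambda_2-(\epsilon^{-1}-1)D\big)\norm{z_\perp}^2$.

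Finally I would choose $\epsilon\in\big(D/(D+\lambda_2),\,1\big)$, which makes both coefficients strictly positive, so that $z^{\sf T}\Phi(m)z\ge \min\{(1-\epsilon)\tilde\lambda/N,\;h(\lambda_2-(\epsilon^{-1}-1)D)\}=:\beta$ upon using $\norm{z_\parallel}^2+\norm{z_\perp}^2=1$. Because $\beta$ depends only on the deterministic quantities $\tilde\lambda,\lambda_2,h,D,N$, and because \eqref{con_obser} together with the $D$-bound hold on a single probability-one event uniformly in $m$, this delivers the claimed a.s.\ lower bound. The main obstacle is exactly the cross-term control in the observability quadratic form: a naive bound leaves a negative coefficient on $\norm{z_\perp}^2$, and one must exploit the uniform upper bound $D$ together with the tunable split $\epsilon$ to keep both coefficients positive. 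Indeed, a simple $2\times2$ example (a rank-one observation Gramian with a large orthogonal entry) shows that $\lambda_{\min}(\Phi(m))$ can be driven to zero if no upper bound on the observation Gramians is available, confirming that the boundedness of $\bar D_H(t)$ is essential rather than cosmetic.
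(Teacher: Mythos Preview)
Your argument is correct and takes a genuinely different route from the paper. The paper proceeds by a case distinction on the unit sphere: for $x=\mathbf{1}_N\otimes z$ it invokes \eqref{con_obser} to get $x^{\sf T}\Phi(m)x\ge\tilde\lambda/N$, for $x$ outside the consensus subspace it uses $x^{\sf T}(\bar{\mathcal{L}}\otimes I_M)x\ge\lambda_2\|x_\perp\|^2>0$, and then appeals to compactness of the unit sphere to conclude $\beta>0$. As written, that argument fixes $m$ and does not explain why the resulting minimum is uniform over $m$; your explicit Young--inequality decomposition, which produces $\beta=\min\{(1-\epsilon)\tilde\lambda/N,\;h(\lambda_2-(\epsilon^{-1}-1)D)\}$ in terms of the deterministic constants $\tilde\lambda,\lambda_2,h,D,N$, closes this gap cleanly. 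Your observation that the uniform bound $\sup_t\|\bar D_H(t)\|^2\le D$ is essential is also correct and worth recording: with $N=2$, $M=2$, rank-one Gramians $G_i$ whose kernels are tilted by $\pm\delta$ away from a common direction, one can satisfy $G_1+G_2\succeq\tilde\lambda I$ only by letting $\|G_i\|\sim\delta^{-2}$, and then $x^{\sf T}\Phi(m)x=O(\delta^2)\to 0$ for a unit $x$ with $\|x_\perp\|=O(\delta)$. (Note that for $M=1$ the minimum eigenvalue actually stays bounded away from zero, so your ``$2\times2$ example'' should be read as $N=2$ with $M\ge 2$.) Two small streamlinings: the detour through strong connectivity is unnecessary---the spanning-tree hypothesis already makes the underlying undirected graph, hence the mirror graph $\bar{\mathcal{G}}$, connected---and you are implicitly using Assumption~\ref{asmp_consistency2}~(ii.c) for the bound $\|G_i\|\le hD$, which is fine in context but worth stating explicitly since the proposition itself does not list it among its hypotheses.
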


\begin{proof}
	Since $\mathcal{G}$ is a balanced digraph, 
	according to \cite[Th. 7]{olfati2004consensus}, $\bar{\mathcal{L}}=(\mathcal{L}^T+\mathcal{L})/2$ is the Laplacian matrix of the  undirected mirror graph $\bar{\mathcal{G}},$ i.e., $\mathcal{L}_{\bar{\mathcal{G}}}=\bar{\mathcal{L}}$.
	Then  by \cite[Th. 2.8]{Mesbahi2010Graph}, $\bar{\mathcal{L}}$ has a unique eigenvalue zero, and $\lambda_{2}(\bar{\mathcal{L}})>0$. 
	Hence for a unit vector $x \in \mathbb{R}^{NM}$ such that $x^{\sf T} (\bar{\mathcal{L}} \otimes I_M) x = 0$, 
	it must have the form $\mathbf{1}_N \otimes z$, $z \in \mathbb{R}^M$. For another unit vector $x$, 
	which does not have this form, we know that $x^{\sf T} (\bar{\mathcal{L}} \otimes I_M) x \ge \lambda_{2}(\bar{\mathcal{L}}) \|x_{\perp}\|^2 > 0$, where $x_{\perp}$ is the difference of $x$ minus its projection on the subspace
	$\{\mathbf{1}_N \otimes z : z \in \mathbb{R}^M\}$.
	Now for unit vector $x = \mathbf{1}_N \otimes z$, $z \in \mathbb{R}^M$, 
	\begin{align}\nonumber
	&x^{\sf T} \left( \sum_{t=mh}^{(m+1)h-1} \mathbb{E}\{\bar D_H(t) \bar D_H^{\sf T}(t) |\mathcal{F}(mh-1)\}\right) x\\\nonumber
	&=
	\sum_{t=mh}^{(m+1)h-1}\mathbb{E}\left\{\sum_{j=1}^{N} z^{\sf T} H_{j}^{\sf T}(t) H_{j}(t) z|\mathcal{F}(mh-1)\right\}\\\nonumber
	&=
	z^{\sf T} \left(\sum_{t=mh}^{(m+1)h-1}\mathbb{E}\left\{\sum_{j=1}^{N}H_{j}^{\sf T}(t)H_{j}(t)|\mathcal{F}(mh-1)\right\}\right) z \ge \tilde{\lambda} \|z\|^2 = \tilde{\lambda}/N,
	\end{align}
	where the last inequality follows from \eqref{con_obser}. Denoting 
	\begin{align*}
	\beta := \underset{\|x\| = 1}{\inf_{x \in \mathbb{R}^{NM}}} x^{\sf T} \left(h(\bar{\mathcal{L}} \otimes I_M) +  \sum_{t=mh}^{(m+1)h-1} \mathbb{E}\{\bar D_H(t) \bar D_H^{\sf T}(t) |\mathcal{F}(mh-1)\}\right) x,
	\end{align*}
	since the function in the above definition is positive for every unit vector, and the unit sphere is compact, we know that $\beta > 0$.
\end{proof}
\begin{remark}
	The observability condition (or persistent excitation condition) in \eqref{con_obser} is the possibly mildest, since it is a spatially and temporally  
	collective condition of all sensors and allows  the measurement matrices to be time-varying, stochastic, and non-stationary.  Similar  conditions are given in  \cite{guo1994stability} and \cite{wang2019distributed}   under centralized and distributed settings, respectively. 		
	In the literature,    time-invariant measurement matrices and stationary measurement matrices  are studied in \cite{kar2011convergence,kar2013distributed}  and
	\cite{zhang2012distributed}, respectively.
\end{remark}


\begin{theorem}\label{thm_mean_square_convergence}
	(Mean-square convergence) Under Assumption \ref{asmp_consistency2} (i.a), (ii.a-c), and (iii.a), the estimation error, $e_0(t) = X(t) - \mathbf{1}_N \otimes \theta$, converges to zero in mean square  with a rate $o(\alpha^{2\delta}(t))$, i.e.,
	\begin{align}
	\lim_{t\to \infty} \frac{\mathbb{E}\{\|e_0(t)\|^2\}}{\alpha^{2\delta}(t)} = 0,
	\end{align}
	where $\delta \in [0,1/2)$ satisfies Assumption \ref{asmp_consistency2} (ii.c) and (iii.a).
\end{theorem}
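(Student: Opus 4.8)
The plan is to convert the compact recursion \eqref{eq_estimation_compact} into an error recursion and then control its second moment by a windowed stochastic-approximation argument, the window being dictated by the collective observability \eqref{eq_obser_condi}. Since $y_i(t)=H_i(t)\theta+v_i(t)$ gives $Y(t)=\bar D_H^{\sf T}(t)(\mathbf{1}_N\otimes\theta)+V(t)$ and $(\mathcal{L}\otimes I_M)(\mathbf{1}_N\otimes\theta)=0$, subtracting $\mathbf{1}_N\otimes\theta$ from \eqref{eq_estimation_compact} produces
\[
e_0(t+1)=A(t)\,e_0(t)+\xi(t)+\eta(t),
\]
with $A(t):=I-\bar{\alpha}(t)[(\mathcal{L}\otimes I_M)+\bar D_H(t)\bar D_H^{\sf T}(t)]$, the noise $\xi(t):=\bar{\alpha}(t)\bar D_H(t)V(t)$, and the triggering perturbation $\eta(t):=\bar{\alpha}(t)(\mathcal{A}\otimes I_M)(X(\tau_k)-X(t))$. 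The event-triggered rule forces $\|x_j(t)-x_j(\tau_k^j)\|\le f_j(t)$ for every $j$, so $\eta(t)$ is $\mathcal{F}(t-1)$-measurable with $\|\eta(t)\|=O(\alpha(t)f_{\max}(t))$, whereas $\xi(t)$ is conditionally zero-mean because $\{V(t)\}$ is a martingale difference sequence independent of $\{\bar D_H(t)\}$ (Assumption \ref{asmp_consistency2} (ii.a,b)).

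I would next take the conditional second moment. As $e_0(t)$ and $\eta(t)$ are $\mathcal{F}(t-1)$-measurable and $\mathbb{E}\{V(t)|\mathcal{F}(t-1)\}=0$, every cross term containing $\xi(t)$ vanishes, leaving
\[
\mathbb{E}\{\|e_0(t+1)\|^2|\mathcal{F}(t-1)\}=e_0^{\sf T}(t)\mathbb{E}\{A^{\sf T}A|\mathcal{F}(t-1)\}e_0(t)+2e_0^{\sf T}(t)\mathbb{E}\{A^{\sf T}|\mathcal{F}(t-1)\}\eta(t)+\|\eta(t)\|^2+\mathbb{E}\{\|\xi(t)\|^2|\mathcal{F}(t-1)\}.
\]
Assumptions (ii.b) (via $\mathbb{E}\{\|V(t)\|^2|\mathcal{F}(t-1)\}\le c_V^{2/\rho}$) and (ii.c) ($\|\bar D_H(t)\|^2\le D$) bound the last term by $O(\alpha^2(t))$. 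Writing $M(t):=\bar{\alpha}(t)[(\mathcal{L}\otimes I_M)+\bar D_H\bar D_H^{\sf T}]$ one has $\mathbb{E}\{A^{\sf T}A|\mathcal{F}(t-1)\}=I-\mathbb{E}\{M+M^{\sf T}|\mathcal{F}(t-1)\}+O(\alpha^2(t))$, and since $\bar{\alpha}(t)=\alpha(t)I+o(\alpha(t))$ by (i.a), $M+M^{\sf T}\approx 2\alpha(t)(\bar{\mathcal{L}}\otimes I_M+\bar D_H\bar D_H^{\sf T})$. \emph{Here is the main obstacle}: at a single $t$ the matrix $\mathbb{E}\{\bar{\mathcal{L}}\otimes I_M+\bar D_H\bar D_H^{\sf T}|\mathcal{F}(t-1)\}$ need not be positive definite, so no single-step contraction exists; positivity is only guaranteed collectively over a window of length $h$ through \eqref{eq_obser_condi}.

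To extract contraction I would iterate the single-step inequality over each window $[mh,(m+1)h)$, replacing $e_0(t)$ by $e_0(mh)+O(\alpha(mh))$ and $\alpha(t)$ by $\alpha(mh)$ inside the window, which is legitimate because $\alpha(t)\to0$ and $\alpha_i(t)/\alpha(t)\to1$. Summing the gains and invoking \eqref{eq_obser_condi} shows $\lambda_{\min}\big[\sum_{t=mh}^{(m+1)h-1}\mathbb{E}\{M+M^{\sf T}|\mathcal{F}(mh-1)\}\big]\ge 2\alpha(mh)(\lambda+h\alpha_0\delta)+o(\alpha(mh))$, whence
\[
\mathbb{E}\{\|e_0((m+1)h)\|^2\}\le\big(1-2\alpha(mh)(\lambda+h\alpha_0\delta)+o(\alpha(mh))\big)\mathbb{E}\{\|e_0(mh)\|^2\}+O(\alpha^2(mh))+O\big(\alpha(mh)f_{\max}(mh)\big)\sqrt{\mathbb{E}\{\|e_0(mh)\|^2\}}.
\]

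Finally, set $w(m):=\mathbb{E}\{\|e_0(mh)\|^2\}/\alpha^{2\delta}(mh)$. By (i.a), $\alpha^{2\delta}(mh)/\alpha^{2\delta}((m+1)h)=1+2\delta\alpha_0 h\alpha(mh)+o(\alpha(mh))$, so the extra term $h\alpha_0\delta$ in \eqref{eq_obser_condi} exactly cancels the growth of the rescaling and leaves a net contraction $1-2\lambda\alpha(mh)+o(\alpha(mh))$ with $\lambda>0$. Bounding the cross term via $\sqrt{w}\le\tfrac12(1+w)$ and using (iii.a), $f_{\max}(t)/\alpha^{\delta}(t)\to0$, absorbs its $w$-proportional part into the contraction and makes its residual forcing $o(\alpha(mh))$, while the $O(\alpha^2)$ term contributes forcing $O(\alpha^{2-2\delta}(mh))$. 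Dividing the forcing by $2\lambda\alpha(mh)$ gives quantities tending to zero, since $\alpha^{1-2\delta}(mh)\to0$ (as $\delta<1/2$) and $\alpha^{-\delta}(mh)f_{\max}(mh)\to0$. Because $\sum_m\alpha(mh)=\infty$, the standard recursion lemma ``$w(m+1)\le(1-a_m)w(m)+b_m$ with $\sum a_m=\infty$ and $b_m/a_m\to0$ implies $w(m)\to0$'' yields $w(m)\to0$; interpolating over the $h$ indices inside each window upgrades this to $\mathbb{E}\{\|e_0(t)\|^2\}/\alpha^{2\delta}(t)\to0$, which is the assertion. The genuinely delicate step is the windowed contraction, where one must handle the singular single-step matrices together with the within-window drift of $e_0(t)$ and $\alpha(t)$.
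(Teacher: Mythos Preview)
Your plan is correct and lands on the same mathematical core as the paper (windowed contraction from the collective observability \eqref{eq_obser_condi}, the $\alpha_0\delta$ term absorbing the rescaling, and the $\alpha^{1-2\delta}$ and $f_{\max}/\alpha^{\delta}$ forcings vanishing), but the organization differs in two ways worth noting. First, the paper rescales \emph{before} the recursion analysis: it sets $e(t)=e_0(t)/\alpha^{\delta}(t)$, expands $(\alpha(t)/\alpha(t+1))^{\delta}$ by Taylor, and shows that $e(t)$ itself obeys a linear recursion of the generic form $e(t+1)=(I+\alpha(t)(Q(t)+\Delta(t)))e(t)+\alpha(t)(\varepsilon'(t)+\varepsilon''(t))$ with $Q(t)=-(\mathcal{L}\otimes I_M+\bar D_H\bar D_H^{\sf T})+\alpha_0\delta I$; it then invokes a general mean-square convergence theorem for such recursions. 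You instead keep $e_0$ and divide by $\alpha^{2\delta}$ only at the very end. Second, and more substantively, the paper extracts the window contraction at the \emph{matrix} level: it proves (their key lemma) that the transition product satisfies $\mathbb{E}\{\Phi^{\sf T}(t,s)\Phi(t,s)|\mathcal{F}(s-1)\}\le c_2\exp(-c_1\sum_{k=s}^{t}\alpha(k))I$ by a binomial expansion of $\Phi^{\sf T}\Phi$ over one $h$-window, whose first-order term is exactly $\sum\mathbb{E}\{Q+Q^{\sf T}|\mathcal{F}(mh-1)\}$. This exponential matrix bound is then applied separately to the homogeneous piece $\Phi(t,0)e(0)$, the martingale-noise sum, and the trigger-perturbation sum. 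Your route iterates a \emph{scalar} second-moment inequality over each window and replaces $e_0(t)$ by $e_0(mh)+O(\alpha)$; this is legitimate, but making that replacement rigorous in the presence of the random $\xi$-increments requires precisely the same expansion-and-conditioning computation that the paper packages as its $\Phi^{\sf T}\Phi$ lemma. The paper's modular route buys cleaner bookkeeping for the forced terms and reusability (the same lemma drives the almost-sure result); your direct route is shorter to state and avoids introducing the auxiliary abstract recursion, at the cost of doing the delicate step by hand inside the main argument.
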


\begin{proof}
	See Section \ref{pf_thm_mean_square_convergence}.
\end{proof}
{
	\begin{remark}
		Under the conditions of Theorem \ref{thm_mean_square_convergence}, a larger $\lambda$ in \eqref{eq_obser_condi} leads to a larger $\delta$ meaning a faster convergence rate. From Proposition \ref{prop_obser_condi}, the value of  $\lambda$ is determined by the  network structure $\mathcal{L}$ and the level of system observability $\sum_{t=mh}^{(m+1)h-1} \mathbb{E}\{\bar D_H(t) \bar D_H^{\sf T}(t) |\mathcal{F}(mh-1)\}$. Thus, in order to obtain a faster convergence rate, the system designer can offline adjust the system and network structure, such as choosing sensors with larger signal-to-noise ratio, new sensor deployment, etc. {Moreover, if we increase the decay speed of triggering threshold $f_i(t)$, then according to (iii) of  Assumption~\ref{asmp_consistency2},  a larger $\delta$ will be available such that the convergence rate of the estimation error increases. 
		}
	\end{remark}
}

In order to analyze   the decay speed of the communication rate over the whole sensor network in Section \ref{sec_sub:commu}, we need to establish the almost-sure convergence of Algorithm \ref{alg:A}. 
As is known, there is a gap  between the almost-sure and
mean-square convergence of a random variable sequence unless
the uniform integrability and certain moment conditions are
satisfied  (\cite{poznyak2009advanced}). 
For example,   if $\sum_{t=0}^{\infty}\mathbb{E}\{\|e_0(t)\|^2\}<\infty$ is satisfied, almost-sure convergence will hold  according to Theorem 6.8 in \cite{poznyak2009advanced}. Although we provide the mean-square convergence rate in Theorem~\ref{thm_mean_square_convergence} with $o(\alpha^{2\delta}(t))$, the condition $\sum_{t=0}^{\infty}\mathbb{E}\{\|e_0(t)\|^2\}<\infty$ is not directly satisfied due to $\delta\in[0,1/2)$. 
In the following theorem, we show that if some slightly stronger conditions on the step size and the triggering threshold are satisfied, Algorithm \ref{alg:A} will have almost-sure convergence with a certain convergence rate.

\begin{theorem}\label{thm_conver_rate}
	(Almost-sure convergence)    Under Assumption \ref{asmp_consistency2}, the estimation error, $e_0(t) = X(t) - \mathbf{1}_N \otimes \theta$, converges to zero almost surely with a rate $o(\alpha^{\delta}(t))$, i.e.,
	\begin{align}
	\mathbb{P} \left\{\lim_{t\to \infty} \frac{e_0(t)}{\alpha^{\delta}(t)} = 0 \right\} = 1,
	\end{align}
	where $\delta \in [0,\frac12)$ satisfies Assumption \ref{asmp_consistency2} (i.b), (ii.c) and (iii.a-b).
\end{theorem}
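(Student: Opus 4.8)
The plan is to bootstrap from the mean-square result of Theorem~\ref{thm_mean_square_convergence} to almost-sure convergence with the sharper rate $o(\alpha^{\delta}(t))$, using the stronger step-size and threshold conditions (i.b) and (iii.b) that were not invoked in the mean-square theorem. The natural tool here is a stochastic Lyapunov / supermartingale convergence argument applied to the normalized error. First I would work with the compact error recursion obtained from \eqref{eq_estimation_compact}: writing $e_0(t) = X(t) - \mathbf{1}_N \otimes \theta$ and using $(\mathcal{L}\otimes I_M)(\mathbf{1}_N\otimes\theta)=0$, the error evolves as
\begin{align*}
	e_0(t+1) = \left(I - \bar{\alpha}(t)\big(\mathcal{L}\otimes I_M + \bar D_H(t)\bar D_H^{\sf T}(t)\big)\right) e_0(t) + \bar{\alpha}(t)\bar D_H(t) V(t) + \bar{\alpha}(t)(\mathcal{A}\otimes I_M)\big(X(\tau_k)-X(t)\big).
\end{align*}
The last term is the event-triggered perturbation; by the triggering rule each block is bounded in norm by $f_{\max}(t)$ up to network-weight constants, so this term is $O(\alpha(t) f_{\max}(t))$.

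Next I would introduce the rescaled error $w(t) := e_0(t)/\alpha^{\delta}(t)$ and derive its recursion. The key algebraic input is condition (i.a), specifically $\alpha(t+1)^{-1}-\alpha(t)^{-1}\to\alpha_0$, which controls the ratio $\alpha^{\delta}(t)/\alpha^{\delta}(t+1) = (1+\alpha_0\alpha(t)+o(\alpha(t)))^{\delta} = 1 + \delta\alpha_0\alpha(t) + o(\alpha(t))$. This is precisely where the extra slack $h\alpha_0\delta$ built into the observability condition \eqref{eq_obser_condi} gets consumed: the rescaling inflates the error by a factor that grows like $1+\delta\alpha_0\alpha(t)$ per step, and the contraction furnished by $\lambda$ must dominate this inflation on each block of length $h$. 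I would then form a Lyapunov function such as $W(t)=\|w(t)\|^2$ (or a block-averaged version summed over windows $[mh,(m+1)h-1]$ to exploit \eqref{eq_obser_condi}, which is a collective-over-$h$-steps condition rather than a per-step one), compute $\mathbb{E}\{W(t+1)\mid\mathcal{F}(t-1)\}$, and show it satisfies an almost-supermartingale inequality of Robbins--Siegmund type:
\begin{align*}
	\mathbb{E}\{W(t+1)\mid \mathcal{F}(t-1)\} \le (1 - c_1\alpha(t)) W(t) + c_2\,\big(\alpha(t)\big)^{2(1-\delta)} + c_3\,\big(\alpha(t)\big)^{1-\delta} f_{\max}(t),
\end{align*}
where the noise term scales as $\alpha^{2-2\delta}(t)$ because $\mathbb{E}\{\|\bar D_H(t)V(t)\|^2\mid\mathcal{F}(t-1)\}$ is bounded by (ii.b)--(ii.c), divided by $\alpha^{2\delta}(t)$, and the perturbation term comes from the triggering term after rescaling. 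Summability of these two forcing terms is exactly what (i.b) and (iii.b) guarantee, so Robbins--Siegmund yields that $W(t)$ converges a.s. to a finite random variable and $\sum_t \alpha(t)W(t)<\infty$.

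To upgrade ``converges to a finite limit'' into ``converges to zero,'' I would combine the a.s. limit of $W(t)$ with the mean-square rate $\mathbb{E}\{\|e_0(t)\|^2\}=o(\alpha^{2\delta}(t))$ from Theorem~\ref{thm_mean_square_convergence}, which forces $\mathbb{E}\{W(t)\}\to 0$; since $W(t)$ converges a.s. to some $W_\infty\ge 0$ and its expectation (with uniform integrability supplied along a subsequence, or via Fatou on $\sum_t\alpha(t)W(t)<\infty$ together with $\sum_t\alpha(t)=\infty$) tends to zero, one concludes $W_\infty=0$ a.s., i.e. $e_0(t)/\alpha^{\delta}(t)\to 0$ almost surely. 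The main obstacle I anticipate is handling the cross terms generated by the rescaling and the event-triggered perturbation when taking conditional expectations: because \eqref{eq_obser_condi} only gives contraction in a collective, $h$-step-averaged sense (and only conditionally on $\mathcal{F}(mh-1)$), the per-step recursion is not itself contractive, so I would need to iterate the recursion over a full block of $h$ steps, carefully bound the accumulated noise and triggering contributions within the block, and verify that the net block contraction factor is $1 - c_1 h\alpha(mh) + o(\alpha(mh))$ with $c_1>0$ strictly after absorbing the $\delta\alpha_0$ inflation using the margin $h\alpha_0\delta$ in \eqref{eq_obser_condi}. Getting the constants to line up so that $c_1$ is genuinely positive, uniformly in $m$, is the delicate quantitative heart of the argument.
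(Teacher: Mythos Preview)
Your strategy matches the paper's: rescale to $e(t)=e_0(t)/\alpha^{\delta}(t)$, establish a block-level supermartingale inequality for $\|e(mh)\|^2$ along windows of length $h$, apply a Robbins--Siegmund type lemma to get almost-sure convergence of $\|e(mh)\|^2$, and then use the already-proved mean-square rate to pin the limit at zero. The paper organizes this by first proving an abstract result for linear recursions (its Theorem~\ref{thm_linear_recursion}) and then simply checking, in the proof of Theorem~\ref{thm_conver_rate}, that the extra summability hypotheses (i.b) and (iii.b) and a noise-factorization condition hold; the substance is the same as what you outline.

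One technical point you underplay deserves attention. At the single-step level, conditioning on $\mathcal{F}(t-1)$ does kill the cross term between $e(t)$ and the martingale noise $\varepsilon'(t)$, as your displayed inequality implicitly assumes. But once you iterate over a block and condition on $\mathcal{F}(mh-1)$ (which you correctly recognize is necessary for contraction), the cross term between $\Phi((m+1)h-1,mh)\,e(mh)$ and the accumulated noise $\sum_{k=mh}^{(m+1)h-1}\alpha(k)\Phi((m+1)h-1,k+1)\varepsilon'(k)$ no longer vanishes from the martingale property alone, because the transition matrices $\Phi$ involve $\bar D_H(k)$ from the same block as the noise. A Cauchy--Schwarz bound on this term yields $O(\|e(mh)\|\,\alpha^{1-\delta}(mh))$, which is \emph{not} summable and would break the Robbins--Siegmund argument. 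The paper resolves this by writing $\varepsilon'(t)=u(t)V(t)$ with $u(t)$ a function of $\bar D_H(t)$ only, and invoking the independence of $\{V(t)\}$ from $\{\bar D_H(t)\}$ in Assumption~\ref{asmp_consistency2}(ii.a): conditional independence then forces this block-level cross term to be exactly zero. You should make this step explicit when you carry out the block iteration.
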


\begin{proof}
	See Section \ref{pf_thm_conver_rate}.
\end{proof}
\begin{remark}
	Theorems \ref{thm_mean_square_convergence} and \ref{thm_conver_rate} show that the    mean-square and almost-sure convergence rates of Algorithm \ref{alg:A} are $o(\alpha^{2\delta}(t))$ and  $o(\alpha^{\delta}(t))$, respectively. Since $\delta\in[0,1/2)$, the conclusions in Theorems \ref{thm_mean_square_convergence} and \ref{thm_conver_rate}  reduce to the  mean-square and almost-sure convergence (\cite{zhang2012distributed,wang2019distributed}) when $\delta=0$. 
\end{remark}

\subsection{Communication rate}\label{sec_sub:commu}
{After studying the estimation performance of Algorithm~\ref{alg:A}, 
	in this subsection	we  find conditions such that the communication rate of sensors using the algorithm decays to zero with a certain speed while ensuring the estimation error convergence.}
To proceed, we need some extra conditions  on the step size $\alpha_i(t)$ and the triggering threshold $f_i(t)$.

\begin{assumption}\label{ass_triggering}
	There are two  monotonically non-increasing sequences $\{\bar f(t)\}_{t=0}^{\infty}$ and $\{\beta(t)\}_{t=0}^{\infty}$, and a scalar $\mu\in[1/2,1),$ such that: \\
	(i)  $\bar f(t)\leq f_{\min}(t) $ for any $t\in\mathbb{N}$.	\\
	{(ii) $\liminf\limits_{t\rightarrow\infty}\bar f(t+t^{\mu})/\bar f(t)>0$.}\\
	(iii) $	\beta(t)=O(\alpha(t)^{1-2(1-\delta)/\rho}),$ 	where $\delta$ and $\rho$ are given in Theorem \ref{thm_conver_rate} and Assumption~\ref{asmp_consistency2}~(ii.b), respectively.\\
	{	(iv)  $\liminf\limits_{t\rightarrow\infty}\frac{\bar f(t)}{t^{\mu}\beta(t)}>0$.}
\end{assumption}
{
	\begin{remark}\label{rem_mu}
		In (i) and (iii) of Assumption~\ref{ass_triggering},	although monotonicity is required for $\bar f(t)$ and $\beta(t)$, triggering threshold $f_i(t)$ and step size $\alpha_i(t)$ are not necessarily monotonic.  Conditions (ii) and (iv)  are satisfied if $\bar f(t)$ decays slowly, such as the case in Remark~\ref{rem_step}. 	
		The parameter $\mu$ can reflect the decay speed of $\bar f(t)$. One can find a larger $\mu$ if $\bar f(t)$ decays slower.
	\end{remark}	
}
\begin{remark}\label{rem_step}
	The requirements on the step size $\alpha_i(t)$ and the threshold $f_i(t)$ in Assumptions \ref{asmp_consistency2} and \ref{ass_triggering} can be satisfied at the same time. For instance, let $\mu=1/2$, $\alpha(t)=\alpha_i(t)=t^{-1}$ and $\bar f(t)=f_i(t)=t^{-\epsilon_0}$ with $\epsilon_0>\delta\geq 0$,                                                          and $\beta(t)=t^{{-1+2(1-\delta)/\rho}}$.
	Then   the requirements   are satisfied if  $\epsilon_0+2(1-\delta)/\rho\in (0,1/2]$ and $\epsilon_0>\delta\geq 0$.
\end{remark}

\begin{theorem}\label{thm_rate}
	(Communication rate decay)	Under Assumptions \ref{asmp_consistency2} and \ref{ass_triggering},  the communication rate $\lambda_c(t)$ of Algorithm \ref{alg:A} converges to zero almost surely with a rate $o(t^{-\gamma})$ for any {$\gamma\in[0,\frac{2\mu}{2\mu+1})$}, i.e.,
	\begin{align}
	\mathbb{P} \left\{\lim_{t\to \infty} \lambda_c(t)t^{\gamma} = 0 \right\} = 1.
	\end{align}
\end{theorem}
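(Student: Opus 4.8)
The plan is to reduce the statement to a pathwise growth estimate on the per-sensor triggering counts $K_i(t)$ and then bound those counts by the cumulative size of the estimate increments. Because the child-neighbor numbers $|\mathcal{N}_i^c|$ are fixed positive integers, the definition \eqref{eq_rate} gives $\lambda_c(t)\,t^\gamma = \big(\sum_{i\in\mathcal{V}}K_i(t)|\mathcal{N}_i^c|\big)\,t^{\gamma-1}\big/\sum_{i\in\mathcal{V}}|\mathcal{N}_i^c|$, so it suffices to show that for each $i\in\mathcal{V}$ and each $\gamma\in[0,1/2)$ one has $K_i(t)=o(t^{1-\gamma})$ almost surely; finiteness of $\mathcal{V}$ then yields the claim.

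First I would estimate a single increment of the local estimate. Substituting $y_i(t)=H_i(t)\theta+v_i(t)$ into \eqref{eq_estimator0} and writing $e_i(t)=x_i(t)-\theta$, the increment $x_i(t+1)-x_i(t)$ decomposes into an error-feedback and consensus part of order $\alpha_i(t)\|e_0(t)\|$, a triggering-mismatch part $\alpha_i(t)\sum_{j}a_{i,j}(x_j(\tau_k^j)-x_j(t))$ that is bounded by $\alpha_i(t)f_{\max}(t)$ since no event of $j$ has fired since $\tau_k^j$, and a martingale term $\alpha_i(t)H_i^{\sf T}(t)v_i(t)$. By Theorem \ref{thm_conver_rate} the first part is $o(\alpha^{1+\delta}(t))$ a.s. and by (iii.a) the second is $o(\alpha^{1+\delta}(t))$, so the increment is governed by the noise. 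Using the conditional $\rho$-th moment bound (ii.b) with $\rho>2$ together with a maximal inequality over the windows described below and Borel--Cantelli, I would establish the almost-sure rate $\|x_i(t+1)-x_i(t)\|=O(\beta(t))$, where $\beta(t)$ is the sequence of Assumption \ref{ass_triggering}(ii); the exponent $1-2(1-\delta)/\rho$ is precisely what the $\rho$-th moment combined with the summability (i.b) yields.

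Next I would convert increments into a count. Any triggering time $\tau_{k+1}^i\le t$ forces $\|x_i(\tau_{k+1}^i)-x_i(\tau_k^i)\|>f_i(\tau_{k+1}^i)\ge\bar f(\tau_{k+1}^i)\ge\bar f(t)$, because $\bar f\le f_{\min}$ is non-increasing by Assumption \ref{ass_triggering}(i). Since the displacements between consecutive triggers telescope into the total variation of the trajectory, this gives $K_i(t)\le 1+\bar f(t)^{-1}\sum_{s<t}\|x_i(s+1)-x_i(s)\|$. To obtain a sharp count I would partition time into the growing windows generated by $g(t)=a_0\bar f(2t)/\beta(t)$: on a window of length $\asymp g(t)$ the displacement is $\lesssim g(t)\beta(t)\asymp\bar f(2t)$, so the window absorbs only $O(1)$ triggers, and the conditions $g(t)\le t$, $g$ non-decreasing, and $g(t+g(t))-g(t)\ge a_1$ in Assumption \ref{ass_triggering}(iii) control how these windows tile $[0,t]$ and keep the Borel--Cantelli tail summable. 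Feeding the rate $\beta$ into the total-variation bound gives $K_i(t)=O\big(\bar f(t)^{-1}\sum_{s<t}\beta(s)\big)$, whose exponent $\epsilon_0+2(1-\delta)/\rho$ Remark \ref{rem_step} keeps strictly below $1/2$, so $K_i(t)=o(t^{1-\gamma})$ for every $\gamma<1/2$.

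The main obstacle I expect is the pathwise control of the accumulated martingale noise: the noise is only assumed to have a conditional $\rho$-th moment with $\rho>2$ and is not bounded, so the increments cannot be dominated deterministically. The delicate step is to combine a Doob/Burkholder-type maximal inequality over the $g$-windows with Borel--Cantelli so that the displacement over each window is almost surely of order $\beta(t)g(t)$, and, simultaneously, to make the three decaying sequences $\alpha(t)$, $\bar f(t)$, and $\beta(t)$ cooperate so that the trigger count grows strictly slower than $t^{1/2}$. This coupling of the convergence rate, the threshold decay, and the noise moment is exactly what Assumptions \ref{asmp_consistency2}(i.b),(ii.b) and \ref{ass_triggering} are designed to enforce, and verifying it almost surely is the crux of the proof.
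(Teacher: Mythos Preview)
Your reduction to a per-sensor bound $K_i(t)=o(t^{1-\gamma})$ and your decomposition of the increment $x_i(t+1)-x_i(t)$ into the consensus/feedback piece $O(\alpha(t)\|e_0(t)\|)$, the triggering mismatch $O(\alpha(t)f_{\max}(t))$, and the martingale part $\alpha_i(t)H_i^{\sf T}(t)v_i(t)$ are exactly what the paper does. Two points, however, separate your plan from the paper's argument.

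\medskip
\textbf{1. The noise control is simpler than you expect.} You call the pathwise bound on the martingale increment ``the main obstacle'' and propose a Doob/Burkholder maximal inequality over $g$-windows. The paper does nothing of the sort: it applies Markov's inequality \emph{termwise},
\[
\sum_{t\ge 0}\mathbb{P}\big\{\alpha(t)^{2(1-\delta)/\rho}\|V(t)\|\ge\varepsilon\big\}
\le \varepsilon^{-\rho}\sum_{t\ge 0}\alpha(t)^{2(1-\delta)}\,\mathbb{E}\|V(t)\|^{\rho}<\infty,
\]
and Borel--Cantelli gives $\alpha(t)^{2(1-\delta)/\rho}\|v_i(t)\|\to 0$ a.s., hence $\alpha(t)\|H_i^{\sf T}(t)v_i(t)\|=o(\beta(t))$ directly. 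No windowed maximal inequality is needed, and the exponent $1-2(1-\delta)/\rho$ in $\beta$ drops out of this one line. This yields the pathwise increment bound
\(
\|x_i(t+1)-x_i(\tau_k^i)\|\le \|x_i(t)-x_i(\tau_k^i)\|+\bar c_4\beta(t)
\)
for a sample-path-dependent constant $\bar c_4$.

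\medskip
\textbf{2. The counting argument is different, and your total-variation bound is not sufficient in general.} You end with
\(
K_i(t)=O\big(\bar f(t)^{-1}\sum_{s<t}\beta(s)\big)
\)
and then invoke the specific exponents of Remark~\ref{rem_step} to close. That is a genuine gap: the theorem is stated under the abstract Assumption~\ref{ass_triggering}, and using the \emph{terminal} threshold $\bar f(t)$ for all triggers throws away information. The paper instead bounds the inter-trigger gaps from below. From the increment bound, over one inter-trigger interval of length $L_k^i=\tau_{k+1}^i-\tau_k^i$ one has $\|x_i(\tau_{k+1}^i)-x_i(\tau_k^i)\|\le \bar c_4 L_k^i\beta(\tau_k^i)$, while triggering forces this to exceed $f_i(\tau_{k+1}^i)\ge\bar f(\tau_{k+1}^i)$. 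A short dichotomy ($L_k^i<\tau_k^i$ versus $L_k^i\ge\tau_k^i$) together with the monotonicity of $g$ gives
\[
L_k^i\ \ge\ g(\tau_k^i)\quad\text{and}\quad g(\tau_{k+1}^i)-g(\tau_k^i)\ \ge\ g\big(\tau_k^i+g(\tau_k^i)\big)-g(\tau_k^i)\ \ge\ a_1,
\]
so the sequence $g(\tau_k^i)$ increases at least linearly in $k$. Summing the arithmetic series,
\(
t\ge\sum_{k=s}^{s+\bar s-1}L_k^i\ge\tfrac12 a_1\bar s^{\,2}+O(\bar s),
\)
hence $\bar s=O(\sqrt{t})$ and $\bar s/t^{1-\gamma}\to 0$ for every $\gamma<1/2$. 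This is exactly where the structural condition $g(t+g(t))-g(t)\ge a_1$ enters: it turns a lower bound on one gap into linear growth of successive gaps, and that is what produces $\sqrt{t}$ universally. Your ``windows absorb $O(1)$ triggers'' remark is in the right spirit, but you never extract this arithmetic-series mechanism; instead you fall back on the cruder total-variation estimate, which you only verify in the power-law example.

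\medskip
In short: keep your increment decomposition, replace the maximal inequality by a one-line termwise Borel--Cantelli, and replace the total-variation count by the inter-trigger lower bound $L_k^i\ge g(\tau_k^i)$ combined with the increment $g(\tau_{k+1}^i)-g(\tau_k^i)\ge a_1$; the $O(\sqrt{t})$ trigger count then follows from an arithmetic-series estimate, with no appeal to Remark~\ref{rem_step}.
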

\begin{proof}
	See Section \ref{pf_thm_rate}.
\end{proof}

{{
		\begin{remark}\label{rem_tradeoff} (Tradeoff)
			Given the step size $\alpha_i(t)$, 
			if we reduce the decay speed of the triggering threshold $f_i(t)$, we are able to obtain a larger 
			parameter $\mu$ according to Remark~\ref{rem_mu}. Then according to  Theorem~\ref{thm_rate}, the communication rate $ \lambda_c(t)$ will have a faster decay speed. However, the convergence rate of the estimation error will be reduced since $\delta$ in Theorems~\ref{thm_mean_square_convergence} and \ref{thm_conver_rate} becomes smaller according to (iii) of  Assumption~\ref{asmp_consistency2}. Thus, the decay speed of the triggering threshold  can lead to a tradeoff between the convergence rate of the estimation error and the decay speed of the communication rate.
		\end{remark}
	}
	Theorem~\ref{thm_rate}, together with Remark \ref{rem_step} and Theorems~\ref{thm_mean_square_convergence} and \ref{thm_conver_rate}, shows that the communication rate is decaying to zero with guaranteed convergence of Algorithm \ref{alg:A}, 
	which  means that
	considerable communications between sensors can be effectively reduced comparing to the existing time-triggered approaches \cite{rad2010distributed,kar2011convergence,kar2013distributed,cattivelli2010diffusionLMS,zhang2012distributed,wang2019distributed}.
	
	
	
	
	
	%
	
	\section{Proofs of the main results}\label{sec:proofs}
	{The proofs of the main results in the last section are provided in this section.}
	\subsection{Convergence of linear recursion}\label{subsec_linear_recursion}
	
	To show the convergence of Algorithm~\ref{eq_estimation_compact}, we first study the following linear recursion
	\begin{align}\label{eq_linear_recursion}
	e(t+1) = e(t) + \alpha(t) (Q(t) + \Delta(t)) e(t) + \alpha(t)(\varepsilon^{\prime}(t) + \varepsilon^{\prime\prime}(t)),
	\end{align}
	where $e(t), \varepsilon^{\prime}(t), \varepsilon^{\prime\prime}(t) \in \mathbb{R}^{q}$, $Q(t), \Delta(t) \in \mathbb{R}^{q\times q}$, and $\alpha(t) \in \mathbb{R}$ is the step size, $t\in \mathbb{N}$.
	Some assumptions are introduced below.
	\begin{assumption}\label{asmp_linear_recursion}~\\
		(i)  $\{\alpha(t)\}$ satisfies that $\alpha(t) > 0$, $\alpha(t) \to 0$,  $\sum_{t=1}^{\infty} \alpha(t) = \infty$, and
		\begin{equation}\label{eq_asmp_linear_recursion_stepsize_rate}
		\frac{1}{\alpha(t+1)} - \frac{1}{\alpha(t)} \to \alpha_0 \ge 0.
		\end{equation} 
		(ii.a) $\{Q(t), t\in \mathbb{N}\}$ is a sequence of random matrices, and there exists a constant $\pi_1 \in \mathbb{R}^+$ such that \begin{align}\label{eq_linear_recursion_boundedQ}
		\sup_{t\in\mathbb{N}} \|Q(t)\| \le \pi_1, \text{ a.s.}
		\end{align}
		In addition, there exist $h\in\mathbb{N}^+$, $\lambda \in \mathbb{R}^+$, such that for any $m\in\mathbb{N}$,
		\begin{align}\label{eq_linear_recursion_obser_condi}
		\lambda_{\max}\left[\sum_{t=mh}^{(m+1)h-1}\mathbb{E}\left\{Q(t) + Q^{\sf T}(t)|\mathcal{F}(mh-1)\right\}\right]\le -\lambda, \text{ a.s.,}
		\end{align}
		where $\mathcal{F}(t) := \sigma(Q(s), \Delta(s), \varepsilon^{\prime}(s), 0\le s\le t)$ and $\mathcal{F}(-1) = \{\emptyset, \Omega\}$.\\
		(ii.b) $\{\Delta(t), t\in \mathbb{N}\}$ is a sequence of random matrices such that for $t \in \mathbb{N}$
		\begin{align*}
		\|\Delta(t)\| \le g_1(t), \text{ a.s.,}
		\end{align*}
		where $g_1(t): \mathbb{R}^+ \cup \{0\} \to \mathbb{R}^+$ is a measurable function satisfying $g_1(t) \to 0$ as $t \to \infty$ and $\sup_{t\ge 0} g_1(t) <~\infty$.\\
		(iii.a) $\{\varepsilon^{\prime}(t), \mathcal{F}(t)\}$ is a martingale difference sequence, i.e., $\mathbb{E}\{\varepsilon^{\prime}(t)|\mathcal{F}(t-1)\} = 0$ for all $t \in \mathbb{N}$, and $\sup_{t\in\mathbb{N}} \mathbb{E}\{\|\alpha^{\delta}(t)\varepsilon^{\prime}(t)\|^2|\mathcal{F}(t-1)\} \le c_\varepsilon$ a.s. with $c_\varepsilon$ a positive constant and some $\delta \in [0, 1/2)$.\\
		(iii.b) $\varepsilon^{\prime\prime}(t) \in \mathcal{F}(t)$, and $\|\varepsilon^{\prime\prime}(t)\| \le g_2(t)$ a.s., $t \in \mathbb{N}$, where $g_2(t): \mathbb{R}^+ \cup \{0\} \to \mathbb{R}^+$ is a measurable function satisfying $g_2(t) \to 0$ as $t \to \infty$ and $\sup_{t\ge 0} g_2(t) < \infty$.
	\end{assumption}
	
	{\begin{remark}\label{rem_asmp_linear_recursion}
			In the above assumption, the first three conditions of (i) are standard for step sizes, and~\eqref{eq_asmp_linear_recursion_stepsize_rate} is a mild condition used in the characterization of the convergence rate of an algorithm \cite{chen2002stochastic}. The boundedness of $Q(t)$ in~\eqref{eq_linear_recursion_boundedQ} is assumed for simplicity but could be extended to uniform conditional boundedness with respect to $\mathcal{F}(t-1)$. Similarly, The dominance condition in (ii.b) could also be extended to certain conditional dominance by $g_1(t)$ with respect to $\mathcal{F}(t-1)$. Condition~\eqref{eq_linear_recursion_obser_condi} is a counterpart of the persistent excitation condition for centralized algorithms \cite{guo1990estimating}. (iii.a) is a common noise assumption and guarantees that $\sum_{t=1}^{\infty} \alpha(t) \varepsilon^{\prime}(t) = \sum_{t=1}^{\infty} \alpha^{1-\delta}(t) (\alpha^{\delta}(t) \varepsilon^{\prime}(t)) < \infty$ a.s. when $\sum_{t=1}^{\infty} \alpha^{2(1-\delta)}(t) < \infty$. In addition, (iii.b) ensures $\varepsilon^{\prime\prime}(t) \to 0$. These two conditions, $\sum_{t=1}^{\infty} \alpha(t) \varepsilon^{\prime}(t) < \infty$ and $\varepsilon^{\prime\prime}(t) \to 0$, are standard for a.s. convergence of stochastic approximation algorithms \cite{chen2002stochastic}. Extensions of (iii.a) and (iii.b) are left to future work.
	\end{remark}}
	
	\begin{theorem}\label{thm_linear_recursion}
		Under Assumption~\ref{asmp_linear_recursion}, $e(t)$ in \eqref{eq_linear_recursion}, starting with any fixed initial condition, converges in mean square, i.e.,
		\begin{align*}
		\lim_{t\to\infty}\mathbb{E}\{\|e(t)\|^2\} = 0.
		\end{align*}
		In addition, if the conditions below hold\\
		(a) $\sum_{t=1}^{\infty} \alpha(t)^{2(1-\delta)} < \infty$, where $\delta$ is defined in Assumption~\ref{asmp_linear_recursion}~(iii.a),\\
		(b) $\sum_{t=1}^{\infty} \alpha(t) g_2(t) < \infty$, \\
		(c) $\varepsilon^{\prime}(t)$ can be written as $\varepsilon^{\prime}(t) = u(t) w(t)$, where $u(t) \in \mathbb{R}^{q\times q}$ and $w(t) \in \mathbb{R}^q$, such that $\{w(t), t\in\mathbb{N}\}$ is independent of $\{Q(t), \Delta(t), u(t), t\in\mathbb{N}\}$ and $\{w(t),\mathcal{F}(t)\}$ is a martingale difference sequence, \\
		then $e(t)$ starting with any fixed initial condition converges to zero a.s., i.e.,
		\begin{align*}
		\mathbb{P}\left\{ \lim_{t\to\infty} e(t) = 0 \right\} = 1.
		\end{align*}
	\end{theorem}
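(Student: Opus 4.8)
The plan is to reduce both claims to a scalar analysis of the Lyapunov quantity $\|e(t)\|^2$, carried out not step-by-step but over blocks of length $h$, since the negativity in \eqref{eq_linear_recursion_obser_condi} is only \emph{collective} over such blocks. Writing the one-step transition matrix as $\Phi(t) = I + \alpha(t)(Q(t)+\Delta(t))$, I would iterate \eqref{eq_linear_recursion} across a window $[mh,(m+1)h)$ to obtain $e((m+1)h) = \Psi_m e(mh) + \sum_{t=mh}^{(m+1)h-1}\Gamma_{m,t}\,\alpha(t)(\varepsilon^{\prime}(t)+\varepsilon^{\prime\prime}(t))$, where $\Psi_m$ and the partial products $\Gamma_{m,t}$ are built from the $\Phi$'s; by \eqref{eq_linear_recursion_boundedQ} and $\sup_t g_1(t)<\infty$ each factor obeys $\|\Phi(t)\|\le 1+O(\alpha(t))$, so all these products have norm $1+O(h\,\alpha(mh))$. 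A preliminary fact I would record first is that condition (i), i.e. $1/\alpha(t+1)-1/\alpha(t)\to\alpha_0$, forces $\alpha(mh+j)/\alpha(mh)\to1$ uniformly for $0\le j<h$, so inside a window I may replace $\alpha(t)$ by $\alpha(mh)$ up to a multiplicative $1+o(1)$, and the bounded increments let me replace $e(t)$ by $e(mh)$ up to an $O(\alpha(mh))$ error.

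For mean-square convergence I set $V(t)=\mathbb{E}\{\|e(t)\|^2\}$, expand $\|e((m+1)h)\|^2=\|\Psi_m e(mh)\|^2 + 2(\text{noise})^{\sf T}\Psi_m e(mh) + \|\text{noise}\|^2$, and take $\mathbb{E}\{\cdot\,|\,\mathcal{F}(mh-1)\}$, pulling out the $\mathcal{F}(mh-1)$-measurable vector $e(mh)$. The quadratic form $e(mh)^{\sf T}\mathbb{E}\{\Psi_m^{\sf T}\Psi_m - I\,|\,\mathcal{F}(mh-1)\}e(mh)$ reduces, to leading order, to $\alpha(mh)\,e(mh)^{\sf T}\mathbb{E}\{\sum_{t}(Q(t)+Q^{\sf T}(t))\,|\,\mathcal{F}(mh-1)\}e(mh)$, which by \eqref{eq_linear_recursion_obser_condi} is at most $-\lambda\,\alpha(mh)\|e(mh)\|^2$; the $\Delta$-contributions are bounded by $\alpha(mh)g_1\cdot\mathrm{const}\cdot\|e(mh)\|^2$ and, since $g_1\to0$, are absorbed into the contraction for large $m$, leaving an effective rate $c=\lambda/2>0$. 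The martingale cross term $\mathbb{E}\{\varepsilon^{\prime}(t)^{\sf T}e(mh)\,|\,\mathcal{F}(mh-1)\}$ vanishes by (iii.a) and the tower rule, leaving only an $O(\alpha^{2-\delta})$ remainder from the $\alpha(Q+\Delta)e$ part of $\Psi_m$; the noise second moments give $O(\alpha^{2-2\delta})$ via (iii.a) and $O(\alpha\,g_2)$ via (iii.b). Collecting terms yields $V((m+1)h)\le (1-c\,\alpha(mh))V(mh)+\alpha(mh)\eta(m)$ with $\eta(m)\to0$, and a Chung-type deterministic lemma together with $\sum_m\alpha(mh)=\infty$ gives $V(mh)\to0$; the intermediate indices follow from $V(mh+j)\le(1+O(\alpha))V(mh)+o(1)$.

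For almost-sure convergence I keep the same block decomposition but now invoke Robbins--Siegmund. The decisive use of condition (c) is that, writing $\varepsilon^{\prime}(t)=u(t)w(t)$ with $w(t)$ independent of $\{Q,\Delta,u\}$ and $\{w(t),\mathcal{F}(t)\}$ a martingale difference, the term $2\alpha(t)\,\varepsilon^{\prime}(t)^{\sf T}\Phi(t)e(t)$ has zero conditional mean even though $\varepsilon^{\prime}(t)$ and $\Phi(t)$ share the time-$t$ randomness: conditioning on the $\sigma$-algebra generated by $\mathcal{F}(t-1)$ together with $(Q(t),\Delta(t),u(t))$ reduces it to $w(t)$ paired with a measurable vector, whose conditional mean is $0$. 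Hence the accumulated cross terms form a martingale whose conditional second moments are, using (iii.a) and the already-proven boundedness of $V(t)$, summable by condition (a), so it converges a.s.; the $\varepsilon^{\prime\prime}$ terms are absolutely summable by condition (b). Taking $\mathbb{E}\{\cdot\,|\,\mathcal{F}(mh-1)\}$ over a window then gives $\mathbb{E}\{\|e((m+1)h)\|^2\,|\,\mathcal{F}(mh-1)\}\le (1-c\,\alpha(mh))\|e(mh)\|^2+\zeta_m$ with $\zeta_m\ge0$ and $\sum_m\zeta_m<\infty$ a.s. Robbins--Siegmund yields that $\|e(mh)\|^2$ converges a.s. and $\sum_m\alpha(mh)\|e(mh)\|^2<\infty$ a.s.; because $\sum_m\alpha(mh)=\infty$, the limit must be $0$, and the intermediate indices are again filled in by bounded increments.

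The main obstacle, in both parts, is that \eqref{eq_linear_recursion_obser_condi} supplies contraction only after averaging over a full window, so the whole argument must run block-wise while carefully tracking every cross term generated by the mutually correlated factors $\Phi(t)$ and the two noise sequences, and while justifying the within-window replacements of $\alpha(t)$ by $\alpha(mh)$ (step-size regularity from (i)) and of $e(t)$ by $e(mh)$ (bounded increments) with errors that are genuinely $o(\alpha(mh))$ for the mean-square recursion and summable for the almost-sure one. The sharpest point is the almost-sure handling of the martingale noise: without the independence in (c) the term $2\alpha(t)\varepsilon^{\prime}(t)^{\sf T}\Phi(t)e(t)$ need not be a martingale difference, and a nonzero conditional mean could destroy the supermartingale structure; condition (c) is exactly what certifies that structure, after which the $L^2$ bound from (iii.a) and the summability from (a) close the argument.
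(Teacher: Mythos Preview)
Your plan is sound and would go through, but your mean-square argument takes a genuinely different route from the paper's. The paper does \emph{not} derive a scalar recursion for $V(mh)=\mathbb{E}\{\|e(mh)\|^2\}$; instead it writes the full variation-of-constants formula
\[
e(t+1)=\Phi(t,0)e(0)+\sum_{k=0}^{t}\alpha(k)\Phi(t,k+1)\big(\varepsilon'(k)+\varepsilon''(k)\big),
\]
proves as a separate lemma the exponential bound
\[
\mathbb{E}\{\Phi^{\sf T}(t,s)\Phi(t,s)\,|\,\mathcal{F}(s-1)\}\le c_2\exp\!\Big(-c_1\textstyle\sum_{k=s}^{t}\alpha(k)\Big)I,
\]
and then bounds $\mathbb{E}\|e(t+1)\|^2$ by the crude inequality $\|a+b+c\|^2\le 3(\|a\|^2+\|b\|^2+\|c\|^2)$, treating the homogeneous, $\varepsilon'$, and $\varepsilon''$ parts separately. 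This sidesteps the cross terms you work hard to control. Your Lyapunov-recursion approach is more elementary (no transition-matrix lemma, just a Chung-type deterministic lemma), while the paper's approach is more modular: the exponential bound on $\Phi^{\sf T}\Phi$ is reused verbatim in the almost-sure part.

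For the almost-sure part your strategy coincides with the paper's: both run the block-wise supermartingale argument (Robbins--Siegmund, or equivalently the nearly-supermartingale lemma in \cite{chen2002stochastic}), and both use condition~(c) precisely to kill the cross term between $\Psi_m e(mh)$ and the $\varepsilon'$-noise inside a window. One point to tighten: the $\varepsilon''$ cross term in the block expansion contributes $\|e(mh)\|\cdot\sum_{t}\alpha(t)g_2(t)$ to your $\zeta_m$, so ``absolutely summable by condition~(b)'' only follows once you invoke the already-proven bound $\sup_m\mathbb{E}\|e(mh)\|<\infty$ from the mean-square part (the paper does this explicitly by taking $\mathbb{E}\{\sum_m\zeta_m\}<\infty$ and concluding $\sum_m\zeta_m<\infty$ a.s.). Also, your one-step description ``$2\alpha(t)\varepsilon'(t)^{\sf T}\Phi(t)e(t)$ has zero conditional mean'' should be phrased at the block level: the object that must vanish is $\mathbb{E}\{e(mh)^{\sf T}\Psi_m^{\sf T}\Gamma_{m,t}\,u(t)w(t)\,|\,\mathcal{F}(mh-1)\}$, which does vanish by the same conditional-independence reasoning you give, but applied with the larger $\sigma$-field $\sigma\big(\mathcal{F}(mh-1),\{Q(s),\Delta(s),u(s):s\ge mh\}\big)$.
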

	
	{\begin{remark}
			As mentioned in Remark~\ref{rem_asmp_linear_recursion}, the additional condition (a) is to ensure $\sum_{t=1}^{\infty} \alpha(t) \varepsilon^{\prime}(t) < \infty$ a.s. Conditions~(b) and~(c) are introduced to deal with the possible dependence of measurement noise, $\epsilon^\prime(t)$ and $\epsilon^{\prime\prime}(t)$, on $Q(t)$ and $\Delta(t)$. This does not exist in classic linear recursion, where $Q(t)$ and $\Delta(t)$ are both deterministic \cite{chen2002stochastic}. Possible extensions of these two conditions will be investigated in the future.
	\end{remark}}
	
Denote $\Phi(t,s) := \Big(I + \alpha(t)\big(Q(t) + \Delta(t)\big)\Big) \Big(I + \alpha(t-1)\big(Q(t-1) + \Delta(t-1)\big)\Big) \cdots \Big(I + \alpha(s)\big(Q(s) + \Delta(s)\big)\Big)$, $t\ge s \ge 0$, and $\Phi(t,s) = I$, $0 \le t < s$. To prove Theorem~\ref{thm_linear_recursion}, we begin with the following lemma.

\begin{lemma}\label{lem_linear_recursion}
	Under Assumption~\ref{asmp_linear_recursion}~(i), (ii.a), and~(ii.b), there exists a positive integer $\tilde{m}$ such that for $t \ge \tilde{m}h$ and $s \ge 0$, 
	\begin{align*}
	\mathbb{E}\left\{\Phi^{\sf T}(t,s)\Phi(t,s) | \mathcal{F}(s-1)\right\} \le c_2 \exp\left(- c_1 \sum_{k=s}^{t} \alpha(k)\right) I,~ \text{ a.s.,}
	\end{align*}
	where $c_1$ and $c_2$ are positive constants.
\end{lemma}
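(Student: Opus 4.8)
The plan is to treat $\Phi(t,s)$ as the state-transition matrix of the frozen linear system and to establish an exponential (in accumulated step size) mean-square contraction block by block, where each block has the fixed length $h$ appearing in Assumption~\ref{asmp_linear_recursion}~(ii.a). Write $M(t) := Q(t) + \Delta(t)$, so that $\Phi(t,s) = \prod_{k=s}^{t}(I + \alpha(k)M(k))$ and, by (ii.a) and (ii.b), $\sup_{t} \|M(t)\| \le \pi := \pi_1 + \sup_{t} g_1(t) < \infty$ a.s. Denote by $\Psi_m := \Phi((m+1)h-1, mh)$ the transition over the $m$-th block $B_m := \{mh, \dots, (m+1)h-1\}$. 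I will first prove a one-block bound of the form $\mathbb{E}\{\Psi_m^{\sf T}\Psi_m \mid \mathcal{F}(mh-1)\} \le \exp\big(-c_1 \sum_{k\in B_m}\alpha(k)\big) I$ valid for all $m \ge \tilde m$, and then chain these bounds through nested conditioning.

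For the one-block estimate, expand
\[ \Psi_m^{\sf T}\Psi_m = I + \sum_{t\in B_m}\alpha(t)\big(M(t)+M^{\sf T}(t)\big) + R_m, \]
where $R_m$ collects all terms of second and higher order in the step sizes; since $h$ is fixed, $\alpha(t)\to 0$ and $\|M(t)\|\le \pi$, one has $\|R_m\| = O\big((\sum_{t\in B_m}\alpha(t))^2\big)$. Taking $\mathbb{E}\{\cdot\mid\mathcal{F}(mh-1)\}$, the $\Delta$-part of the first-order term is bounded in norm by $2\big(\max_{t\in B_m} g_1(t)\big)\sum_{t\in B_m}\alpha(t)$, which is $o\big(\sum_{t\in B_m}\alpha(t)\big)$ because $g_1(t)\to 0$. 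The crucial step is to pass from the step-size-weighted sum $\sum_{t\in B_m}\alpha(t)\,\mathbb{E}\{Q(t)+Q^{\sf T}(t)\mid\mathcal{F}(mh-1)\}$ to the unweighted sum on which \eqref{eq_linear_recursion_obser_condi} places the bound $-\lambda I$. Here I use Assumption~\ref{asmp_linear_recursion}~(i): from $1/\alpha(t+1)-1/\alpha(t)\to\alpha_0$ one gets $\alpha(mh+j)/\alpha(mh)\to 1$ uniformly for $0\le j<h$ as $m\to\infty$, so $\alpha(t) = \alpha(mh)(1+o(1))$ on $B_m$ and
\[ \sum_{t\in B_m}\alpha(t)\,\mathbb{E}\{Q(t)+Q^{\sf T}(t)\mid\mathcal{F}(mh-1)\} \le -\lambda\,\alpha(mh)\,I + o\big(\alpha(mh)\big)I. \]
Combining this with the $O(\alpha(mh)^2)$ remainder and $\sum_{t\in B_m}\alpha(t)\asymp h\,\alpha(mh)$, there is $\tilde m$ such that for $m\ge\tilde m$, $\mathbb{E}\{\Psi_m^{\sf T}\Psi_m\mid\mathcal{F}(mh-1)\} \le (1-\tfrac{\lambda}{2}\alpha(mh))I \le \exp\big(-c_1\sum_{k\in B_m}\alpha(k)\big)I$ using $1-x\le e^{-x}$.

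To chain, suppose first that $s = m_0 h$ and $t=(m_1+1)h-1$ with $m_0\ge\tilde m$, so $\Phi(t,s)=\Psi_{m_1}\cdots\Psi_{m_0}$. Writing $P:=\Psi_{m_1-1}\cdots\Psi_{m_0}$, which is $\mathcal{F}(m_1h-1)$-measurable, the one-block bound gives $\mathbb{E}\{P^{\sf T}\Psi_{m_1}^{\sf T}\Psi_{m_1}P\mid\mathcal{F}(m_1h-1)\} = P^{\sf T}\mathbb{E}\{\Psi_{m_1}^{\sf T}\Psi_{m_1}\mid\mathcal{F}(m_1h-1)\}P \le \exp(-c_1\sum_{k\in B_{m_1}}\alpha(k))\,P^{\sf T}P$; applying the tower property and iterating over $m_1, m_1-1,\dots,m_0$ yields $\mathbb{E}\{\Phi^{\sf T}(t,s)\Phi(t,s)\mid\mathcal{F}(s-1)\}\le\exp(-c_1\sum_{k=s}^{t}\alpha(k))I$ because the blocks partition $[s,t]$. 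For general $s\ge 0$ and $t\ge\tilde m h$, factor $\Phi(t,s)=A\,\Psi\,B$ where $\Psi$ is the product over the full blocks of index $\ge\tilde m$ inside $[s,t]$, and $A,B$ are the leftover end fragments (the initial one spanning at most $\tilde m h$ steps, the terminal one at most $h-1$ steps). Each fragment has bounded operator norm, $\|A\|^2,\|B\|^2\le\exp(2\pi\sum_{\text{frag}}\alpha)\le c$, so $A^{\sf T}A\le cI$ and the same nested-conditioning argument applied to $B^{\sf T}\Psi^{\sf T}\Psi B$ gives the claimed bound; since the step sizes summed over the fragments contribute only a bounded amount to the exponent, replacing $\sum_{\text{blocks}}$ by $\sum_{k=s}^{t}$ merely enlarges the constant to $c_2$ while keeping the same $c_1$. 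The degenerate case $t-s<h$ (no full block) is covered directly by $\mathbb{E}\{\Phi^{\sf T}(t,s)\Phi(t,s)\}\le\exp(2\pi\sum_{k=s}^t\alpha(k))I\le c_2 I$.

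The main obstacle is precisely the weighted-versus-unweighted discrepancy: \eqref{eq_linear_recursion_obser_condi} controls $\sum_{t\in B_m}\mathbb{E}\{Q(t)+Q^{\sf T}(t)\mid\mathcal{F}(mh-1)\}$ with no step-size weights, whereas the expansion of $\Psi_m^{\sf T}\Psi_m$ naturally produces the weighted sum; reconciling the two rests entirely on the asymptotic constancy of the step sizes across a fixed-length block, which is the content of the increment condition $1/\alpha(t+1)-1/\alpha(t)\to\alpha_0$ in Assumption~\ref{asmp_linear_recursion}~(i). The remaining care is bookkeeping --- verifying that the second-order remainder $R_m$ and the $\Delta$-contribution are both $o(\alpha(mh))$ so they cannot spoil the strictly negative first-order term --- which is routine given the boundedness in (ii.a)--(ii.b).
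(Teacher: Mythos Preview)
Your proposal is correct and follows essentially the same route as the paper's proof: expand $\Psi_m^{\sf T}\Psi_m$ over one block, use the asymptotic constancy of step sizes (the consequence of $1/\alpha(t+1)-1/\alpha(t)\to\alpha_0$) to replace the step-size-weighted sum by an unweighted one so that \eqref{eq_linear_recursion_obser_condi} applies, absorb the $\Delta$- and higher-order terms into the $o(\alpha(mh))$ slack, pass to the exponential via $1-x\le e^{-x}$, and then chain by the tower property while handling end fragments by crude norm bounds. The paper carries out exactly these steps with $c_1=\lambda/(2h)$; your constant works out the same way once $\sum_{k\in B_m}\alpha(k)\asymp h\alpha(mh)$ is used.
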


\begin{proof}
	First of all, we bound $\Phi^{\sf T}(t,s)\Phi(t,s)$ in one period of length $h$, i.e., for $m \in \mathbb{N}$,
	\begin{align}\nonumber
	&\Phi^{\sf T}\big( (m+1)h-1,mh \big) \Phi\big( (m+1)h-1,mh \big)\\\nonumber
	&= \Big(I + \alpha(mh) \big(Q(mh) + \Delta(mh)\big)\Big)^{\sf T} \cdots \Big(I + \alpha\big((m+1)h-1\big) \\\nonumber
	&\quad \big(Q\big((m+1)h-1\big)  + \Delta\big((m+1)h-1\big)\big)\Big)^{\sf T}  \Big(I + \alpha\big((m+1)h-1\big) \\\nonumber
	&\quad \big(Q\big((m+1)h-1\big)  + \Delta\big((m+1)h-1\big)\big)\Big) \cdots \Big(I + \alpha(mh) \big(Q(mh) + \Delta(mh)\big)\Big)\\\label{eq_lem_linear_recursion_1period}
	&= 
	I + \sum_{t=mh}^{(m+1)h-1} \alpha(t) \big(Q(t) + Q^{\sf T}(t) +  \Delta(t) + \Delta^{\sf T}(t) \big) + \sum_{k=2}^{2h} M_{k}(m),
	\end{align}
	where $M_k(m)$, $2\le k \le 2h$, is the $k$-th order term in the binomial expansion of $\Phi^{\sf T}\big((m+1)h-1,mh\big) \Phi\big((m+1)h-1,mh\big)$.
	
	Note that from Assumption~\ref{asmp_linear_recursion}~(i), we know that for fixed $k \ge 1$,
	\begin{align*}
	\frac{1}{\alpha(t+k)} - \frac{1}{\alpha(t)} \to k\alpha_0.
	\end{align*}
	So for fixed $k \ge 1$ and $0\le i \not= j\le k$, it holds that
	\begin{align}\nonumber
	\frac{\alpha(t+i)}{\alpha(t+j)} &= \frac{\alpha(t+i)}{\alpha(t+j)} - 1 + 1\\\nonumber
	&= 1 + \alpha(t+i)\left( \frac{1}{\alpha(t+j)} - \frac{1}{\alpha(t+i)} \right)\\\nonumber
	&= 1 + \alpha(t+i)\big((j-i)\alpha_0 + o(1) \big)\\\label{eq_lem_linear_recursion_stepsize1}
	&= 1 + O\big(\alpha(t+i)\big),
	\end{align}
	and
	\begin{align}\nonumber
	\alpha(t+i) - \alpha(t+j) &= \alpha(t+i) \alpha(t+j) \left(\frac{1}{\alpha(t+j)} - \frac{1}{\alpha(t+i)} \right)\\\nonumber
	&= \alpha(t+i) \alpha(t+j) \big((j-i)\alpha_0 + o(1) \big)\\\label{eq_lem_linear_recursion_stepsize2}
	&= o\big(\alpha(t+j)\big).
	\end{align}
	Hence, for all $0 \le i \le h-1$,
	\begin{align}\label{use_eq_lem_linear_recursion_stepsize1}
	\eqref{eq_lem_linear_recursion_1period} &= I + \sum_{t=mh}^{(m+1)h-1} \alpha(t) \big(Q(t) + Q^{\sf T}(t) +  \Delta(t) + \Delta^{\sf T}(t) \big) \\\nonumber
	&\quad + o\big(\alpha(mh+i)\big)\\\nonumber
	&= 
	I + \alpha(mh+i) \sum_{t=mh}^{(m+1)h-1} \big(Q(t) + Q^{\sf T}(t)\big) \\\nonumber
	&\quad + \alpha(mh+i) \sum_{t=mh}^{(m+1)h-1} \big(\Delta(t) + \Delta^{\sf T}(t) \big) \\\nonumber
	&\quad +
	\sum_{t=mh}^{(m+1)h-1} \big(\alpha(t) - \alpha(mh+i)\big) \big(Q(t) + Q^{\sf T}(t) +  \Delta(t) + \Delta^{\sf T}(t) \big) \\\nonumber
	&\quad +o\big(\alpha(mh+i)\big)\\\label{use_eq_lem_linear_recursion_stepsize2}
	&=
	I + \alpha(mh+i) \sum_{t=mh}^{(m+1)h-1}  \big(Q(t) + Q^{\sf T}(t)\big) + o\big(\alpha(mh+i)\big),
	\end{align}
	where \eqref{use_eq_lem_linear_recursion_stepsize1} follows from \eqref{eq_lem_linear_recursion_stepsize1}, the definition of $M_k(m)$, $2\le k \le 2h$, and Assumptions~\ref{asmp_linear_recursion}~(ii.a)-(ii.b), and $o(\alpha(mh+i)) \to 0$ a.s. as $m \to \infty$ for fixed $0 \le i \le h-1$.~\eqref{use_eq_lem_linear_recursion_stepsize2} is obtained from \eqref{eq_linear_recursion_boundedQ}, \eqref{eq_lem_linear_recursion_stepsize2}, and Assumption~\ref{asmp_linear_recursion}~(ii.b). Taking conditional expectation of \eqref{eq_lem_linear_recursion_1period} with respect to $\mathcal{F}(mh-1)$, we have that
	\begin{align}\nonumber
	&\mathbb{E}\left\{\Phi^{\sf T}\big( (m+1)h-1,mh \big) \Phi\big( (m+1)h-1,mh \big) \big| \mathcal{F}(mh-1) \right\}\\\nonumber
	&=
	\mathbb{E}\left\{ I + \alpha(mh+i) \sum_{t=mh}^{(m+1)h-1}  \big(Q(t) + Q^{\sf T}(t)\big) + o\big(\alpha(mh+i)\big) \big| \mathcal{F}(mh-1) \right\}\\\nonumber
	&=
	I + \alpha(mh+i) \sum_{t=mh}^{(m+1)h-1} \mathbb{E}\left\{ Q(t) + Q^{\sf T}(t) \big| \mathcal{F}(mh-1) \right\}  + o\big(\alpha(mh+i)\big) \\\label{use_asmp_linear_recursion_iia_1}
	&\le
	I - \alpha(mh+i) \lambda I  + o\big(\alpha(mh+i)\big)\\\nonumber
	&=
	I - \frac{\lambda}{h} \left(\sum_{t=mh}^{(m+1)h-1} \alpha(t)\right) I + \frac{\lambda}{h} \left(\sum_{t=mh}^{(m+1)h-1} \big(\alpha(t) - \alpha(mh+i) \big)\right) I \\\nonumber
	&\quad + o\big(\alpha(mh+i)\big)\\\label{use_eq_lem_linear_recursion_stepsize2_b}
	&=
	\left(1 - \frac{\lambda}{h} \sum_{t=mh}^{(m+1)h-1} \alpha(t) + o\big(\alpha(mh+i)\big) \right) I, 
	\end{align}
	where \eqref{use_asmp_linear_recursion_iia_1} is due to \eqref{eq_linear_recursion_obser_condi} and \eqref{use_eq_lem_linear_recursion_stepsize2_b} follows from \eqref{eq_lem_linear_recursion_stepsize2}. Under Assumptions~\ref{asmp_linear_recursion}~(ii.a)-(ii.b), $o\big(\alpha(mh+i)\big)$ is bounded by a deterministic function tending to zero, so for large enough $m$, say $m \ge \tilde{m}$,
	\begin{align}\label{use_eq_lem_linear_recursion_stepsize2_c}
	\eqref{use_eq_lem_linear_recursion_stepsize2_b} &\le
	\left(1 - \frac{\lambda}{2h} \sum_{t=mh}^{(m+1)h-1} \alpha(t) \right) I \\\label{use_eq_elementary1}
	&\le
	\exp \left( - c_1 \sum_{t=mh}^{(m+1)h-1} \alpha(t) \right) I,
	\end{align}
	where~\eqref{use_eq_lem_linear_recursion_stepsize2_c} holds for $m \ge \tilde{m}$ by noticing that $\alpha(t) > 0$, and~\eqref{use_eq_elementary1} with $c_1:= \lambda/2h$ is obtained from $1 - x \le e^{-x}$ for all $x \in \mathbb{R}$.
	
	Note that it follows from \eqref{eq_linear_recursion_boundedQ} that $\Phi^{\sf T}(mh+l,mh)\Phi(mh+l,mh)$ is uniformly bounded a.s. for all $m \in \mathbb{N}$ and $0\le l < h$. Moreover, from $\alpha(t) \to 0$, $\exp\left(c_1 \sum_{k=mh}^{mh+l} \alpha(k)\right) \to 1$ as $m \to \infty$, so $\Phi^{\sf T}(mh+l,mh)\Phi(mh+l,mh)$ can be uniformly bounded a.s. by $c_0 \exp\left(- c_1 \sum_{k=mh}^{mh+l} \alpha(k)\right)$ for all $m \in \mathbb{N}$ and $0\le l < h$, where $c_0$ is a positive constant.
	
	Now, for $\Phi^{\sf T}(t,s)\Phi(t,s)$ starting with $s = \tilde{m}h$, it holds for $t = mh + l$ with $m \ge \tilde{m}$ and $0\le l \le h-1$ that
	\begin{align}\nonumber
	&\mathbb{E}\left\{\Phi^{\sf T}(t,\tilde{m}h)\Phi(t,\tilde{m}h) | \mathcal{F}(\tilde{m}h-1) \right\}\\\nonumber
	&=
	\mathbb{E}\left\{\Phi^{\sf T}(mh-1,\tilde{m}h) \big(\Phi^{\sf T}(t,mh) \Phi(t,mh) \big) \Phi(mh-1,\tilde{m}h) | \mathcal{F}(\tilde{m}h-1) \right\}\\\nonumber
	&\le
	c_0 \exp\left(- c_1 \sum_{k=mh}^{t} \alpha(k)\right) \mathbb{E}\left\{\Phi^{\sf T}(mh-1,\tilde{m}h) \Phi(mh-1,\tilde{m}h) | \mathcal{F}(\tilde{m}h-1) \right\}\\\nonumber
	&=
	c_0 \exp\left(- c_1 \sum_{k=mh}^{t} \alpha(k)\right) \mathbb{E}\big\{\Phi^{\sf T}\big((m-1)h-1,\tilde{m}h\big) \\\nonumber
	&\quad \Phi^{\sf T}\big(mh-1,(m-1)h\big)   \Phi\big(mh-1,(m-1)h\big) \\\nonumber 
	&\quad \Phi\big((m-1)h-1,\tilde{m}h\big) \big| \mathcal{F}(\tilde{m}h-1) \big\} \\\label{eq_ms_converg_Phit_2}
	&= 
	c_0 \exp\left(- c_1 \sum_{k=mh}^{t} \alpha(k)\right) \mathbb{E}\Big\{ \mathbb{E}\big\{ \Phi^{\sf T}\big((m-1)h-1,\tilde{m}h\big) \\\nonumber
	&\quad \Phi^{\sf T}\big(mh-1,(m-1)h\big) \Phi\big(mh-1,(m-1)h\big) \\\nonumber
	&\quad  \Phi\big((m-1)h-1,\tilde{m}h\big) \big| \mathcal{F}\big((m-1)h-1\big) \big\} \Big| \mathcal{F}(\tilde{m}h-1) \Big\} \\\nonumber
	&= 
	c_0 \exp\left(- c_1 \sum_{k=mh}^{t} \alpha(k)\right) \mathbb{E}\Big\{ \Phi^{\sf T}\big((m-1)h-1,\tilde{m}h\big) \\\nonumber
	&\quad  \mathbb{E}\big\{ \Phi^{\sf T}\big(mh-1,(m-1)h\big) \Phi\big(mh-1,(m-1)h\big) \big| \mathcal{F}\big((m-1)h-1\big) \big\} \\\nonumber
	&\quad  \Phi\big((m-1)h-1,\tilde{m}h\big) \Big| \mathcal{F}(\tilde{m}h-1) \Big\} \\\label{eq_use_use_eq_elementary1_a}
	&\le 
	c_0 \exp\left(- c_1 \sum_{k=(m-1)h}^{t} \alpha(k)\right) \mathbb{E}\big\{ \Phi^{\sf T}\big((m-1)h-1,\tilde{m}h\big) \\\nonumber
	&\quad \Phi\big((m-1)h-1,\tilde{m}h\big) \big| \mathcal{F}(\tilde{m}h-1) \big\} \\\nonumber
	&\le 
	\dots\\\label{eq_repeatly_use_formertwo}
	&\le 
	c_0 \exp\left(- c_1 \sum_{k=\tilde{m}h}^{t} \alpha(k)\right) I,
	\end{align}
	where \eqref{eq_ms_converg_Phit_2} is implied by $\mathbb{E}\{X|\mathcal{F}_1\} = \mathbb{E}\{\mathbb{E}\{X|\mathcal{F}_2\}|\mathcal{F}_1\}$ for $\mathcal{F}_1 \subset \mathcal{F}_2$, \eqref{eq_use_use_eq_elementary1_a} follows from \eqref{use_eq_elementary1}, and \eqref{eq_repeatly_use_formertwo} is obtained by repeatly using the two former arguments.
	
	Finally, from \eqref{eq_linear_recursion_boundedQ} we know that $\Phi^{\sf T}(\tilde{m}h-1, s)\Phi(\tilde{m}h-1,s)$ can be uniformly bounded a.s. for all $0 \le s \le \tilde{m}h-1$ since $\tilde{m}$ is fixed. Hence, for $t \ge \tilde{m}h$ and $s \ge 0$ we have 
	\begin{align}\nonumber
	&\mathbb{E}\left\{\Phi^{\sf T}(t,s)\Phi(t,s) | \mathcal{F}(s-1)\right\} \\\nonumber
	&=
	\mathbb{E}\left\{\Phi^{\sf T}(\tilde{m}h-1,s) \big(\Phi^{\sf T}(t,\tilde{m}h)  \Phi(t,\tilde{m}h) \big) \Phi(\tilde{m}h-1,s) \big| \mathcal{F}(s-1) \right\} \\\label{eq_use_eq_repeatly_use_formertwo}
	&\le 
	c_0 \exp\left(- c_1 \sum_{k=\tilde{m}h}^{t} \alpha(k)\right)  \mathbb{E}\left\{ \Phi(\tilde{m}h-1,s)^{\sf T} \Phi(\tilde{m}h-1,s)  | \mathcal{F}(s-1) \right\} \\\nonumber
	&\le c_2 \exp\left(- c_1 \sum_{k=s}^{t} \alpha(k)\right) I,
	\end{align}
	where \eqref{eq_use_eq_repeatly_use_formertwo} follows from \eqref{eq_repeatly_use_formertwo}. 
\end{proof}

\noindent\emph{\textbf{Proof of Theorem~\ref{thm_linear_recursion}}:}

From \eqref{eq_linear_recursion} and the definition of $\Phi(t,s)$, we have that
\begin{align*}
e(t+1) &= \big(I + \alpha(t) (Q(t) + \Delta(t)) \big) e(t) + \alpha(t)(\varepsilon^{\prime}(t) + \varepsilon^{\prime\prime}(t) )\\
&= 
\Phi(t,0) e(0) + \sum_{k=0}^t \alpha(k) \Phi(t, k+1) (\varepsilon^{\prime}(k) + \varepsilon^{\prime\prime}(k)).
\end{align*}
First we prove the mean-square convergence part. Write
\begin{align}\nonumber
&\mathbb{E}\{\|e(t+1)\|^2\} = \mathbb{E}\{e^{\sf T}(t+1) e(t+1)\}\\\nonumber
&= 
\mathbb{E}\left\{\left(\Phi(t,0) e(0) + \sum_{k=0}^t \alpha(k) \Phi(t, k+1) (\varepsilon^{\prime}(k) + \varepsilon^{\prime\prime}(k)) \right)^{\sf T} \right.\\\nonumber
&\quad \left. \left( \Phi(t,0) e(0) + \sum_{k=0}^t \alpha(k) \Phi(t, k+1) (\varepsilon^{\prime}(k) + \varepsilon^{\prime\prime}(k)) \right) \right\}\\\nonumber
&\le 
3 \bigg(\mathbb{E}\{\left\| \Phi(t,0) e(0) \right\|^2 \} + \mathbb{E} \left\{ \left\| \sum_{k=0}^t \alpha(k) \Phi(t, k+1) \varepsilon^{\prime}(k) \right\|^2 \right\}\nonumber\\
&\quad  + \mathbb{E} \left\{ \left\| \sum_{k=0}^t \alpha(k) \Phi(t, k+1) \varepsilon^{\prime\prime}(k) \right\|^2 \right\} \bigg)\\\nonumber
&:= 3 ((I) + (II) + (III)).
\end{align}
For the first part,
\begin{align}\nonumber
(I) &= \mathbb{E}\{\left( \Phi(t,0) e(0) \right)^{\sf T}\left( \Phi(t,0) e(0) \right) \}\\\nonumber
&= \mathbb{E}\{e^{\sf T}(0) \Phi^{\sf T}(t,0)\Phi(t,0) e(0)\}\\\nonumber
&\le c_2 \exp\left(- c_1 \sum_{k=0}^{t} \alpha(k)\right) \mathbb{E}\{\|e(0)\|^2\} \to 0,
\end{align}
as $t \to \infty$, from Lemma~\ref{lem_linear_recursion} and Assumption~\ref{asmp_linear_recursion}~(i). 

For the second part, 
\begin{align}\nonumber
(II) &= \mathbb{E} \left\{ \left( \sum_{k=0}^t \alpha(k) \Phi(t, k+1) \varepsilon^{\prime}(k) \right)^{\sf T} \left( \sum_{k=0}^t \alpha(k) \Phi(t, k+1) \varepsilon^{\prime}(k) \right) \right\}\\\nonumber
&= \mathbb{E} \left\{ \sum_{k=0}^t \left(\alpha(k) \Phi(t, k+1) \varepsilon^{\prime}(k) \right)^{\sf T} \left( \alpha(k) \Phi(t, k+1) \varepsilon^{\prime}(k) \right) \right\},
\end{align}
because Assumption \ref{asmp_linear_recursion} (iii.a) yields that for $i > j \ge 0$
\begin{align*}
\mathbb{E}\{\varepsilon^{\prime}(i) \varepsilon^{\prime}(j)^{\sf T}\} =\mathbb{E}\{\mathbb{E}\{\varepsilon^{\prime}(i) \varepsilon^{\prime}(j)^{\sf T}|\mathcal{F}(j)\}\} =\mathbb{E}\{\mathbb{E}\{\varepsilon^{\prime}(i)|\mathcal{F}(j)\}\varepsilon^{\prime}(j)^{\sf T}\} = 0.
\end{align*}
Now
\begin{align}\nonumber
(II)  
&= 
\mathbb{E} \left\{ \sum_{k=0}^t \left(\alpha(k) \Phi(t, k+1) \varepsilon^{\prime}(k) \right)^{\sf T} \left( \alpha(k) \Phi(t, k+1) \varepsilon^{\prime}(k) \right) \right\}\\\nonumber
&=
\sum_{k=0}^t \alpha^{2}(k) \mathbb{E} \left\{ \varepsilon^{\prime}(k)^{\sf T} \Phi^{\sf T}(t, k+1) \Phi(t, k+1) \varepsilon^{\prime}(k) \right\}\\\nonumber
&=
\sum_{k=0}^t \alpha^{2}(k) \mathbb{E} \left\{ \mathbb{E} \left\{ \varepsilon^{\prime}(k)^{\sf T} \Phi^{\sf T}(t, k+1)\Phi(t, k+1) \varepsilon^{\prime}(k) | \mathcal{F}(k) \right\} \right\}\\\nonumber
&=
\sum_{k=0}^t \alpha^{2}(k) \mathbb{E} \left\{ \varepsilon^{\prime}(k)^{\sf T} \mathbb{E} \left\{  \Phi^{\sf T}(t, k+1)\Phi(t, k+1)  | \mathcal{F}(k) \right\} \varepsilon^{\prime}(k) \right\}\\\label{eq_ms_converg_II_1}
&\le
\sum_{k=0}^t \alpha^{2(1-\delta)}(k)~ c_2 \exp\left(- c_1 \sum_{i=k+1}^{t} \alpha(i)\right) \mathbb{E} \left\{ \|\alpha^{\delta}(k) \varepsilon^{\prime}(k) \|^2 \right\}\\\label{eq_ms_converg_II_3}
&\le 
c_2 c_{\varepsilon} \sum_{k=0}^t \alpha^{2(1-\delta)}(k) ~ \exp\left(- c_1 \sum_{i=k+1}^{t} \alpha(i)\right),
\end{align}
where \eqref{eq_ms_converg_II_1} follows from Lemma \ref{lem_linear_recursion}, and \eqref{eq_ms_converg_II_3} from Assumption \ref{asmp_linear_recursion} (iii.a). 

Note that from Assumption \ref{asmp_linear_recursion} (i), $\alpha(t) \to 0$, so for any $\varepsilon > 0$ and $\delta \in [0, \frac12)$, there exists $t_1$ such that $\alpha^{1-2\delta}(t) < \varepsilon$ for all $t > t_1$. In addition, for large enough $t_2$ and $t > t_2$, we have that 
\begin{align*}
\alpha(t) &\le \alpha(t) + (\alpha(t) - c_1 \alpha^2(t))\\
&= 2 \left(\alpha(t) - \frac{c_1 \alpha^2(t)}{2}\right)\\
&\le \frac{2}{c_1} (1 - \exp(-c_1 \alpha(t))),
\end{align*}
where the last inequality follows from $e^{-x} \le 1-x+x^2/2$, $\forall x \ge 0$. Consequently, we have for $t > t_0 := \max\{t_1, t_2\}$
\begin{align}\label{eq_sum_of_alpha_times_exponential}
&\sum_{k=0}^t \alpha^{2(1-\delta)}(k)  \exp\left(- c_1 \sum_{i=k+1}^{t} \alpha(i)\right)\\\nonumber
&=
\sum_{k=0}^{t_0} \alpha^{2(1-\delta)}(k) \exp\left(- c_1 \sum_{i=k+1}^{t} \alpha(i)\right) + \sum_{k=t_0+1}^t \alpha^{2(1-\delta)}(k)  \exp\left(- c_1 \sum_{i=k+1}^{t} \alpha(i)\right)\\\nonumber
&\le \sum_{k=0}^{t_0} \alpha^{2(1-\delta)}(k)  \exp\left(- c_1 \sum_{i=k+1}^{t} \alpha(i)\right) + \varepsilon \sum_{k=t_0+1}^t \alpha(k) \exp\left(- c_1 \sum_{i=k+1}^{t} \alpha(i)\right).
\end{align}
The first term above converges to zero since $t_0$ is fixed and $\sum_{t=1}^{\infty} \alpha(t) = \infty$ in Assumption~\ref{asmp_linear_recursion}, and the second term with $t > t_0$ is bounded by
\begin{align}\nonumber
&\varepsilon \sum_{k=t_0+1}^t \alpha(k) \exp\left(- c_1 \sum_{i=k+1}^{t} \alpha(i)\right)\\\nonumber
&\le 
\frac{2 \varepsilon}{c_1} \sum_{k=t_0+1}^t (1 - \exp(-c_1 \alpha(k))) \exp\left(- c_1 \sum_{i=k+1}^{t} \alpha(i)\right)\\\nonumber
&\le 
\frac{2 \varepsilon}{c_1} \sum_{k=t_0+1}^t  \left( \exp\left(- c_1 \sum_{i=k+1}^{t} \alpha(i)\right) - \exp\left(- c_1 \sum_{i=k}^{t} \alpha(i)\right) \right)\\\nonumber
&\le \frac{2 \varepsilon}{c_1}.
\end{align}
The arbitrariness of $\varepsilon$ yields \eqref{eq_sum_of_alpha_times_exponential} tends to zero as $t \to \infty$, which further leads to $(II) \to 0$.

Finally, for the third part,
\begin{align}\nonumber
(III) &= 
\mathbb{E} \left\{ \left(\sum_{k=0}^t \alpha(k) \Phi(t, k+1)  \varepsilon^{\prime\prime}(k) \right)^{\sf T} \left(\sum_{k=0}^t \alpha(k) \Phi(t, k+1) \varepsilon^{\prime\prime}(k) \right) \right\}\\\nonumber
&= 
\mathbb{E} \left\{ \left| \sum_{k=0}^t \sum_{l=0}^t\left( \alpha(k) \Phi(t, k+1) \varepsilon^{\prime\prime}(k) \right)^{\sf T} \left(  \alpha(l) \Phi(t, l+1) \varepsilon^{\prime\prime}(l) \right) \right| \right\}\\\nonumber
&\le 
\sum_{k=0}^t \sum_{l=0}^t \alpha(k) \alpha(l) \mathbb{E} \left\{ \left| \left( \Phi(t, k+1) \varepsilon^{\prime\prime}(k) \right)^{\sf T} \left(   \Phi(t, l+1) \varepsilon^{\prime\prime}(l) \right) \right| \right\}\\\nonumber
&\le 
\sum_{k=0}^t \sum_{l=0}^t \alpha(k) \alpha(l) \mathbb{E} \left\{ \left\|   \Phi(t, k+1) \varepsilon^{\prime\prime}(k) \right\| \left\|   \Phi(t, l+1) \varepsilon^{\prime\prime}(l) \right\| \right\}\\\label{eq_ms_converg_holder}
&\le 
\sum_{k=0}^t \sum_{l=0}^t \alpha(k) \alpha(l) \left(\mathbb{E} \left\{ \left\|   \Phi(t, k+1) \varepsilon^{\prime\prime}(k) \right\|^2 \right\}\right)^{\frac12} \left( \mathbb{E}\left\{ \left\|   \Phi(t, l+1) \varepsilon^{\prime\prime}(l) \right\|^2 \right\} \right)^{\frac12}\\\label{eq_ms_converg_III_1}
&=
\left( \sum_{k=0}^t \alpha(k) \left(\mathbb{E} \left\{ \left\|   \Phi(t, k+1) \varepsilon^{\prime\prime}(k) \right\|^2 \right\}\right)^{\frac12}
\right)^2,
\end{align}
where \eqref{eq_ms_converg_holder} follows from H{\"o}lder inequality.

Note that
\begin{align}\nonumber
&\sum_{k=0}^t \alpha(k) \left(\mathbb{E} \left\{ \left\|   \Phi(t, k+1) \varepsilon^{\prime\prime}(k) \right\|^2 \right\}\right)^{\frac12}\\\nonumber
&=
\sum_{k=0}^t \alpha(k) \left(\mathbb{E} \left\{ \varepsilon^{\prime\prime}(k)^{\sf T} \Phi^{\sf T}(t, k+1) \Phi(t, k+1) \varepsilon^{\prime\prime}(k) \right\}\right)^{\frac12} \\\label{eq_ms_converg_lem4_3}
&\le 
\sum_{k=0}^t \alpha(k) \left(c_2 \exp\left(- \frac12 c_1 \sum_{i=k+1}^{t} \alpha(i)\right) (\mathbb{E} \{\|\varepsilon^{\prime\prime}(k)\|^2\})^{\frac12} \right)\\\label{eq_use_asmp_linear_recursion_iiib}
&\le 
c_2 \sum_{k=0}^t \alpha(k) g_2(k) \exp\left(- \frac12 c_1 \sum_{i=k+1}^{t} \alpha(i)\right), 
\end{align}
where \eqref{eq_ms_converg_lem4_3} is implied by Lemma \ref{lem_linear_recursion}, and \eqref{eq_use_asmp_linear_recursion_iiib} is due to Assumption~\ref{asmp_linear_recursion}~(iii.b). Since $g_2(t) \to 0$ as $t\to \infty$, following the same argument for proving $(II) \to 0$, we obtain that $(III)$ tends to zero. In this way, we verify that $\mathbb{E}\{\|e(t)\|^2\} \to 0$ as $t \to \infty$.

To show the a.s. convergence of $e(t)$, we first study the convergence of sequence $\{\|e(mh)\|^2, m\in\mathbb{N}\}$. From \eqref{eq_linear_recursion} and the definition of $\Phi(t,s)$, we have that for $m \in \mathbb{N}$,
\begin{align*}
e\big((m+1)h\big) =& \Phi\big((m+1)h-1,mh\big) e(mh) \\
&+ \sum_{k=mh}^{(m+1)h-1} \alpha(k) \Phi\big((m+1)h-1, k+1\big) (\varepsilon^{\prime}(k) + \varepsilon^{\prime\prime}(k)),
\end{align*}
so
\begin{align}\nonumber
& \mathbb{E} \left\{\left\|e\big((m+1)h\big)\right\|^2 \big| \mathcal{F}(mh-1) \right\}\\\nonumber
&=
\mathbb{E} \left\{ \left\| \Phi\big((m+1)h-1,mh\big) e(mh) \right\|^2 \big| \mathcal{F}(mh-1) \right\} \\\nonumber
&
+ \mathbb{E} \Bigg\{ \bigg\| \sum_{k=mh}^{(m+1)h-1} \alpha(k) \Phi\big((m+1)h-1, k+1\big) \varepsilon^{\prime}(k) \bigg\|^2 \Bigg| \mathcal{F}(mh-1) \Bigg\} \\\nonumber
&
+ \mathbb{E} \Bigg\{ \bigg\| \sum_{k=mh}^{(m+1)h-1} \alpha(k) \Phi\big((m+1)h-1, k+1\big) \varepsilon^{\prime\prime}(k) \bigg\|^2 \Bigg | \mathcal{F}(mh-1) \Bigg\}\\\nonumber
&
+ 2 \mathbb{E} \Bigg\{ \left( \Phi\big((m+1)h-1,mh\big) e(mh) \right)^{\sf T} \\\nonumber
&\quad \left( \sum_{k=mh}^{(m+1)h-1} \alpha(k) \Phi\big((m+1)h-1, k+1\big) \varepsilon^{\prime}(k) \right) \Bigg| \mathcal{F}(mh-1) \Bigg\} \\\nonumber
&
+ 2 \mathbb{E} \Bigg\{\left( \Phi\big((m+1)h-1,mh\big) e(mh) \right)^{\sf T} \\\nonumber
&\quad \left( \sum_{k=mh}^{(m+1)h-1} \alpha(k) \Phi\big((m+1)h-1, k+1\big) \varepsilon^{\prime\prime}(k) \right) \Bigg | \mathcal{F}(mh-1) \Bigg\}\\\nonumber
&
+ 2 \mathbb{E} \Bigg\{\left( \sum_{k=mh}^{(m+1)h-1} \alpha(k) \Phi\big((m+1)h-1, k+1\big) \varepsilon^{\prime}(k) \right)^{\sf T} \\\nonumber
&\quad \left( \sum_{k=mh}^{(m+1)h-1} \alpha(k) \Phi\big((m+1)h-1, k+1\big) \varepsilon^{\prime\prime}(k) \right) \Bigg | \mathcal{F}(mh-1) \Bigg\}\\\nonumber
&:= (A) + (B) + (C) + (D) + (E) + (F) .
\end{align}
For $(A)$, we have
\begin{align}\nonumber
(A) &= \mathbb{E} \Big\{ \left( \Phi\big((m+1)h-1,mh\big) e(mh) \right)^{\sf T} \\\nonumber
&\quad \left( \Phi\big((m+1)h-1,mh\big) e(mh) \right) \big| \mathcal{F}(mh-1) \Big\}\\\nonumber
&= 
e^{\sf T}(mh) \mathbb{E} \left\{   \Phi^{\sf T}\big((m+1)h-1,mh\big)   \Phi\big((m+1)h-1,mh\big)  \big| \mathcal{F}(mh-1) \right\} e(mh)\\\label{eq_use_use_eq_elementary1_b}
&\le 
\exp\left( - c_1 \sum_{k=mh}^{(m+1)h-1} \alpha(k) \right) \|e(mh)\|^2 \le \|e(mh)\|^2,
\end{align}
from \eqref{use_eq_elementary1} and for large enough $m$.

Similar to \eqref{eq_ms_converg_II_3}, we have for $(B)$ that 
\begin{align}\label{eq_linear_recursion_asconvergence_partB_1}
(B) &\le  
c_2 c_{\varepsilon} \sum_{k=mh}^{(m+1)h-1} \alpha^{2(1-\delta)}(k)  \exp\left(- c_1 \sum_{i=k+1}^{(m+1)h-1} \alpha(i)\right)\\\label{eq_linear_recursion_asconvergence_partB_2}
&\le c_2 c_{\varepsilon} \sum_{k=mh}^{(m+1)h-1} \alpha^{2(1-\delta)}(k),
\end{align}
where $\delta$ is given in Assumption~\ref{asmp_linear_recursion}~(iii.a).

Note that $(C)$ can be bounded by
\begin{align}\nonumber
&\sum_{k=mh}^{(m+1)h-1} \sum_{l=mh}^{(m+1)h-1} \alpha(k) \alpha(l) \mathbb{E} \Big\{ \big| \varepsilon^{\prime\prime}(k)^{\sf T} \Phi^{\sf T}\big((m+1)h-1, k+1\big) \\\nonumber
&\quad \Phi\big((m+1)h-1, l+1\big) \varepsilon^{\prime\prime}(l) \big| \Big | \mathcal{F}(mh-1) \Big\}\\\nonumber
&\le
\sum_{k=mh}^{(m+1)h-1} \sum_{l=mh}^{(m+1)h-1} \alpha(k) \alpha(l) \mathbb{E} \Big\{\big\| \Phi\big((m+1)h-1, k+1\big)\varepsilon^{\prime\prime}(k) \big\|  \\\nonumber
&\quad \big\|\Phi\big((m+1)h-1, l+1\big) \varepsilon^{\prime\prime}(l) \big\| \Big | \mathcal{F}(mh-1) \Big\}\\\label{eq_thm_asconverg_conditional_holder}
&\le 
\sum_{k=mh}^{(m+1)h-1} \sum_{l=mh}^{(m+1)h-1} \alpha(k) \alpha(l) \\\nonumber
&\quad \Big(\mathbb{E} \Big\{ \big\| \Phi\big((m+1)h-1, k+1\big) \varepsilon^{\prime\prime}(k) \big\|^2 \Big| \mathcal{F}(mh-1) \Big\} \Big)^{\frac12} \\\nonumber
&\quad \Big(\mathbb{E} \Big\{ \big\|\Phi\big((m+1)h-1, l+1\big) \varepsilon^{\prime\prime}(l) \big\|^2 \Big| \mathcal{F}(mh-1) \Big\} \Big)^{\frac12}\\\label{eq_thm_asconverg_lem4_3}
&=
\left(c_2 \sum_{k=mh}^{(m+1)h-1} \alpha(k) g_2(k) \exp\left(- \frac12 c_1 \sum_{i=k+1}^{(m+1)h-1} \alpha(i)\right) \right)^2\\\label{eq_linear_recursion_as_converg_C_1}
&\le
c_2^2 \left(\sum_{k=mh}^{(m+1)h-1} \alpha(k) g_2(k) \right)^2 \\\nonumber
&\le 
c_2^2 h \sum_{k=mh}^{(m+1)h-1} \alpha^2(k) g_2^2(k), 
\end{align}
where \eqref{eq_thm_asconverg_conditional_holder} follows from the conditional H{\"o}lder inequality, and \eqref{eq_thm_asconverg_lem4_3} is implied by Lemma \ref{lem_linear_recursion} and Assumption~\ref{asmp_linear_recursion}~(iii.b).

From the assumptions of the theorem, $\varepsilon^{\prime}(t)$ can be written as $u(t)w(t)$, and $\{w(t), t\ge mh\}$ is conditionally independent of $\{Q(t), \Delta(t), u(t), t \ge mh\}$ given $\mathcal{F}(mh-1) = \sigma(Q(s), \Delta(s), u(s), w(s), 0\le s \le mh-1)$ from Lemma A.1 in \cite{li2018distributed}. Thus, it follows that
\begin{align*}
(D) &= 2 e(mh) \sum_{k=mh}^{(m+1)h-1} \alpha(k) \mathbb{E} \big\{ \Phi^{\sf T}\big((m+1)h-1,mh\big) \\\nonumber
&\quad \Phi\big((m+1)h-1, k+1\big) \varepsilon^{\prime}(k) \big | \mathcal{F}(mh-1) \big\}\\
&=
2 e(mh) \sum_{k=mh}^{(m+1)h-1} \alpha(k) \mathbb{E} \big\{ \Phi^{\sf T}\big((m+1)h-1,mh\big) \\\nonumber
&\quad \Phi\big((m+1)h-1, k+1\big) u(k) w(k) \big | \mathcal{F}(mh-1) \big\}\\
&=0,
\end{align*}
from the conditional independence and that $\{w(t), \mathcal{F}(t)\}$ is a martingale difference.

By conditional H{\"o}lder inequalities, for $(E)$ we have that
\begin{align}\nonumber
(E) &= 2 \mathbb{E} \Bigg\{\left( \Phi\big((m+1)h-1,mh\big) e(mh) \right)^{\sf T} \\\nonumber
&\quad \left( \sum_{k=mh}^{(m+1)h-1} \alpha(k) \Phi\big((m+1)h-1, k+1\big) \varepsilon^{\prime\prime}(k) \right) \Bigg | \mathcal{F}(mh-1) \Bigg\}\\\nonumber
&\le
2 \Big( \mathbb{E} \big\{ \big\|\Phi\big((m+1)h,mh \big) e(mh)\|^2 \big | \mathcal{F}(mh-1) \big\} \Big)^\frac12 \\\nonumber
&\quad \Bigg( \mathbb{E} \Bigg\{ \Bigg\| \sum_{k=mh}^{(m+1)h-1} \alpha(k) \Phi\big((m+1)h-1, k+1\big) \varepsilon^{\prime\prime}(k) \Bigg\|^2 \Bigg | \mathcal{F}(mh-1) \Bigg\} \Bigg)^\frac12\\\nonumber
&\le 
2 c_2 \|e(mh)\| \sum_{k=mh}^{(m+1)h-1} \alpha(k) g_2(k),
\end{align}
where the last inequality is from \eqref{eq_use_use_eq_elementary1_b} and \eqref{eq_linear_recursion_as_converg_C_1}.

Similarly, it can be obtained from \eqref{eq_linear_recursion_asconvergence_partB_2} and \eqref{eq_linear_recursion_as_converg_C_1} that for large enough $m$
\begin{align*}
(F) &\le 2 c_2 \sqrt{c_{\varepsilon} \sum_{k=mh}^{(m+1)h-1} \alpha^{2(1-\delta)}(k)} \sum_{k=mh}^{(m+1)h-1} \alpha(k) g_2(k)\\
&\le 2 c_2 \sqrt{c_{\varepsilon}} \sum_{k=mh}^{(m+1)h-1} \alpha(k) g_2(k).
\end{align*}

To sum up, for large enough $m$
\begin{align}\nonumber
& \mathbb{E} \{\|e((m+1)h)\|^2 | \mathcal{F}(mh-1) \}\\\nonumber
&\le
\|e(mh)\|^2 + c_4 \Bigg(\sum_{k=mh}^{(m+1)h-1} \alpha^{2(1-\delta)}(k) + \sum_{k=mh}^{(m+1)h-1} \alpha^{2}(k) g^2_2(k) \\\nonumber
&\quad + (\|e(mh)\| + c_5) \sum_{k=mh}^{(m+1)h-1} \alpha(k) g_2(k) \Bigg),
\end{align}
where $c_4$ and $c_5$ are positive constants.

We have proved that $\mathbb{E}\{\|e(t)\|^2\} \to 0$, yielding that $\mathbb{E}\{\|e(mh)\|\} \to 0$ as $m \to \infty$. Thus, $\mathbb{E} \left\{\|e(mh)\| \right\}$ is bounded, and from $g_2(t) \to 0$ it follows that
\begin{align*}
&\mathbb{E} \Bigg\{\sum_{m=0}^{\infty} \Bigg( \sum_{k=mh}^{(m+1)h-1} \alpha^{2(1-\delta)}(k) + \sum_{k=mh}^{(m+1)h-1} \alpha^{2}(k) g^2_2(k) \\\nonumber
&\quad + (\|e(mh)\| + c_5) \sum_{k=mh}^{(m+1)h-1} \alpha(k) g_2(k) \Bigg) \Bigg\}\\
&= 
\sum_{m=0}^{\infty} \Bigg(  \sum_{k=mh}^{(m+1)h-1} \alpha^{2(1-\delta)}(k) + \sum_{k=mh}^{(m+1)h-1} \alpha^{2}(k) g^2_2(k) \\\nonumber
&\quad + (\mathbb{E} \left\{\|e(mh)\| \right\} + c_5) \sum_{k=mh}^{(m+1)h-1} \alpha(k) g_2(k) \Bigg)\\
&\le
\sum_{m=0}^{\infty} \left(c_6 \sum_{k=mh}^{(m+1)h-1} \alpha^{2(1-\delta)}(k) + c_7 \sum_{k=mh}^{(m+1)h-1} \alpha(k) g_2(k) \right)\\
&=
c_6 \sum_{k=0}^{\infty} \alpha^{2(1-\delta)}(k) + c_7 \sum_{k=0}^{\infty} \alpha(k) g_2(k),
\end{align*}
where $c_6$ and $c_7$ are positive constants. From the assumptions we know the above series converges a.s, so Lemma 1.2.2 in \cite{chen2002stochastic} ensures that $e(mh)$ converges a.s as $m \to \infty$, implying that $e(mh) \to 0$ a.s.

Noting that for $\forall\varepsilon > 0$,
\begin{align*}
\begin{split}
&\sum_{k=0}^{\infty} \mathbb{P}\{\alpha(k) \|\varepsilon^{\prime}(k)\| \ge \varepsilon\}\\
&\le \frac{1}{\varepsilon^2} \sum_{k=0}^{\infty} \alpha^{2(1-\delta)}(k) \mathbb{E}\{\|\alpha^{\delta}(k) \varepsilon^{\prime}(k)\|^2\}\\
&\le \frac{c_{\varepsilon}}{\varepsilon^2} \sum_{k=0}^{\infty} \alpha^{2(1-\delta)}(k) < \infty,
\end{split}
\end{align*}
from Chebyshev inequality. Hence by Borel-Cantelli lemma, $\alpha(k) \|\varepsilon^{\prime}(k)\| \to 0$ a.s. 
Note that for $m \ge 0$ and $0 < s < h$, 
\begin{align*}
\left\|e(mh+s) \right\| &\le \left\|\Phi(mh+s-1,mh) e(mh)\right\| + \left\|\sum_{k=mh}^{mh+s-1} \alpha(k) \Phi(mh+s, k+1) \varepsilon^{\prime}(k)\right\| \\
& + \left\|\sum_{k=mh}^{mh+s-1} \alpha(k) \Phi(mh+s, k+1) \varepsilon^{\prime\prime}(k)\right\|,
\end{align*}
and the right side converges to zero a.s. for fixed $s$ as $m \to \infty$, from $e(mh) \to 0$, $\alpha(k) \|\varepsilon^{\prime}(k)\| \to 0$, Assumption~\ref{asmp_linear_recursion} (ii.a) and (iii.b). Therefore, we prove that $e(t) \to 0$ a.s.
	
	\subsection{Proof of Theorem  \ref{thm_mean_square_convergence}}\label{pf_thm_mean_square_convergence}
	
	From \eqref{eq_estimation_compact} and $\mathcal{L}\mathbf{1}_N = \mathbf{0}_N$, the estimation error $e_0(t)=X(t)-\textbf{1}_N\otimes \theta$ evolves as below,  
	\begin{equation*}
	\begin{aligned}
	e_0(t+1) 
	&= 
	e_0(t) - \bar{\alpha}(t) (\mathcal{L}\otimes I_M + \bar D_H(t) \bar D_H^{\sf T}(t)) e_0(t) \\
	&\quad + \bar{\alpha}(t) \left(\bar D_H(t) V(t) + (\mathcal{A}\otimes I_M)(X(\tau_{k})-X(t)) \right)\\
	&= 
	e_0(t) - \alpha(t) \bigg(\frac{\bar{\alpha}(t)}{\alpha(t)}\bigg) (\mathcal{L}\otimes I_M + \bar D_H(t) \bar D_H^{\sf T}(t)) e_0(t) \\
	&\quad + \alpha(t) \bigg(\frac{\bar{\alpha}(t)}{\alpha(t)}\bigg) \left(\bar D_H(t) V(t) + (\mathcal{A}\otimes I_M)(X(\tau_{k})-X(t)) \right)\\
	&=
	e_0(t) - \alpha(t) (I_{MN} + o(1)) (\mathcal{L}\otimes I_M + \bar D_H(t) \bar D_H^{\sf T}(t)) e_0(t) \\
	&\quad + \alpha(t) (I_{MN} + o(1)) \left(\bar D_H(t) V(t) + (\mathcal{A}\otimes I_M)(X(\tau_{k})-X(t)) \right),
	\end{aligned}
	\end{equation*}
	where $o(1)$ is a deterministic infinitesimal.
	
	Let $e(t) = e_0(t)/\alpha^{\delta}(t)$, $\delta \in [0,1/2)$, so
	\begin{align}\nonumber
	&e(t+1) \\\nonumber
	&=
	\left( \frac{\alpha(t)}{\alpha(t+1)} \right)^{\delta} \frac{e_0(t+1)}{\alpha^{\delta}(t)} \\\nonumber
	&=
	\left( \frac{\alpha(t)}{\alpha(t+1)} \right)^{\delta} \frac{1}{\alpha^{\delta}(t)} \Big( \big( I_{MN} \\\nonumber
	&\quad - \alpha(t) (I_{MN} + o(1)) (\mathcal{L}\otimes I_M + \bar D_H(t) \bar D_H^{\sf T}(t)) \big) e_0(t) \\\nonumber
	&\quad+ \alpha(t) (I_{MN} + o(1)) \big(\bar D_H(t) V(t) + (\mathcal{A}\otimes I_M)(X(\tau_{k})-X(t)) \big) \Big) \\\nonumber
	&=
	\left( \frac{\alpha(t)}{\alpha(t+1)} \right)^{\delta} \bigg( \big( I_{MN} \\\nonumber
	&\quad - \alpha(t) (I_{MN} + o(1)) (\mathcal{L}\otimes I_M + \bar D_H(t) \bar D_H^{\sf T}(t)) \big) e(t) \\\nonumber
	&\quad  + \alpha(t) (I_{MN} + o(1)) \bigg(\frac{\bar D_H(t) V(t)}{\alpha^{\delta}(t)}
	+ (\mathcal{A}\otimes I_M)\frac{X(\tau_{k})-X(t)}{\alpha^{\delta}(t)} \bigg) \bigg)\\\label{eq_thm_mean_square_convergence_1}
	&=
	\left(1 + \delta \frac{\alpha(t) - \alpha(t+1)}{\alpha(t+1)} + O\left(\left(\frac{\alpha(t) - \alpha(t+1)}{\alpha(t+1)}\right)^2\right) \right) \\\nonumber
	&\quad~ \bigg( \big( I_{MN} - \alpha(t) (I_{MN} + o(1)) (\mathcal{L}\otimes I_M + \bar D_H(t) \bar D_H^{\sf T}(t)) \big) e(t) \\\nonumber
	&\quad + \alpha(t) (I_{MN} + o(1)) \left(\frac{\bar D_H(t) V(t)}{\alpha^{\delta}(t)} + (\mathcal{A}\otimes I_M)\frac{X(\tau_{k})-X(t)}{\alpha^{\delta}(t)} \right) \bigg)\\\label{eq_thm_mean_square_convergence_2}
	&=
	\left(1 + \delta \frac{\alpha(t) - \alpha(t+1)}{\alpha(t+1)} + O\left(\left(\frac{\alpha(t) - \alpha(t+1)}{\alpha(t+1)}\right)^2\right) \right) \\\nonumber
	&\quad~ \big( I_{MN} - \alpha(t) (I_{MN} + o(1)) (\mathcal{L}\otimes I_M + \bar D_H(t) \bar D_H^{\sf T}(t)) \big) e(t)\\\nonumber
	&\quad +  \big(1 + O(\alpha(t)) \big) \alpha(t) (I_{MN} + o(1)) \\\nonumber
	&\quad \left(\frac{\bar D_H(t) V(t)}{\alpha^{\delta}(t)} + (\mathcal{A}\otimes I_M)\frac{X(\tau_{k})-X(t)}{\alpha^{\delta}(t)} \right)\\\nonumber
	&=
	\Bigg\{I_{MN} + \alpha(t) \\\nonumber
	&\qquad \bigg[ - (I_{MN} + o(1)) (\mathcal{L}\otimes I_M + \bar D_H(t) \bar D_H^{\sf T}(t))  + \delta \frac{\alpha(t) - \alpha(t+1)}{\alpha(t)\alpha(t+1)} I_{MN} \\\nonumber
	&\qquad~~  - \delta \frac{\alpha(t) - \alpha(t+1)}{\alpha(t+1)} (I_{MN} + o(1)) (\mathcal{L}\otimes I_M + \bar D_H(t) \bar D_H^{\sf T}(t) ) \\\nonumber
	&\qquad~~ + O\left(\left(\frac{\alpha(t) - \alpha(t+1)}{\alpha(t+1)}\right)^2\right) \bigg( \frac{1}{\alpha(t)} I_{MN} \\\nonumber
	&\qquad~~ - (I_{MN} + o(1)) (\mathcal{L}\otimes I_M + \bar D_H(t) \bar D_H^{\sf T}(t)) \bigg) \bigg] \Bigg\} e(t)\\\nonumber
	&\qquad + \alpha(t) \big(1 + O(\alpha(t))\big) (I_{MN} + o(1)) \\\nonumber
	&\qquad \left(\frac{\bar D_H(t) V(t)}{\alpha^{\delta}(t)} + (\mathcal{A}\otimes I_M)\frac{X(\tau_{k})-X(t)}{\alpha^{\delta}(t)} \right),
	\end{align}
	where \eqref{eq_thm_mean_square_convergence_1} follows from Taylor's expansion and \eqref{eq_lem_linear_recursion_stepsize1}, and \eqref{eq_thm_mean_square_convergence_2} is from \eqref{eq_lem_linear_recursion_stepsize2}.
	
	Now let
	\begin{align*}
	Q(t) 
	&= - (\mathcal{L}\otimes I_M + \bar D_H(t) \bar D_H^{\sf T}(t)) + \alpha_0 \delta I_{MN},\\
	\Delta(t) 
	&= o(1) (\mathcal{L} \otimes I_M + \bar{D}_H(t) \bar{D}^{\sf T}_H(t)) + \delta \left(\frac{\alpha(t) - \alpha(t+1)}{\alpha(t)\alpha(t+1)} - \alpha_0 \right) I_{MN} \\
	&\quad - \delta \frac{\alpha(t) - \alpha(t+1)}{\alpha(t+1)} (I_{MN} + o(1)) \big(\mathcal{L}\otimes I_M + \bar D_H(t) \bar D_H^{\sf T}(t) \big)\\
	&\quad +O\left(\left(\frac{\alpha(t) - \alpha(t+1)}{\alpha(t+1)}\right)^2\right) \bigg( \frac{1}{\alpha(t)} I_{MN} \\\nonumber
	&\quad - (I_{MN} + o(1)) ( \mathcal{L}\otimes I_M + \bar D_H(t) \bar D_H^{\sf T}(t))\bigg),\\
	\varepsilon^{\prime}(t) 
	&= \big(1 + O(\alpha(t))\big) (I_{MN} + o(1)) \frac{\bar D_H(t) V(t)}{\alpha^{\delta}(t)},\\
	\varepsilon^{\prime\prime}(t) 
	&= \big(1 + O(\alpha(t))\big) (I_{MN} + o(1)) (\mathcal{A}\otimes I_M)\frac{X(\tau_{k})-X(t)}{\alpha^{\delta}(t)},
	\end{align*}
	and we can write $e_0(t)/\alpha^{\delta}(t)$ in the form \eqref{eq_linear_recursion}. Note that Assumption~\ref{asmp_consistency2}~(i.a) is identical to Assumption~\ref{asmp_linear_recursion}~(i). Since
	\begin{align*}
	&Q(t) + Q^{\sf T}(t)
	=
	- (\mathcal{L} + \mathcal{L}^{\sf T}) \otimes I_M - 2 \bar D_H(t) \bar D_H^{\sf T}(t) + 2 \alpha_0 \delta I_{MN},
	\end{align*}
	Assumption~\ref{asmp_linear_recursion}~(ii.a) holds under Assumption~\ref{asmp_consistency2}~(ii.c). 
	
	From \eqref{eq_lem_linear_recursion_stepsize2}, it holds that
	\begin{align*}
	\Delta(t) &= o(1) (\mathcal{L} \otimes I_M + \bar{D}_H(t) \bar{D}^{\sf T}_H(t)) + \delta \left(\frac{\alpha(t) - \alpha(t+1)}{\alpha(t)\alpha(t+1)} - \alpha_0 \right) I_{MN} \\
	&\quad - \delta \frac{\alpha(t) - \alpha(t+1)}{\alpha(t+1)} (I_{MN} + o(1)) \big(\mathcal{L}\otimes I_M + \bar D_H(t) \bar D_H^{\sf T}(t) \big)\\
	&\quad  + O\left(\frac{\alpha(t) - \alpha(t+1)}{\alpha(t+1)}\right) I_{MN}\nonumber\\
	&\quad - O\left(\left(\frac{\alpha(t) - \alpha(t+1)}{\alpha(t+1)}\right)^2\right) (I_{MN} + o(1)) \left(\mathcal{L}\otimes I_M + \bar D_H(t) \bar D_H^{\sf T}(t) \right),
	\end{align*}
	so by $\sup_{t\in\mathbb{N}} \|\bar D_H(t)\|^2 \le D$ and Assumption~\ref{asmp_consistency2}~(i.a) we know that $\Delta(t)$ can be bounded by a deterministic function that converges to zero as $t \to \infty$.
	
	Assumption~\ref{asmp_consistency2}~(ii.a) and~(ii.b) ensure that $\{\varepsilon^{\prime}(t), \mathcal{F}(t)\}$ is a martingale difference sequence, and
	\begin{align*}
	&\mathbb{E}\{\|\alpha^{\delta}(t)\varepsilon^{\prime}(t)\|^2|\mathcal{F}(t-1)\} \\
	&=
	\mathbb{E}\big\{ \big\|\big(1 + O(\alpha(t))\big) (I_{MN} + o(1)) \bar D_H(t) V(t) \big\|^2 \big|\mathcal{F}(t-1) \big\} \\
	&\le 
	c_1 \mathbb{E}\{ \| \bar D_H(t) V(t)\|^2 |\mathcal{F}(t-1)\}\\
	&\le 
	c_1 D \mathbb{E}\{ \|V(t)\|^2 |\mathcal{F}(t-1)\} \\
	&\le 
	c_1 D \left( \mathbb{E}\{ \|V(t)\|^{\rho} |\mathcal{F}(t-1)\}\right)^{\frac{2}{\rho}} \le c (c_V)^{\frac{2}{\rho}} D,
	\end{align*}
	where $c_1$ is a positive constant, and the last inequality is obtained from conditional Lyapunov inequality.
	
	Finally, $\|\varepsilon^{\prime\prime}(t)\| \le c_2 f_{\max}(t) / \alpha^{\delta}(t) \to 0$ for some positive constant $c_2$, from Assumption~\ref{asmp_consistency2}~(iii.a). Therefore, Theorem~\ref{thm_linear_recursion} implies the conclusion.

	\subsection{Proof of Theorem \ref{thm_conver_rate}}\label{pf_thm_conver_rate}
	
	Following Section~\ref{pf_thm_mean_square_convergence}, we only need to validate the additional assumptions in Theorem~\ref{thm_linear_recursion}. Note that $\sum_{t=1}^{\infty} \alpha(t)^{2(1-\delta)} < \infty$ is given in Assumption~\ref{asmp_consistency2}~(i.b), and $\sum_{t=1}^{\infty} \alpha(t) g_2(t) < \infty$ holds from Assumption~\ref{asmp_consistency2}~(iii.b), by noticing $g_2(t) := c_2 f_{\max}(t)/\alpha^{\delta}(t)$. Finally, letting $u(t) := \big(1 + O(\alpha(t))\big) (I_{MN} + o(1)) \bar D_H(t)/\alpha^{\delta}(t)$ and $w(t) := V(t)$, we know that (c) holds under Assumption~\ref{asmp_consistency2}~(ii.a) and~(ii.b).

	\subsection{Proof of Theorem \ref{thm_rate}}\label{pf_thm_rate}
	Recall that $\tau_{k}^i:=\tau_{k_i(t)}$ is $k$-th triggering instant of sensor $i\in\mathcal{V}$ in the time interval $[0,t]\cap\mathbb{N}$.
	Denote the set $\Gamma=\{i\in\mathcal{V}| \tau_{k}^i\rightarrow\infty$ \text{as} $t\rightarrow\infty\}$, which is the set of sensors whose number of communications  goes to infinity as time goes to infinity. If $\Gamma=\emptyset$, then there are positive integers $n_0$ and $N_0$, such that for $t\geq n_0$, 
	$\max_{i\in\mathcal{V}}\sup_{t}\tau_{k}^i\leq N_0$ surely.
	From the definition of communication rate \eqref{eq_rate}, for $t\geq n_0$ and $\gamma\in[0,1/2)$, it holds that 
	$$\lim_{t\to \infty} \lambda_c(t)t^{\gamma} \leq \lim_{t\to \infty} \frac{N_0}{t^{1-\gamma}}=0.$$
	In the case, the conclusion holds.
	
	Next, we consider the case that $\Gamma\neq \emptyset$.   
	According to (\ref{eq_estimator02}), for  any sensor $i\in\Gamma$ and time $t\geq\tau_{k}^i$, we have 
	\begin{align}\label{eq_estimator01}
	&x_{i}(t+1)-x_{i}(\tau_{k}^i)\nonumber\\
	=&x_{i}(t)-x_{i}(\tau_{k}^i)+\alpha_i(t)\sum_{j\in\mathcal{N}_{i}}a_{i,j}(x_{j}(t)-x_{i}(t))\\
	&+\alpha_i(t)  H_{i}^{\sf T}(t)( y_{i}(t)-H_{i}(t)x_{i}(t))+\alpha_i(t)\sum_{j\in\mathcal{N}_{i}}a_{i,j}(x_{j}(\tau_{k}^j)-x_{j}(t))\nonumber.
	\end{align}
	From Assumption \ref{asmp_consistency2} (i.a), there is a constant $c_0>0$ such that $\alpha_i(t)\leq c_0\alpha(t).$ 
	Taking norm operator on both sides of (\ref{eq_estimator01}) yields
	\begin{align}\label{eq_estimator012}
	&\norm{x_{i}(t+1)-x_{i}(\tau_{k}^i)}\nonumber\\
	\leq&\norm{x_{i}(t)-x_{i}(\tau_{k}^i)}+ c_0\alpha(t)\norm{\sum_{j\in\mathcal{N}_{i}}a_{i,j}(x_{j}(t)-x_{i}(t))}\nonumber\\
	&+ c_0\alpha(t)\norm{H_{i}^{\sf T}(t)( y_{i}(t)-H_{i}(t)x_{i}(t))}+ c_0\alpha(t)\norm{\sum_{j\in\mathcal{N}_{i}}a_{i,j}(x_{j}(\tau_{k}^j)-x_{j}(t))}\nonumber\\
	:=&\norm{x_{i}(t)-x_{i}(\tau_{k}^i)}+(I)+(II)+(III).
	\end{align}
	
	Consider $(I)$ in \eqref{eq_estimator012}.
	Denote $d_0=\max_{i\in \mathcal{V}}\sum_{j\in\mathcal{N}_i} a_{i,j}$, then we have 
	\begin{align}
	c_0\alpha(t)\norm{a_{i,j}\sum_{j\in\mathcal{N}_{i}}(x_{j}(t)-x_{i}(t))}&= c_0\alpha(t)\norm{\sum_{j\in\mathcal{N}_{i}}a_{i,j}(x_{j}(t)-\theta+\theta-x_{i}(t))}\nonumber\\
	&\leq 2d_0 c_0\alpha(t)\max_{j\in\mathcal{V}}\norm{x_{j}(t)-\theta}\leq \bar c_1 \alpha(t)^{1+\delta}\label{ineq1}
	\end{align}
	where $\bar c_1$ is a positive scalar, which could be different in different sample trajectories, and 
	the last inequality is obtained from Theorem \ref{thm_conver_rate}.

	Consider $(II)$ in \eqref{eq_estimator012}. 
	For $\forall\varepsilon_0 > 0$,   by (ii.b) of Assumption  \ref{asmp_consistency2} and Markov inequality, we obtain
	\begin{align}\label{pro_ineq2}
	\begin{split}
	\sum_{t=0}^{\infty} \mathbb{P}\{\alpha(t)^{\frac{2(1-\delta)}{\rho}} \|V(t)\| \ge \varepsilon_0\}&\le \frac{1}{\varepsilon_0^{\rho}} \sum_{t=0}^{\infty} \alpha(t)^{2(1-\delta)} \mathbb{E}\{\|V(t)\|^\rho\}\\
	&\le \frac{\bar c_V}{\varepsilon_0^\rho} \sum_{t=0}^{\infty} \alpha(t)^{2(1-\delta)} < \infty.
	\end{split}
	\end{align}
	Hence by Borel-Cantelli lemma, we have
	\begin{align}\label{eq_g}
	\lim\limits_{t\rightarrow\infty}\alpha(t)^{\frac{2(1-\delta)}{\rho}}\norm{v_i(t)}=0, \text{ a.s.}
	\end{align}
	Under (ii.c) of Assumption \ref{asmp_consistency2}, $\sup_{t\geq 0}\norm{H_i^{\sf T}(t)}<\infty,$ then from  \eqref{eq_g} and  Theorem \ref{thm_conver_rate}, there are positive scalar $\bar c_2,\bar c_3$,  which could be different in different sample trajectories, such that
	\begin{align}
	&c_0\alpha(t)\norm{H_{i}^{\sf T}(t)( y_{i}(t)-H_{i}(t)x_{i}(t))}\nonumber\\
	=& c_0\alpha(t)\norm{H_{i}^{\sf T}(t)H_{i}(t)( \theta-x_{i}(t))+H_{i}^{\sf T}(t)v_i(t)}\nonumber\\
	\leq & c_0\alpha(t)\max_{i\in\mathcal{V}}\sup_{t\geq 0}\norm{H_{i}^{\sf T}(t)H_{i}(t)}\max_{i\in\mathcal{V}}\norm{x_{i}(t)-\theta}+\bar c_2\alpha(t)^{1-\frac{2(1-\delta)}{\rho}}\nonumber\\
	\leq &\bar c_3\alpha(t)^{1+\delta}+\bar c_2\alpha(t)^{1-\frac{2(1-\delta)}{\rho}}.\label{ineq2}
	\end{align}
	
	Consider $(III)$ in \eqref{eq_estimator012}. Since $\norm{x_{j}(\tau_{k}^j)-x_{j}(t)}\leq f_j(t)$, 
	\begin{align}
	c_0\alpha(t)\norm{\sum_{j\in\mathcal{N}_{i}}a_{i,j}(x_{j}(\tau_{k}^j)-x_{j}(t))}\leq  c_0d_0\alpha(t)f_{\max}(t).\label{ineq3}
	\end{align}
	
	From  inequalities \eqref{eq_estimator012}, \eqref{ineq1}, \eqref{ineq2} and \eqref{ineq3}, it follows that
	\begin{align*}
	\norm{x_{i}(t+1)-x_{i}(\tau_{k}^i)}\leq &\norm{x_{i}(t)-x_{i}(\tau_{k}^i)}+(\bar c_1 + \bar c_3)\alpha(t)^{1+\delta}\\
	&+\bar c_2\alpha(t)^{1-\frac{2(1-\delta)}{\rho}}+ c_0d_0\alpha(t)f_{\max}(t).
	\end{align*}
	Under Assumption \ref{ass_triggering}, there is a monotonically non-increasing sequence $	\beta(t)=O(\alpha(t)^{1-2(1-\delta)/\rho}).$
	Thus, there is a scalar $\bar c_4>0$, which could be different in different sample trajectories, such that 
	\begin{align*}
	\norm{x_{i}(t+1)-x_{i}(\tau_{k}^i)}\leq \norm{x_{i}(t)-x_{i}(\tau_{k}^i)}+ \bar c_4 \beta(t).
	\end{align*} 
	Denote
	$L_{k}^i:=\tau_{k+1}^i-\tau_{k}^i$ the interval length between $(k+1)$-th triggering time and $k$-th triggering time,  then  we have
	%
	\begin{align*}
	\norm{x_i(\tau_{k+1}^i)-x_i(\tau_{k}^i)}
	=&\norm{x_i(\tau_{k}^i+L_{k}^i)-x_i(\tau_{k}^i)}\\
	\leq&\norm{x_i(\tau_{k}^i+L_{k}^i-1)-x_i(\tau_{k}^i)}+\bar c_4 \beta(\tau_{k}^i+L_{k}^i-1)\\
	&\qquad\vdots\\
	\leq&\bar c_4\sum_{s=\tau_{k}^i}^{\tau_{k}^i+L_{k}^i-1} \beta(s)
	\leq\bar c_4L_{k}^i\beta(\tau_{k}^i).
	\end{align*}
	where the last inequality is obtained from the monotonicity of $\beta(t)$.

	%

	A necessary condition to guarantee that the event  is triggered for sensor $i$ is 
	\begin{align} \label{equiv_tri}
	\bar c_4L_{k}^i\beta(\tau_{k}^i)>f_i(\tau_{k+1}^i)
	\Longleftrightarrow&L_{k}^i>\frac{f_i(\tau_{k}^i+L_{k}^i)}{\bar c_4\beta(\tau_{k}^i)}. 
	\end{align}
	
	{
		Then we make the claim:
		\begin{align}\label{calim}
		\liminf\limits_{k\rightarrow\infty}\frac{L_{k}^i}{(\tau_{k}^i)^{\mu}}>0,
		\end{align}
		where $\mu\in[1/2,1)$ is introduced in Assumption~\ref{ass_triggering}.
		The proof of claim \eqref{calim} is given by contradiction. Suppose    claim \eqref{calim} does not hold, i.e., $\liminf\limits_{k\rightarrow\infty}\frac{L_{k}^i}{(\tau_{k}^i)^{\mu}}=0.$ Then $\{k\}_{k=0}^{\infty}$ has a subsequence $\{k_{j}\}_{j=0}^{\infty}$ such that
		\begin{align}\label{contradiction}
		\lim\limits_{j\rightarrow\infty}\frac{L_{k_j}^i}{(\tau_{k_j}^i)^{\mu}}=0,
		\end{align}
		which means there is a finite integer $J>0$, such that $L_{k_j}^i\leq (\tau_{k_j}^i)^{\mu}$ for any $j\geq J$. It follows from \eqref{equiv_tri} and   the monotonicity of $\bar f(t)$ that 
		\begin{align*}
		L_{k_j}^i&>\frac{\bar f(\tau_{k_j}^i+L_{k_j}^i)}{\bar c_4\beta(\tau_{k_j}^i)}=\frac{\bar f(\tau_{k_j}^i+L_{k_j}^i)}{\bar c_4\bar f(\tau_{k_j}^i)}\frac{\bar f(\tau_{k_j}^i)}{\beta(\tau_{k_j}^i)}\geq \frac{\bar f(\tau_{k_j}^i+(\tau_{k_j}^i)^{\mu})}{\bar c_4\bar f(\tau_{k_j}^i)}\frac{\bar f(\tau_{k_j}^i)}{\beta(\tau_{k_j}^i)}\geq  \frac{\bar c_5}{\bar c_4}\frac{\bar f(\tau_{k_j}^i)}{\beta(\tau_{k_j}^i)},
		\end{align*}
		where $\bar c_5>0$ exists due to Assumption \ref{ass_triggering} (ii). From  Assumption \ref{ass_triggering} (iv),  there is $\bar c_6>0$, such that $\frac{\bar f(t)}{\beta(t)}>\bar c_6t^{\mu}$ for any $t\in\mathbb{N}$. Then $L_{k_j}^i>\frac{\bar c_5\bar c_6}{\bar c_4}(\tau_{k_j}^i)^{\mu}$ for $j\geq J$, which contradicts  \eqref{contradiction}. Thus,  claim \eqref{calim} holds.

		According to \eqref{calim}, there is $\bar M>0$, such that $L_{k}^i> \bar M (\tau_{k}^i)^{\mu}=:g(\tau_{k}^i)$ for any $k\in\mathbb{N}$. It follows that 
		\begin{align*} 
		g(\tau_{k+1}^i)-g(\tau_{k}^i)=\bar M (\tau_{k}^i+L_{k}^i)^{\mu}-\bar M (\tau_{k}^i)^{\mu}
		\geq \bar M (\tau_{k}^i+\bar M(\tau_{k}^i)^{\mu})^{\mu}-\bar M (\tau_{k}^i)^{\mu}.
		\end{align*}
		According to the Newton's generalized binomial theorem and $\mu\in[1/2,1)$, there are $T_1>0$ and $\tilde M>0$, such that for $\tau_{k}^i\geq T_1$, we have 
		$\bar M (\tau_{k}^i+\bar M(\tau_{k}^i)^{\mu})^{\mu}-\bar M (\tau_{k}^i)^{\mu}\geq \tilde M (\tau_{k}^i)^{2\mu-1}$.
		%
		%
		%
		Therefore, for $\tau_{k}^i\geq T_1$, it holds that
		\begin{align}\label{g_seq}
		L_{k}^i>g(\tau_{k}^i), \quad g(\tau_{k+1}^i)-g(\tau_{k}^i)\geq \tilde M (\tau_{k}^i)^{2\mu-1}.
		\end{align}
	}
	
	Next, we prove the decay speed of the communication rate.  In the following, we consider the communication frequency of the sensor  which is assumed to have the most triggering time instants during $[0,t]$. 
	Without losing generality, we denote this sensor as sensor $i$. Then it holds that $i\in\Gamma$, which means $\tau_{k}^i\rightarrow\infty$ as $t\rightarrow\infty$.
	There is an integer $s>0$, such that $\tau_{s}^i\geq T_1$. Split the interval $[0,t]\cap\mathbb{N}$ into two sub-intervals, namely $[0,\tau_{s}^i]\cap\mathbb{N}$ and $(\tau_{s}^i,t]\cap\mathbb{N}$. 
	Denote  $\bar s$   the triggering times of sensor $i$ in $(\tau_{s}^i,t]\cap\mathbb{N}$. 
	According to the definition of communication rate in Definition \ref{def_com_rate}, we have 
	\begin{align*}
	\lambda_c(t)=\frac{\sum_{j\in\mathcal{V}}K_j(t)|\mathcal{N}_{j}^c| }{t \sum_{j\in\mathcal{V}}|\mathcal{N}_{j}^c|}
	\leq \frac{K_i(t)}{t}
	=\frac{s+\bar s}{t}.
	\end{align*}
	
	It follows that   for any $\gamma\in[0,\frac{2\mu}{2\mu+1})$,  
	\begin{align}\label{pf_0}
	\lambda_c(t) t^{\gamma}\leq \frac{s}{t^{1-\gamma}}+\frac{\bar s}{t^{1-\gamma}}.
	\end{align}
	We assume $\bar s\rightarrow \infty$ as $t\rightarrow \infty$. Otherwise, the conclusion of this theorem holds trivially.
	Given the above finite $s$, it is straightforward to see that
	\begin{align}\label{pf_1}
	\frac{s}{t^{1-\gamma}}\rightarrow 0 \quad  t\rightarrow\infty.
	\end{align}
	
	{
		Recall that $\tau_{k}^i$ is $k$-th triggering instant of sensor $i\in\mathcal{V}$ in the time interval $[0,t]\cap\mathbb{N}$, thus we have	$\tau_{k}^i\geq k$. Then 
		for $\tau_{l}^i\geq \tau_{s}^i\geq T_1$,   it follows from \eqref{g_seq}  that
		\begin{align}
		g(\tau_{l}^i)&=g(\tau_{s}^i)+\sum_{k=s}^{l-1}(g(\tau_{k+1}^i)-g(\tau_{k}^i))\nonumber\\
		&\geq g(\tau_{s}^i)+\tilde M\sum_{k=s}^{l-1} (\tau_{k}^i)^{2\mu-1}\nonumber\\
		&=g(\tau_{s}^i)+\tilde M\sum_{k=1}^{l-1} k^{2\mu-1}-\tilde M\sum_{k=1}^{s-1} k^{2\mu-1}\nonumber\\
		&\geq g(\tau_{s}^i)+\tilde M_2l^{2\mu}-\tilde M\sum_{k=1}^{s-1} k^{2\mu-1},\nonumber\\
		&\geq \tilde M_3l^{2\mu}\label{ineq_g}
		\end{align}
		where $\tilde M_2,\tilde M_3>0$, and  the last inequality  is obtained by using the sum formula in  \cite[page 1]{gradshteyn2014table}.

		Next, we consider $\bar s/t^{1-\gamma}$. According to \eqref{g_seq}, \eqref{ineq_g}, and $\tau_{l}^i\geq\tau_{s}^i\geq T_1$ for $l\geq s$,  it follows that
		\begin{align*}
		t\geq t-\tau_s^i\geq \sum_{l=s}^{s+\bar s-1}L_{l}^i\geq \sum_{l=s}^{s+\bar s-1} g(\tau_{l}^i)\geq  \tilde M_3 \sum_{l=s}^{s+\bar s-1}l^{2\mu}\geq \tilde M_4(s+\bar s-1)^{2\mu+1},
		\end{align*}
		where $\tilde M_4>0$  and  the last inequality  is obtained by using the sum formula in  \cite[page 1]{gradshteyn2014table}.

		Due to $\gamma\in[0,\frac{2\mu}{2\mu+1})$, we use the  L'Hospital's Rule to obtain
		\begin{align}\label{pf_2}
		\frac{\bar s}{t^{1-\gamma}}\leq  \frac{\bar s}{\left(\tilde M_4(s+\bar s-1)^{2\mu+1}\right)^{1-\gamma}}\overset{\bar s\rightarrow\infty}{\longrightarrow} 0.
		\end{align}
		
	}
	
	From \eqref{pf_0}, \eqref{pf_1}, and \eqref{pf_2}, it follows that
	$\lambda_c(t) t^{\gamma}\rightarrow 0, t\rightarrow\infty,$
	which leads to the conclusion of this theorem.
	
	\section{Numerical Simulations}\label{sec:simulation}
	In this section, we provide two examples to illustrate the effectiveness of Algorithm \ref{alg:A}  and the developed theoretical results.
	
	\subsection{Example 1}
	In this example, we consider the sensor  network in FIG. \ref{fig:diag} with $N=7$ sensors.
	Suppose the  parameter vector to be estimated is  $\theta=[\theta_1,\theta_2]^{\sf T},$ where $\theta_1=-1$ and $\theta_2=2.$ The sensor measurement matrices and the initial  estimates  are in the following
	\begin{align*}
	&H_1=[1,0]^{\sf T},\quad H_2=[0,1]^{\sf T},\quad H_7=H_5=H_3=H_1,\quad H_6=H_4=H_2,\\
	&x_{1}(0)=[0,-100]^{\sf T},\quad x_{7}(0)=x_{5}(0)=x_{3}(0)=x_{1}(0),\\
	&x_{2}(0)=[100,0]^{\sf T},\quad  x_{6}(0)=x_{4}(0)=x_{2}(0).
	\end{align*}
	Suppose the time interval is from $t=0$ to   $t=1000$.  
	The   noise of each sensor follows an Gaussian process with   mean zero and standard deviation 0.1.
	The noise processes are  independent in time and space. 
	
	\begin{figure*}[t]
		\centering
		\subfigure[\label{fig:times2}Element-wise asymptotic convergence.]{\includegraphics[width=0.49\linewidth]{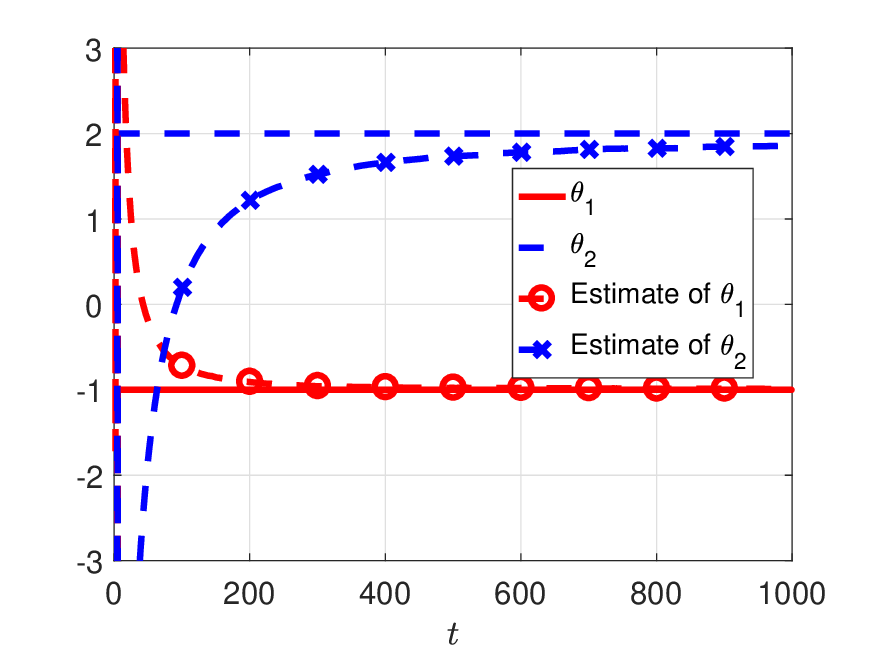}}
		\subfigure[\label{fig:times} Communication triggering instants of each sensor.]{\includegraphics[width=0.49\linewidth]{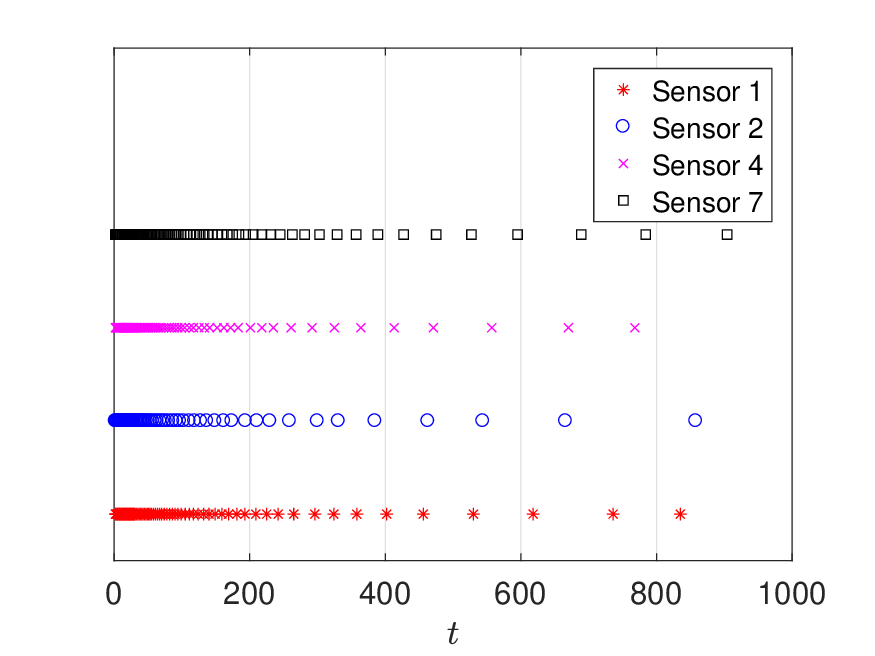}}
		\subfigure[\label{fig:rate}Dynamics of  communication rate $\lambda_c(t)$.]{\includegraphics[width=0.49\linewidth]{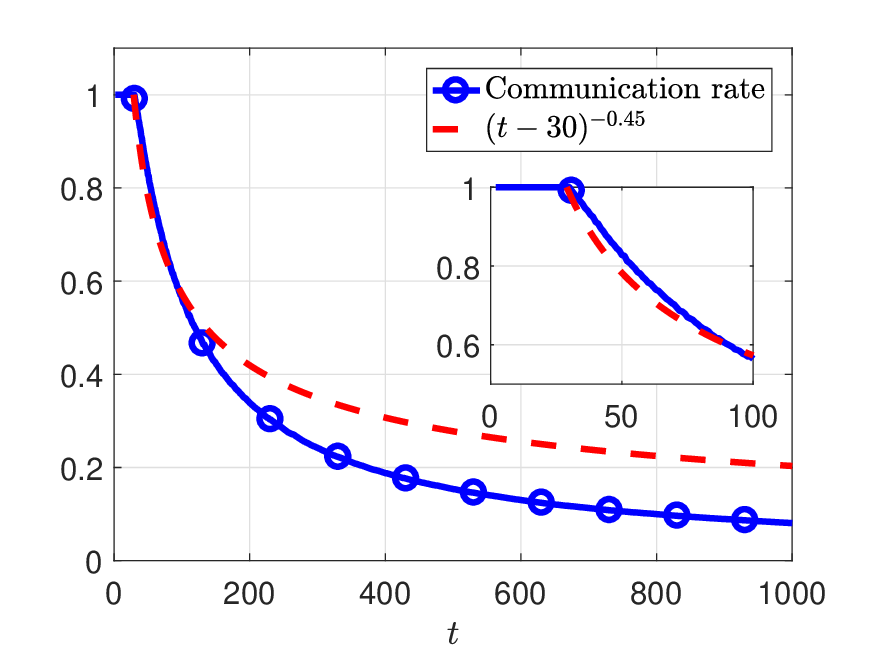}}
		\subfigure[\label{fig:error_ms}Mean-square convergence.]{\includegraphics[width=0.49\linewidth]{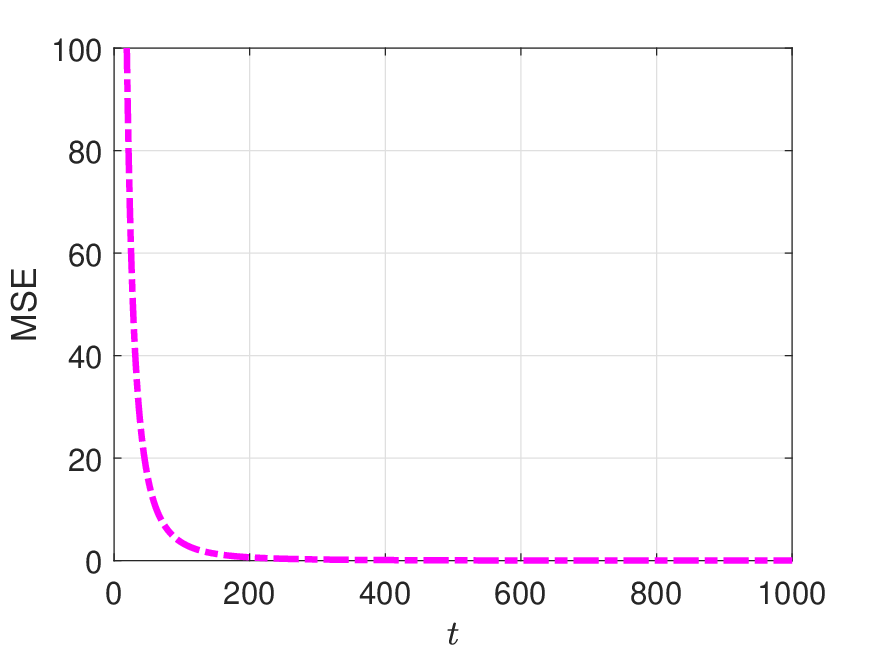}}
		\caption{\label{fig_instants} Simulation results of Algorithm \ref{alg:A}.}
	\end{figure*}

	\begin{figure*}
		\centering
		\subfigure[\label{fig:com_rate_threshold}Dynamics of  communication rate $\lambda_c(t)$.]{\includegraphics[width=0.49\linewidth]{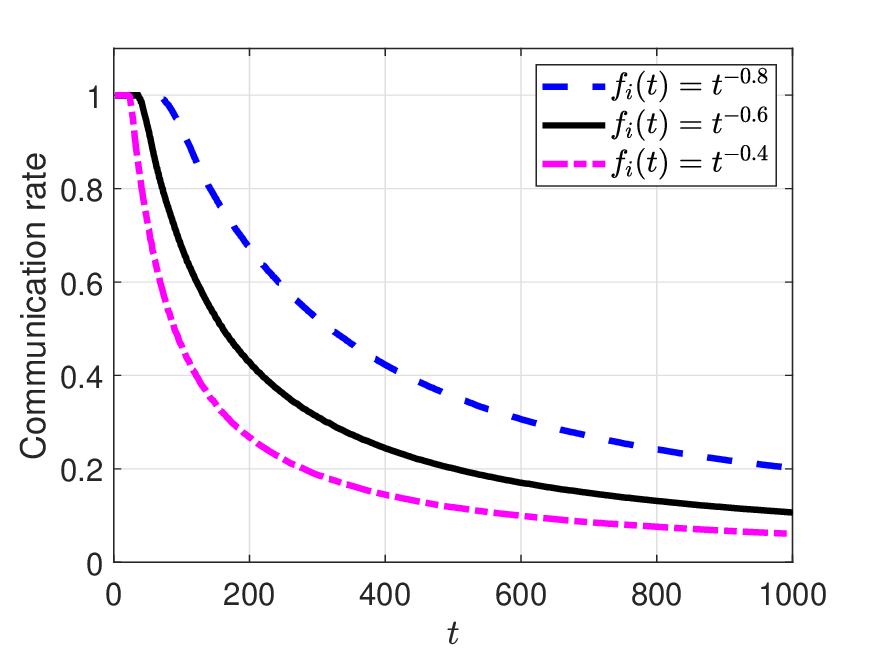}}
		\subfigure[\label{fig:error_mse_threshold}Mean-square convergence.]{\includegraphics[width=0.49\linewidth]{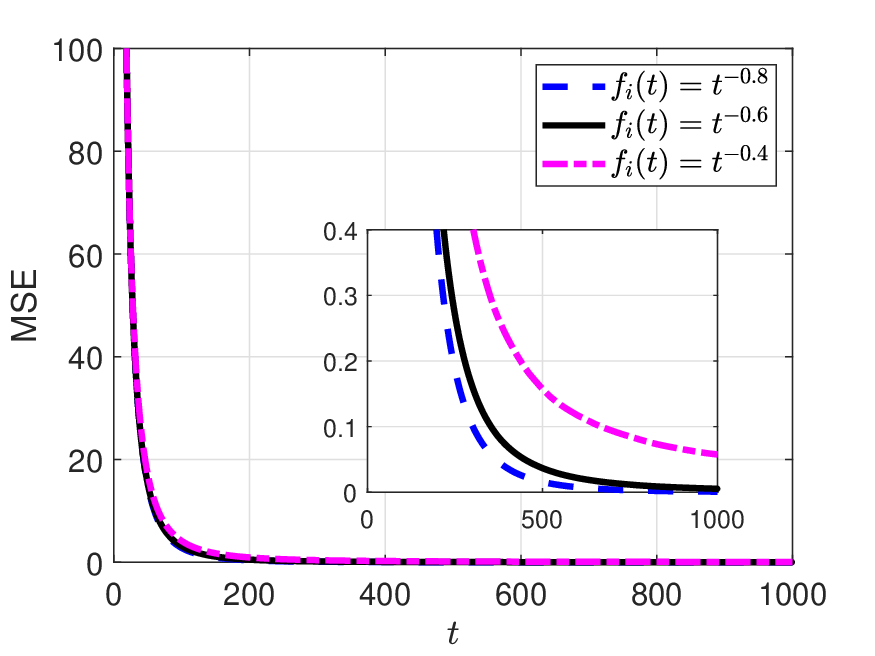}}
		\caption{\label{fig_thresholds} Communication rate and MSE of Algorithm \ref{alg:A} under three triggering thresholds.}
	\end{figure*}
	

	Under the above setting, we conduct a Monte Carlo experiment 
	with $M_0=100$ runs for Algorithm \ref{alg:A} with   $\alpha_i(t)=t^{-0.7}$, and $f_{i}(t)=t^{-0.5}$  for $i=1,2,\dots,7$. To evaluate the mean-square error,  we define
	\begin{align}\label{MSE_notation}
	\text{MSE}(t)&=\frac{1}{NM_0}\sum_{j=1}^{M_0}\sum_{i=1}^{N}\norm{x_i^j(t)-\theta}^2,
	\end{align}
	where $x_i^j(t)$ is the estimate of $\theta$ by sensor $i$ at time $t$ in the $j$-th run. The simulation results are provided in  FIG.~\ref{fig_instants}.  From FIG.~\ref{fig:times2}, the average estimate of all sensors is asymptotically convergent to the true parameter vector. The event-triggered communication triggering instants of sensors 1, 2, 4, and 7 are provided in FIG.~\ref{fig:times}, where we can see less and less communications  occur as time goes on. The communication rate is  0.08 in the interval $t=[0,1000]\cap \mathbb{N}$.  The dynamics of the communication rate in the given interval is provided in FIG.~\ref{fig:rate}, where  the communication rate remains to be 1 from $t=1$ to $t=30$ meaning that sensors persistently communicate with each other.  This is because at the initial time, much informative data can be used to update the sensor estimates.  As time goes on,  the communication rate is tending to zero, since sensors only transmits informative data  which is becoming less. Moreover, in order to illustrate the convergence rate of the communication rate, we provide the dynamics of $(t-30)^{-0.45}$ for $t>30$. From FIG.~\ref{fig:rate}, the communication rate  asymptotically decays  to zero faster than $(t-30)^{-0.45}$, corresponding to the results in Theorem \ref{thm_rate}.
	The mean-square convergence of the algorithm is illustrated in FIG.~\ref{fig:error_ms}, corresponding to  Theorem~\ref{thm_mean_square_convergence}.
	{Moreover, by choosing three triggering thresholds, i.e., $f_i(t)=t^{-0.8},t^{-0.6},t^{-0.4}$, we provide Fig.~\ref{fig_thresholds} for  illustrating the influence of triggering threshold to communication rate and MSE. From this figure, for the threshold with a faster decreasing speed, the corresponding communication rate decays more slowly, while the MSE decays more quickly. This indicates that the threshold   leads to a tradeoff between communication rate and MSE, corresponding to Remark~\ref{rem_tradeoff}. }

	\subsection{Example 2}
	In this example, we compare the proposed algorithm  with three existing algorithms over a   sensor network whose size is larger than that in example 1.
	Consider an undirected connected sensor network with 200 nodes for estimating a  target position  $\theta=[1,2,5]$. The sensor network topology is generated as a random geometric graph
	and provided in FIG. \ref{fig:network2}, where two types of sensors are deployed with the same number (i.e., 100) and denoted by black circle and red  diamond.  Assume  for any $i=1,2,\dots,200$, the weight  $a_{i,j}=1$,  for $ j\in \mathcal{N}_i$.  The measurement matrices of black-circle and red-diamond sensors are assumed to be $H_A=[0,0,1]$ and  $H_B=[1,0,0;0,1,0]$, respectively. Suppose the sensor noise follows a   standard Gaussian process and is independent in both space and time.
	
	\begin{figure*}[t]
		\centering
		\subfigure[\label{fig:network2}Sensor network.]{\includegraphics[width=0.49\linewidth]{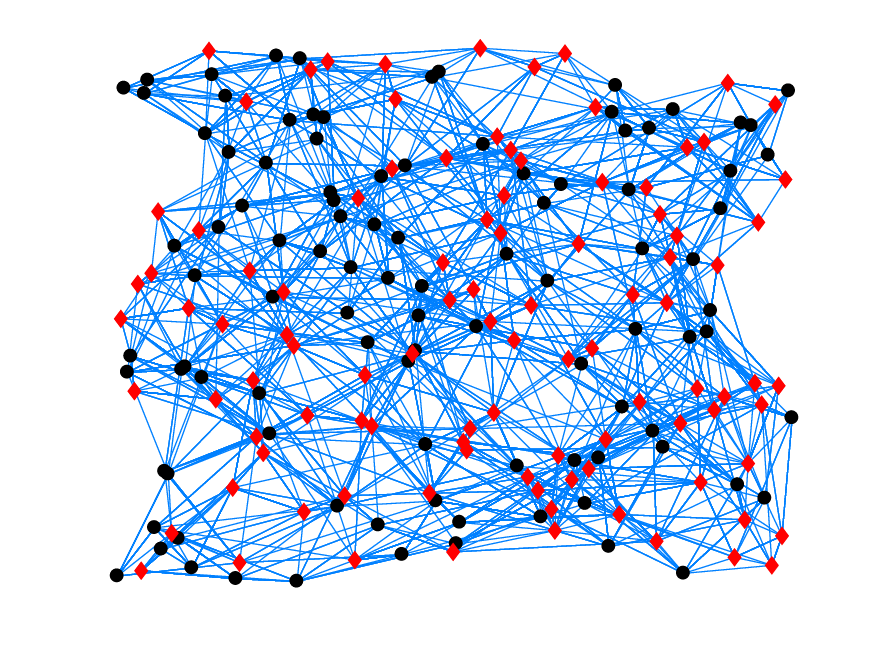}}
		\subfigure[\label{fig:compare} MSE performance of four algorithms.]{\includegraphics[width=0.49\linewidth]{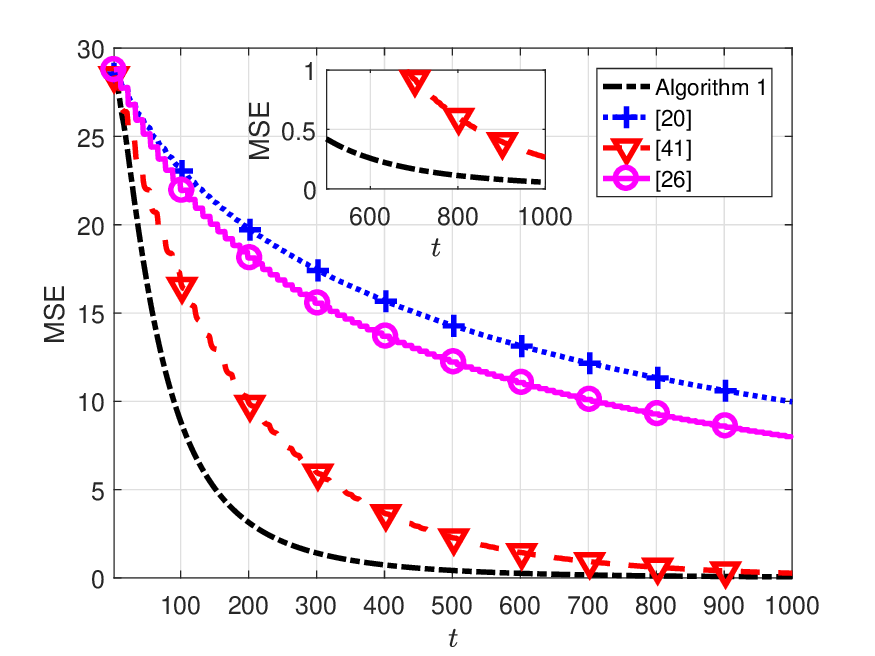}}
		\caption{\label{fig_example2} Comparison of four distributed algorithms over a sensor network with 200 nodes.}
	\end{figure*}
	
	We compare Algorithm \ref{alg:A} with three typical  distributed estimation algorithms in the literature, including the  generalized linear unconstrained algorithm from \cite{kar2011convergence},  the diffusion least-mean squares algorithm from \cite{lopes2008diffusion}, and  the distributed parameter algorithm from \cite{zhang2012distributed}. The parameters in our algorithm are $\alpha_i(t)=(t+100)^{-0.7}$, and $f_{i}(t)=t^{-0.5}$  for $i=1,2,\dots,200$.
	The parameters in the algorithm of \cite{kar2011convergence} are $\alpha(t)=10/(t+1)^{0.7}$, $\beta(t)=0.1/(t+1)^{0.7}$, and $K=(\sum_{i=1}^{200}H_i^TH_i)^{-1}$, where $H_i$ is the measurement matrix of sensor $i$. The parameters in the algorithm of \cite{zhang2012distributed} are $b(t)=(t+100)^{-0.7}$ and $a_{i,j}=1$, for $ i,j=1,2,\dots,200$.  The parameters in the algorithm of \cite{lopes2008diffusion} are $\mu(t)=(t+100)^{-0.7}$ and $c_{i,j}=1/|\mathcal{N}_i+1|$ for $j\in\mathcal{N}_i\bigcup \{i\}$, $i=1,2,\dots,200$. The initial parameter estimate is  zero for each algorithm.
	We compare our event-triggered algorithm with the above three time-triggered algorithms   under the same communication rate $\lambda_{c}=0.09$. It  means that for the three time-triggered algorithms, each sensor   receives the  messages from   neighbors for every $11\approx1/\lambda_{c}$ steps, before that they use the latest  messages from   neighbors to run the algorithms.
	Under this setting, we conduct a Monte Carlo experiment for running the four algorithms simultaneously with $100$ runs. With the MSE notation in \eqref{MSE_notation}, the performance of these algorithms are illustrated in FIG. \ref{fig:compare}. It shows that  the outputs of all algorithms are convergent to the true parameter vector, and   our event-triggered distributed algorithm outweighs other three algorithms in convergence speed under the same communication rate constraint.

	\section{Conclusion}\label{sec:conclusion}
	In this paper, a distributed parameter estimation problem over a  sensor network with event-triggered communications was studied.
	First,     a fully distributed estimation algorithm was proposed based on an event-triggered communication scheme  which determines when a sensor should  share the parameter  estimates with neighboring sensors. 
	Then, under  mild conditions, some main estimation properties of the algorithm including   mean-square  and almost-sure convergence  were analyzed, respectively. The convergence rates were also estimated.  Under some extra conditions, it was proved that the communication rate of the whole network using the proposed algorithm decays to zero almost surely as time goes to infinity, which indicates that a tremendous amount of redundant communications are avoided. 
	It was also shown that   adjusting the decay speed of the triggering threshold can lead to a tradeoff between the convergence rate of the estimation error and the decay speed of the communication rate.
	Future work can be done by considering more general  models of systems and networks, such as nonlinear measurement models,  {time-varying, unbalanced, or stochastic  graphs}.

\bibliography{All_references}
\bibliographystyle{ieeetr}

%
%

%
%
%

\end{document}